\definecolor{myellow}{RGB}{255,230,128}
\definecolor{gray20}{RGB}{204,204,204}
\definecolor{mygray}{RGB}{204,204,204}
\definecolor{mygreen}{RGB}{138,203,95}
\definecolor{myblue}{RGB}{77,151,214}
\newif\iflistings
\newcommand{\julia}[1]{\mint{julia}|#1|}
\definecolor{bg}{rgb}{0.93,0.93,0.93}
\newtheorem{theorem}{Theorem}[section]
\newtheorem{proposition}[theorem]{Proposition}
\newtheorem{assumption}[theorem]{Assumption}
\newtheorem{remark}[theorem]{Remark}
\theoremstyle{definition}
\acrodef{aa}[AA]{axis-aligned}
\acrodef{sa}[SA]{side-aligned}
\acrodef{fsi}[FSI]{fluid-structure interaction}
\acrodef{amr}[AMR]{adaptive mesh refinement}
\acrodef{cad}[CAD]{computer-aided design}
\acrodef{brep}[BREP]{boundary representation}
\acrodef{cae}[CAE]{computer-aided engineering}
\acrodef{pde}[PDE]{partial differential equation}
\acrodef{cae}[CAE]{computer-aided engineering}
\acrodef{stl}[STL]{Stereolithography}
\acrodef{fe}[FE]{finite element}
\acrodef{vof}[VOF]{volume of fluid}
\acrodef{fem}[FEM]{finite element method}
\acrodef{xfem}[XFEM]{extended finite element method}
\acrodef{dof}[DOF]{degree of freedom}
\acrodef{agfe}[agFE]{aggregated finite element}
\acrodef{agfem}[AgFEM]{aggregated finite element method}
\acrodef{cg}[CG]{continuous Galerkin}
\acrodef{dg}[DG]{discontinuous Galerkin}
\acrodef{amg}[AMG]{algebraic multi-grid}
\acrodef{iga}[IGA]{isogeometric analysis}
\acrodef{obb}[OBB]{oriented bounding box}
\acrodef{nurbs}[NURBS]{non-uniform rational basis splines}
\acrodef{bspline}[B-spline]{basis spline}
\acrodef{csg}[CSG]{constructive solid geometry}
\acrodef{aabb}[AABB]{axis-aligned bounding box}
\acrodef{obb}[OBB]{oriented bounding box}
\acrodef{kdop}[k-DOP]{discrete orientation polytope}
\acrodef{step}[STEP]{standard for the exchange of product model data}
\acrodef{occt}[OCCT]{Open CASCADE Technology}
\newcommand{\fig}[1]{Fig.~\ref{#1}}
\newcommand{\sect}[1]{Sec.~\ref{#1}}
\newcommand{\alg}[1]{Alg.~\ref{#1}}
\newcommand{\alglin}[1]{line~\ref{#1}}
\newcommand{\thm}[1]{Th.~\ref{#1}}
\newcommand{\prop}[1]{Prop.~\ref{#1}}
\newcommand{\rmk}[1]{Rmk.~\ref{#1}}
\definecolor{shadecolor}{gray}{.92}
\definecolor{incolor}{rgb}{0,0,.7}
\definecolor{outcolor}{rgb}{.65,0,0}
\definecolor{syntaxcolor}{rgb}{.65,0,0}
 \newcommand{\rev}[1]{{#1}}
\newif\ifsvgs
\newcommand{\includefig}[3][\tiny]{%
    \def\svgwidth{#2}
    #1
    
  \ifsvgs
    \updatepdffromsvg{#3}
  \fi
  \input{#3.pdf_tex}

}
\newcommand{\updatepdffromsvg}[1]{
  \executeiffilenewer{#1.svg}{#1.pdf}%
  {inkscape -z -C --file=#1.svg %
  --export-pdf=#1.pdf --export-latex && %
  sed "s|$(basename #1).pdf|$(echo #1).pdf|g" -i #1.pdf_tex }%
}
\newcommand{\executeiffilenewer}[3]{%
  \IfFileExists{#2}{}{\immediate\write18{#3}}
  \ifnum\pdfstrcmp{\pdffilemoddate{#1}}%
  {\pdffilemoddate{#2}}>0%
  {\immediate\write18{#3}}\fi%
}
\begin{document}

\title{High order unfitted finite element discretizations for explicit boundary representations}

\author[P. Martorell]{Pere A. Martorell$^{1,*}$}

\author[S. Badia]{Santiago Badia$^{1,2}$}

\thanks{\null\\
$^{1}$ CIMNE, Centre Internacional de M\`etodes Num\`erics a l'Enginyeria, Campus Nord, 08034 Barcelona, Spain.\\
$^{2}$ School of Mathematics, Monash University, Clayton, Victoria, 3800, Australia.\\
$^*$ Corresponding author.\\
E-mails: 
{\tt pamartorell@cimne.upc.edu} (PM)
{\tt santiago.badia@monash.edu} (SB) 
}

\begin{abstract}
  When modeling scientific and industrial problems, geometries are typically modeled by explicit boundary representations obtained from \acl{cad} software. Unfitted (also known as embedded or immersed) \acl{fe} methods offer a significant advantage in dealing with complex geometries, eliminating the need for generating unstructured body-fitted meshes. However, current unfitted \aclp{fe} on nonlinear geometries are restricted to implicit (possibly high-order) level set geometries. In this work, we introduce a novel automatic computational pipeline to approximate solutions of \aclp{pde} on domains defined by explicit nonlinear \aclp{brep}. For the geometrical discretization, we propose a novel algorithm to generate quadratures for the bulk and surface integration on nonlinear polytopes required to compute all the terms in unfitted \acl{fe} methods. The algorithm relies on a nonlinear triangulation of the boundary, a kd-tree refinement of the surface cells that simplify the nonlinear intersections of surface and background cells to simple cases that are diffeomorphically equivalent to linear intersections, robust polynomial root-finding algorithms and surface parameterization techniques. We prove the correctness of the proposed algorithm. We have successfully applied this algorithm to simulate \aclp{pde} with unfitted \aclp{fe} on nonlinear domains described by \acl{cad} models, demonstrating the robustness of the geometric algorithm and showing high-order accuracy of the overall method.
\end{abstract}

\maketitle

\noindent{{\bf {Keywords}}: Unfitted finite elements, embedded finite elements, \acl{cad}, \acl{nurbs}, Bernstein-B\'ezier basis, surface-surface-intersection,  computational geometry, immersed boundaries, nonlinear boundary representations.}

\newcommand{\mapF}{\pmb{{\phi}}_F}
\newcommand{\mapE}{\pmb{{\phi}}_E}

\section{Introduction}\label{sec:intro}

A wide range of industrial and scientific applications requires solving \acp{pde} on complex domains. These domains are commonly enclosed by \ac{brep} models and generated in \ac{cad} software. \ac{cad} models are described by \ac{nurbs} and boolean operations like \ac{csg}. Due to the high-order nature of \ac{nurbs}, it becomes essential to employ specialized tools capable of efficiently handling numerical simulations on these complex domains.

Despite the enduring popularity of body-fitted meshes in traditional simulation pipelines, they exhibit significant limitations. The generation of body-fitted unstructured mesh generation relies on manual intervention, resulting in significant bottlenecks in the process \cite{hughes_isogeometric_2005}, especially when dealing with high-order representations. Additionally, simulating \acp{pde} on body-fitted meshes with distributed memory machines necessitates mesh partitioning strategies based on graph partition techniques. These algorithms are inherently sequential and demand extensive memory resources \cite{Karypis_1997}. Consequently, the mesh partitioning process represents a major bottleneck in the simulation pipeline and cannot be automated in general.

Unfitted \acp{fem}, also known as embedded or immersed \acp{fem}, offers a solution to the mesh generation bottlenecks by eliminating the need for body-fitted meshes. Unfitted methods rely on a simple background mesh, such as a uniform or adaptive Cartesian mesh. Traditionally, unfitted methods utilize implicit descriptions (level sets) of geometries \rev{or handle explicit representations using low order quadratures (e.g., using composite quadratures on adapted meshes \cite{Dster2008}) that do not preserve sharp features. The algorithms in \cite{Badia_2022-stl} compute quadratures for oriented linear triangulations as boundary representations (a.k.a. \ac{stl}) without perturbing the geometry, preserving sharp features up to machine precision.} 
 
On the other side, several approaches have been introduced to combine unfitted methods with high-order implicit representations of geometries. \rev{One notable example is the methodology in \cite{Fries_2015}, which has been gradually improved in subsequent works,} including \cite{Legrain_2018, Fries2017, Fries_2018, Stanford_2019b}.
These advancements have even extended to handling \ac{brep} models in \cite{Stanford_2019a}.  
\rev{Other important contributions are \cite{Lehrenfeld2016,Saye2015} and references therein.} However, it is worth noting that these methods still rely on level set representations.

In embedded \acp{fem} the small cut cell problem is a significant limitation extensively discussed in the literature~\cite{DePrenter2023}. This problem arises when the intersection between physical and background domain cells becomes arbitrarily small, leading to ill-conditioning issues in the numerical solution.
Although various techniques have been proposed to address this problem, only a few have demonstrated robustness and optimal convergence.
One approach is the ghost penalty method \cite{burman2010ghost}, which is utilized in the CutFEM packages \cite{burman_cutfem_2015}. Alternatively, cell agglomeration techniques present a viable option to ensure robustness concerning the cut cell location, \rev{which were  originally applied to \ac{dg} methods~\cite{Bastian2011,Johansson2012}.}
Extensions to the $\mathcal{C}^0$ Lagrangian \ac{fe} have been introduced in \cite{Badia2018c}, while mixed methods have been explored in \cite{Badia2018a}, where the \ac{agfem} term was coined.
\ac{agfem} exhibits good numerical qualities, including stability, bounds on condition numbers, optimal convergence, and continuity concerning data. Distributed implementations have been exploited in \cite{Verdugo2019, Badia2020Jun}, also AgFEM has been extended to $h$-adaptive meshes \cite{Neiva2021} and higher-order \ac{fe} with modal $\mathcal C^0$ basis in \cite{Badia_2022-highorder}. In \cite{Badia_2022-ghost}, a novel technique combining ghost penalty methods with \ac{agfem} was proposed, offering reduced sensitivity to stabilization parameters. %In \cite{Badia_2023} the \ac{agfem} has been extended to solve transient problems in moving boundaries through space-time discretizations.

The development of \ac{iga} over the past two decades has been driven by the goal of improving the interaction between \ac{cad} and \ac{cae} \cite{hughes_isogeometric_2005}. While \ac{iga} techniques are suitable for \acp{pde} on boundaries, they cannot readily handle \acp{pde} in the volume of a \ac{cad} representation of the domain. Standard \ac{cad} representations are 2-variate (boundary representations) and they do not provide a parameterization of the volume. To overcome this limitation, some works \cite{Engvall_2016, Engvall_2017,Xia_2017,Xia_2018}  propose constructing volume parametrizations based on Bernstein-B\'ezier basis using the B\'ezier projection techniques described in \cite{Thomas_2015}.  Nevertheless, these approaches still rely on high-order unstructured meshes, thereby inheriting the known limitations associated with them, such as tangling issues, lack of parallelization, and global graph partition bottlenecks.

Similarly to unfitted \ac{fe} methods, immersed \ac{iga} \cite{Wei_2021, Antolin_2022a, Antolin_2022b}  eliminates the need for unstructured meshes by utilizing the intersection of a background mesh.
These methods utilize integration techniques for complex domains, including dimension reduction of integrands \cite{Chin_2020, Gunderman_2021}, i.e.,  integrating over lines and surfaces. The precision of these methods is bounded by the approximation algorithms used on the \emph{trimming curves}, which represent surface-surface intersections \cite{Stanford_2019b, Park_2020} resulting from boolean \ac{csg} on the \ac{cad} models.

One of the primary challenges in the \ac{cad} to \ac{cae} paradigm is the approximation of trimming curves \cite{Patrikalakis2010, Shen_2016, Li_2009}. Trimming curves, in general, cannot be represented in the intersected surface patches. In the literature, there is a wide range of strategies to approximate trimming curves, see \cite{Marussig_2017, Beer_2019} and references therein. These strategies can be categorized into analytical methods, lattice evaluation methods, subdivision methods, marching methods, or a combination thereof. The representation of the approximated trimming curves is also extensively studied. Once the trimming curves are approximated, various techniques can be employed. Untrimming techniques \cite{Xia_2017, Xia_2018, Massarwi_2018, Massarwi_2019, Antolin_2019} are one approach, which involves a conformal reparametrization of the original surface. Another approach is the direct integration onto trimmed surfaces \cite{Scholz_2019, Gunderman_2021}.

In this work, we propose a computational framework that combines unfitted \ac{fem} and implicit CAD representations of the geometry. In order to do this, we propose a novel approach to numerically integrate on unfitted cut cells that are intersected by domains bounded by a high-order \acp{brep}. Our method involves approximating the geometry using a set of B\'ezier patches, utilizing B\'ezier projection methods \cite{Borden_2010}. By leveraging the properties of B\'ezier curves, we can efficiently perform intersections. We can reduce the complexity of the collision interrogations and tangling prediction through the convex hull property. The variational diminishing property of B\'ezier curves enables root isolation for determining the intersection points \cite{Mourrain2004}. We can employ efficient multivariate root-finding techniques \cite{Reuter_2007, Mourrain_2009} for polynomials on nonrational B\'ezier patches. To build an intersection method for nonlinear {boundary representations}, we combine the intersection techniques with partition techniques typically used in level set methods \cite{Fries_2015} and linear polytopal intersection \cite{Badia_2022-stl}. We propose a kd-tree refinement of the surface B\'ezier triangulation that reduces nonlinear intersections against background cells to simple situations that are diffeomorphically equivalent to linear intersections. This allows us to handle the complexity of intersecting high-order geometries efficiently. Furthermore, we can approximate these intersections with a set of B\'ezier patches using least-squares techniques based on \cite{Borges_2002}.

With the intersection method established, we are then able to integrate on the surface of these polytopes to solve \acp{pde} on high-order unfitted \ac{fe} meshes. We employ moment-fitting techniques based on Stokes theorem \cite{Chin_2020, Badia_2022-highorder} to ensure accurate and stable integration. By utilizing these techniques, we can effectively handle the integration process on high-order unfitted \ac{fe} meshes, allowing for the solution of \acp{pde} in domains bounded by complex high-order \acp{brep}.

The outcomes of this work are as follows:
\begin{itemize}
  \item An automatic computational framework that relies on a robust and accurate intersection algorithm for background cells and B\'ezier patches of arbitrary order (briefly described above), and the mathematical analysis of the correctness of the algorithm.
  \item Accuracy and robustness numerical experimentation of the intersection algorithm. The algorithms exhibit optimal convergence rates of the surface and volume integration. These errors are robust concerning the relative position of the background cells and the \ac{brep}.
  \item The numerical experimentation of a high-order unfitted \ac{fe} method for high-order \acp{brep} with analytical benchmarks. The results demonstrate optimal $hp$-convergence of the error norms.
  \item The demonstration of the application of the methods for problems defined in \ac{cad} geometries.
\end{itemize}

The outline of this article is as follows. Firstly, in \sect{sec:agfem}, we introduce the unfitted \ac{fe} methods and their requirements for handling high-order \acp{brep}. Next, in \sect{sec:algorithms}, we provide the proposed geometric algorithms for computing the nonlinear intersections between background cells and oriented high-order \acp{brep}, along with a surface parametrization method for integration purposes. Then, in \sect{sec:experiments}, we present the numerical results obtained from applying the proposed method, including accuracy and robustness of the intersections, benchmark tests for validation of the unfitted \ac{fe} pipeline, and simulations on \ac{cad} geometries. Finally, in \sect{sec:conclusions}, we draw the main conclusions and future work lines.

\section{Unfitted \acl{fe} method}\label{sec:agfem}

\subsection{Unfitted \acl{fe} formulations}

Let us consider an open Lipschitz domain $\Omega \in \mathbb R ^3$ in which we want to approximate a system of \acp{pde}. An oriented high-order surface mesh $\mathcal{B}$ defines the domain boundary $\partial \Omega$ and encloses the domain interior.
The \acp{pde} usually involve Dirichlet boundary conditions on $\Gamma_D $ and Neumann boundary conditions on $\Gamma_N$, where $\partial \Omega \doteq \Gamma_D \cup \Gamma_N$ and $\Gamma_D \cap \Gamma_N = \emptyset$. These subsets, $\Gamma_D$ and $\Gamma_N$, correspond to geometric discretizations of $\mathcal{B_D}$ and $\mathcal{B_N}$, resp., such that $\mathcal{B} \equiv \mathcal{B_D} \cup \mathcal{B_N}$.

The principal motivation of this work is to enable the utilization of grid-based unfitted numerical schemes that can be automatically generated from the oriented high-order surface mesh $\mathcal B$.
This approach is valuable in industrial and scientific applications.
We can alleviate the geometric constraints associated with body-fitted meshes
by employing embedded discretization techniques.
These techniques utilize a background partition $\mathcal T^\mathrm{bg}$ defined over an arbitrary artificial domain $\Omega^\mathrm{art} \supseteq  \Omega$.
The artificial domain can be a simple bounding box containing $\Omega$, dramatically simplifying the computation of $\mathcal T^\mathrm{bg}$ compared to a body-fitted partition of $\Omega$.
In this work, we adopt a Cartesian mesh $\mathcal T^\mathrm{bg}$ for the sake of simplicity, although the proposed approach can readily be used on other types of background meshes, such as tetrahedral structured meshes obtained through simplex decomposition or adaptive mesh refinement (AMR) techniques.

The abstract exposition of unfitted formulations considered in this work is general and encompasses various unfitted \ac{fe} techniques from the literature. These techniques include the \ac{xfem} \cite{belytschko_arbitrary_2001}, designed for handling unfitted interface problems. In order to have robustness with respect to small cut cells, the cutFEM method \cite{burman_cutfem_2015} and the finite cell method \cite{Schillinger_2016} add \rev{additional terms to enhance stability}. The AgFEM \cite{Badia2018c} provides robustness via a discrete extension operator from interior to cut cells. Since DG methods can work on polytopal meshes, combined with cell aggregation~\cite{muller2017high}, they are also robust unfitted \ac{fe} techniques.

The definition of \ac{fe} spaces on unfitted meshes
requires a cell classification.
The background cells $K\in\mathcal{T}^\mathrm{bg}$ with a null intersection with $\Omega$ are classified as exterior cells and are denoted as $\mathcal{T}^\mathrm{out}$. These exterior cells, which have no contribution to functional discretization and can be discarded. The active mesh, denoted as $\mathcal{T} = \mathcal{T}^\mathrm{bg} \setminus \mathcal{T}^\mathrm{out}$, represents the relevant mesh for the problem (\fig{fig:unfitted-fe}). The unfitted \ac{fe} techniques stated above utilize \ac{fe} spaces defined on $\mathcal{T}$ to construct the finite-dimensional space $V$. This space approximates the solution and tests the weak form of \acp{pde}.
An abstract unfitted \ac{fe} problem reads as follows: find $u\in V$ such that
\begin{equation}
  a(u,v) = l(v), \quad \forall v \in V,
\end{equation}
where
\begin{equation}
  a(u,v) = \int _\Omega L_\Omega(u,v) d\Omega + \int _{\Gamma _D } L_D (u,v)d\Gamma + \int _{\mathcal F} L_\mathrm{sk} (u,v)d\Gamma,
\end{equation}
and
\begin{equation}
  l(v) = \int _\Omega F_\Omega(v) d\Omega + \int _{\Gamma _N } F_N (v)d\Gamma + \int _{\Gamma _D } F_D (v)d\Gamma.
\end{equation}

The bulk terms $L_\Omega$ and $F_\Omega$ consist of the weak form of the differential operator, the source term, and possibly other numerical stabilization terms. On $\Gamma_D$, the operators $L_D$ and $F_D$ represent the enforcement of the Dirichlet boundary conditions,
often implemented using Nitsche's method in unfitted formulations.
The term $F_N$ on $\Gamma_N$ represents the Neumann boundary conditions of the given problem. The skeleton of the active mesh $\mathcal F$ corresponds to the interior faces of $\mathcal{T}$, while the term $L_\mathrm{sk}$ collects additional penalty terms, such as weak continuity enforcement in \ac{dg} methods or ghost penalty stabilization techniques.

\begin{figure}[http]
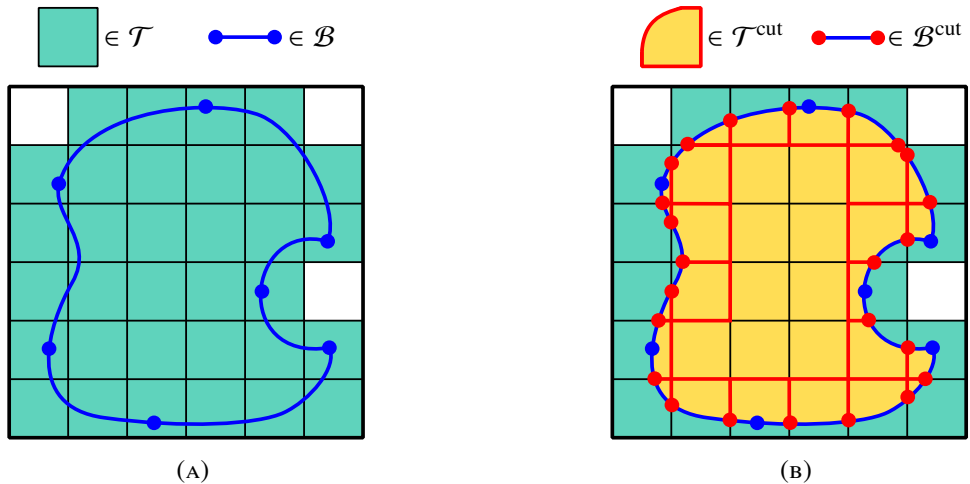

  \centering
  \begin{subfigure}[b]{0.49\textwidth}
    \centering
    \includefig[\normalsize]{0.6\textwidth}{unfitted_fe_a}
    \caption{}
  \end{subfigure}
  \begin{subfigure}[b]{0.49\textwidth}
    \centering
    \includefig[\normalsize]{0.6\textwidth}{unfitted_fe_b}
    \caption{}
  \end{subfigure}
  \caption[Example of the embedded nonlinear domain in 2D.]{Example of the embedded nonlinear domain in 2D. Figure (a) presents a nonlinear oriented skin mesh $\mathcal B$ embedded in an active mesh $\mathcal T$. The intersections, computed with the techniques proposed in this work, result in the two-level partitions $\mathcal T^\mathrm{cut}$ and $\mathcal B^\mathrm{cut}$ shown in (b). These partitions are utilized for integrating unfitted formulations.
   It is important to note that the intersections in 2D are points that can be represented exactly. However, the intersections in 3D are trimming curves that must be approximated in general.}
  \label{fig:unfitted-fe}
\end{figure}
Since \ac{fe} methods are piecewise polynomials,
integrating these terms requires a cell-wise decomposition for both bulk and surface contributions. However, to accurately respect the geometry and solve the \ac{pde} on the proper domain,
it is necessary to perform these integrals within the interiors of the domains.
In particular, we have
\begin{equation}\label{eq:bulk_integral}
  \int_\Omega (\cdot)d\Omega = \sum _{K\in\mathcal T}  \int _{K\cap \Omega} (\cdot)d\Omega.
\end{equation}

Due to the inherent characteristics of unfitted \ac{fe} methods,
the surface mesh $\mathcal B$ is not the boundary restriction of the background mesh $\mathcal T$, which defines the cell-wise polynomial \ac{fe} functions.
Consequently, to integrate the boundary terms on $\mathcal B$,  we must compute a cell-wise integral as follows:

\begin{equation}\label{eq:boundary_integral}
  \int_{\Gamma_*} (\cdot)d\Gamma = \sum _{K\in\mathcal T} \sum _ {F\in \mathcal B _*}  \int _{F\cap K} (\cdot)d\Gamma, \quad * \in \{ D, N\}.
\end{equation}

It is important to note that, in general, we do not need to restrict the integrals on the skeleton terms $L_\mathrm{sk}$ to the domain interior.
\rev{In most formulations, e.g., ghost penalty and \ac{dg} methods,
one integrates these terms on the entire face skeleton.}
Therefore, in the unfitted \ac{fe} methods, we must give particular attention to the geometrical operations involving $K\cap\Omega$ and $K \cap \mathcal B$ (or more specifically, $\mathcal B_ N$ and $\mathcal B_D$ ). However, the methodology described below also handles $F \cap \Omega$ for $F \in \mathcal{F}$, in case it is required.

\subsection{Geometrical ingredients for unfitted \aclp{fe}}\label{sec:geo-fe}

This section aims to describe the geometrical entities that we need to compute the integrals \eqref{eq:bulk_integral}-\eqref{eq:boundary_integral} of unfitted \ac{fe} formulations. These entities are easier to compute than unstructured meshes in body-fitted formulations. Our geometrical framework involves intersection algorithms that are cell-wise, and so, embarrassingly parallel. Since the \ac{fe} spaces are defined on background meshes, the cut meshes do not require to be conforming or shape-regular.

The input of our geometrical framework is an oriented mesh of non-rational triangular B\'ezier elements $\mathcal B$ whose interior is the domain $\Omega$.
Each triangle $F \in \mathcal{B}$ is the image of a B\'ezier map $\mapF : \hat{F} \rightarrow F$ of order $q$ acting on a reference triangle $\hat{F} \subset \mathbb{R}^{2}$.

In the following exposition, we use $\hat{\pmb{\xi}}$ to represent the coordinate system of the reference space of $\hat{F}$ and $\pmb{x}$ to represent the coordinate system of the physical space. Given a set of points $\hat{f}$ in the reference space of $\hat F$ we can compute  $f \doteq \mapF(\hat{f})$. Since $\hat F$ is diffeomorphic, we can also define $f$ and compute $\hat f \doteq \mapF^{-1}({f})$. Given a function $\hat{\gamma}(\hat{\pmb{\xi}})$ in the reference space of $\hat{F}$, we can compute $\gamma(\pmb{x}) \doteq \hat{\gamma}\circ \mapF^{-1}(\pmb{x})$. Reversely, we can define $\gamma(\pmb{x})$ in the physical domain and compute its pull-back $\hat \gamma(\pmb{\xi}) \doteq {\gamma}\circ \mapF(\pmb{\xi})$. We will heavily use this notation to transform sets and functions between the reference and physical spaces.

In practical applications, geometries are described through a \ac{cad} model $\mathcal B^\mathrm{CAD}$. Transforming $\mathcal B^\mathrm{CAD}$ into $\mathcal B$ generally involves an approximation process, where we can utilize least-squares methods and third-party libraries, e.g., \texttt{gmsh} \cite{Geuzaine_2009}. Next, we perform intersection algorithms between the surface representation and the background mesh, which consist of two steps as follows.

In the first step, we consider the intersection of the triangular B\'ezier elements that compose the surface $F\in \mathcal B$ with the background cells $K \in \mathcal T$.
According to \eqref{eq:boundary_integral}, the resulting mesh $\mathcal B^\mathrm{cut}$ can be represented as a two-level mesh:
\begin{equation}\label{eq:b-cut}
  \mathcal B^\mathrm{cut} \doteq \bigcup _{K\in\mathcal T} \mathcal B_K ^\mathrm{cut},
  \qquad \hbox{where} \quad
  \mathcal B^\mathrm{cut}_K \doteq \{ F \cap K : F \in \mathcal B\}.
\end{equation}
$\mathcal B^\mathrm{cut}$ is a partition of $\mathcal B$ composed of general \emph{nonlinear polytopes}
(see \fig{fig:sphere_cut}).
\rev{We define a nonlinear polytope as the image of a polytope under a diffeomorphic map.}

\begin{figure}[http]
  \centering
  \begin{subfigure}[b]{0.32\textwidth}
    \includegraphics[width=\textwidth]{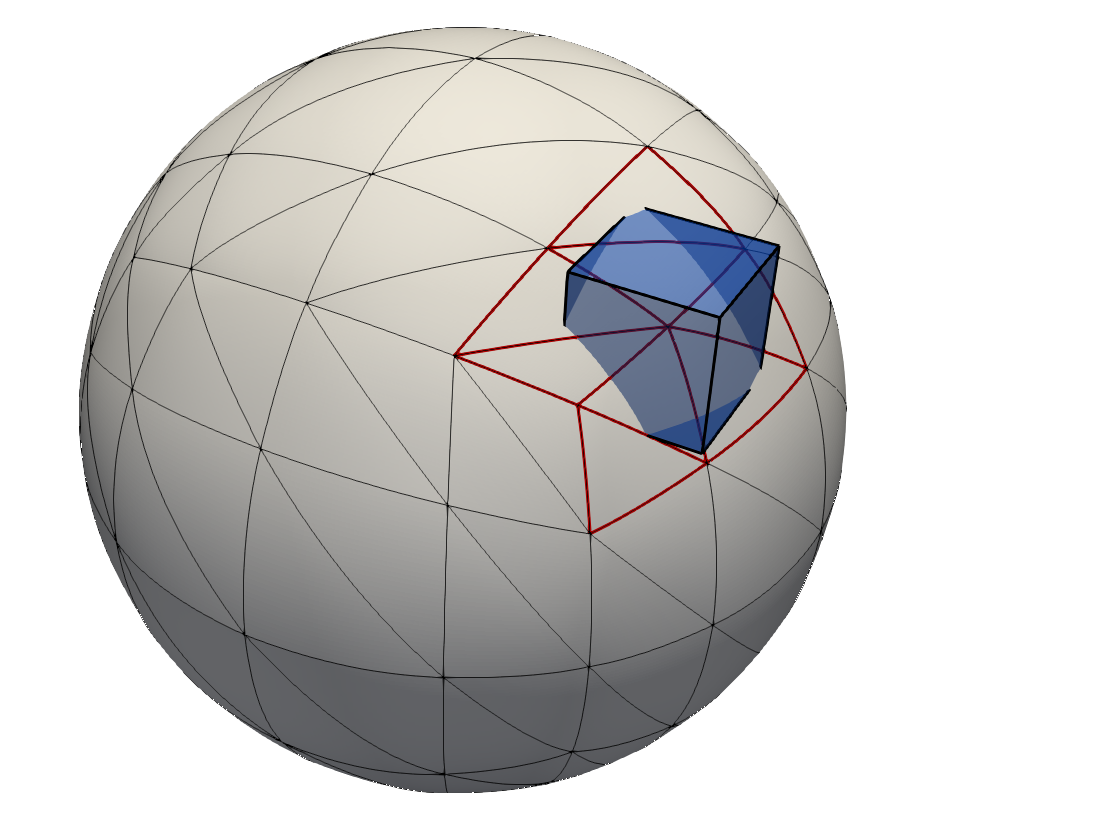}
    \caption{$\mathcal B$, $\mathcal B_K$ and $K$}
  \end{subfigure}
  \hspace{-1cm}
  \begin{subfigure}[b]{0.32\textwidth}
    \centering
    \includegraphics[width=0.8\textwidth]{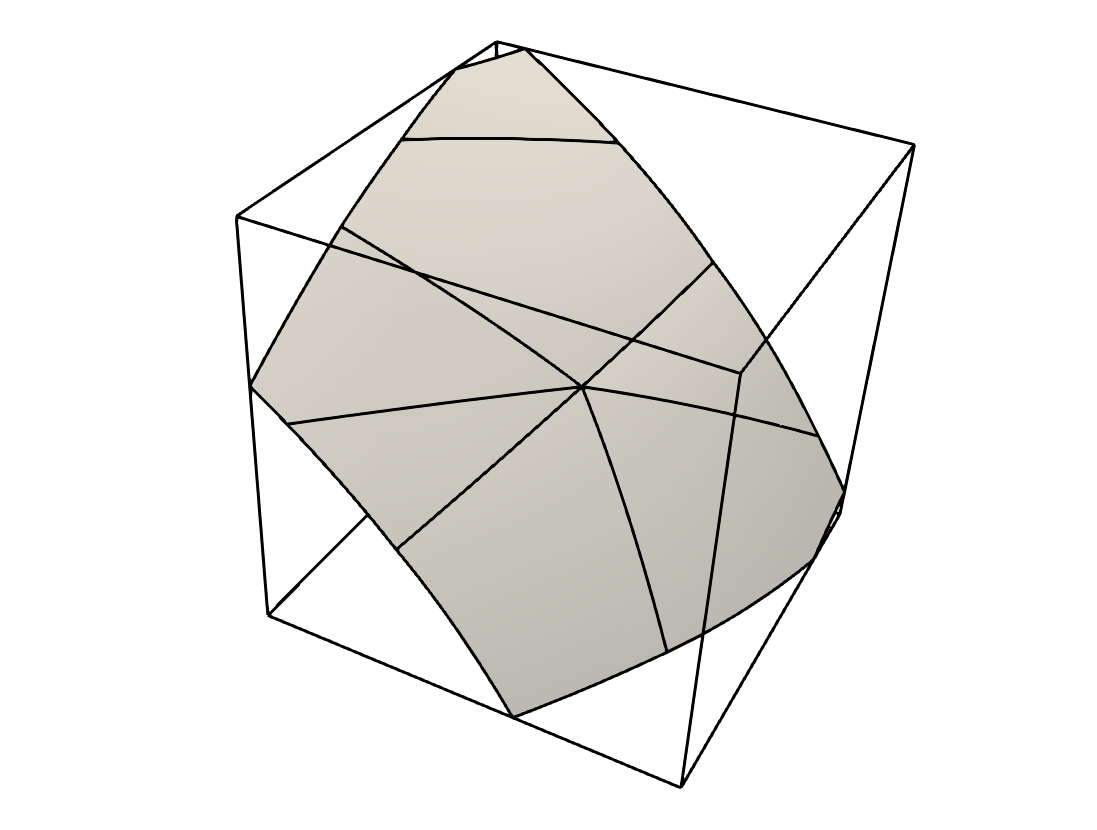}
    \caption{$\mathcal{B}^{\mathrm{cut}}_K$}
  \end{subfigure}
  \hspace{-1cm}
  \begin{subfigure}[b]{0.32\textwidth}
    \includegraphics[width=\textwidth]{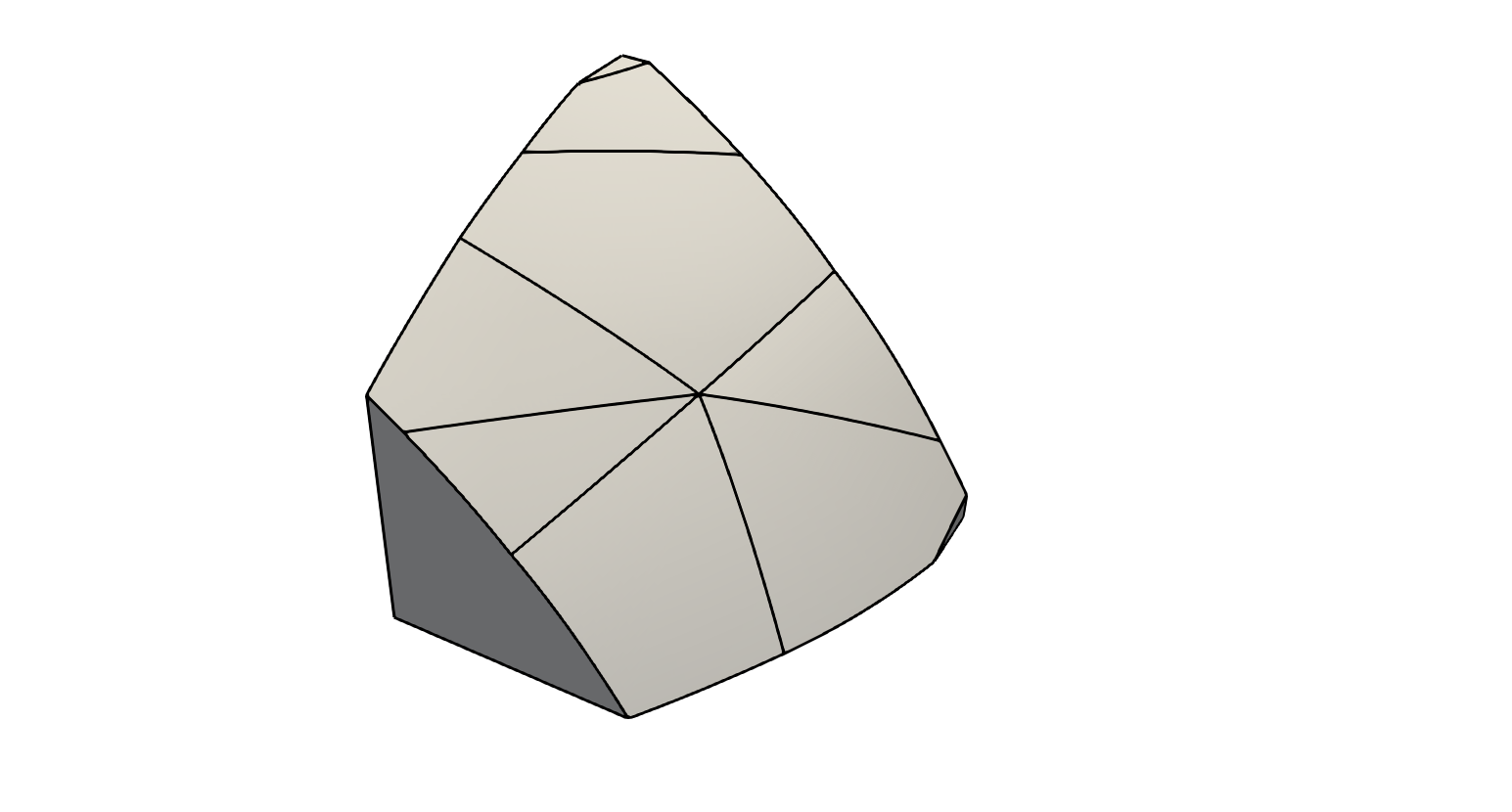}
    \caption{$K^{\mathrm{cut}} $}
  \end{subfigure}
  \caption[Representation of the surface $\mathcal{B}\doteq \partial \Omega$, its intersection $\mathcal{B}^{\mathrm{cut}}_K \doteq \mathcal{B} \cap K$ for a background cell $K \in \mathcal{T}$, and the domain interior of the cell $K^\mathrm{cut} \doteq K \cap \Omega$.]{Representation of the surface $\mathcal{B}\doteq \partial \Omega$ (see (a)), its intersection $\mathcal{B}^{\mathrm{cut}}_K \doteq \mathcal{B} \cap K$ for a background cell $K \in \mathcal{T}$ (see (b)), and the domain interior of the cell $K^\mathrm{cut} \doteq K \cap \Omega$ (see (c)).
  In order to compute $\mathcal{B}^{\mathrm{cut}}_K$, we identify first the subset of cells $\mathcal{B}_K \subseteq \mathcal{B}$ touching $K$ (see cells with red edges in (a)). Next, for each triangle $F \in \mathcal{B}_K$, we compute the intersection of $F \cap K$ at the reference \ac{fe}, i.e., we compute $\mapF^{-1}(F \cap K)$. Finally, we intersect $K$ with the surface portion $\mathcal B^\mathrm{cut}_K$ to obtain $K^\mathrm{cut}$. It is worth to note that $\mathcal{B}^{\mathrm{cut}}_K \subset \partial K^\mathrm{cut}$.
  }
  \label{fig:sphere_cut}
\end{figure}

In the second step, we compute a mesh of general nonlinear polyhedra that represent the domain interior of background cells (see \sect{sec:cell-int}):
\[
\mathcal{T}^{\mathrm{cut}} \doteq \bigcup_{K \in \mathcal{T}} K^{\mathrm{cut}}, \qquad \hbox{where} \quad K^\mathrm{cut}\doteq K \cap \Omega.
\]
The surface cut cells in $\mathcal{B}_{K}^{\mathrm{cut}}$  are nonlinear polygons that can be split into simplices.
A simplex decomposition of the volumetric cells in $\mathcal{T}^{\mathrm{cut}}$ is much more complex and expensive. Instead, we avoid a trivariate representation of $\mathcal{T}^{\mathrm{cut}}$ and instead rely on a bivariate representation of $\partial \mathcal{T}^{\mathrm{cut}}$.
We consider a boundary representation (often abbreviated B-rep or BREP in solid modeling and computer-aided design) of ${K}^{\mathrm{cut}}$ as the interior of an oriented closed surface represented as the collection of connected oriented surface elements, i.e.:
\begin{align}\label{eq:T-brep}
  \partial K^{\mathrm{cut}} = (\partial {K})^{\mathrm{cut}} \cup \mathcal{B}^{\mathrm{cut}}_K, \qquad  (\partial K)^{\mathrm{cut}} \doteq \bigcup_{F \in \Lambda^2(K)}  F \cap \Omega.
\end{align}

\subsection{Integration methods for cut cells}\label{sec:integration-methods}
For the computation of the volume integrals in (\ref{eq:bulk_integral}), since we define cut cells by its boundary representation in (\ref{eq:T-brep}), we compute the integration of polynomials on volume cut cells by transforming them into surface integrals via Stokes theorem \cite{Chin_2020, Badia_2022-highorder}. Let us denote $r$th order polynomial differential $k$-forms in $\mathbb{R}^3$ with $\mathcal{P}_r\Lambda^k(\mathbb{R}^3)$ and $d$ the exterior derivative (see, e.g., \cite{Arnold2018}).  For any $\omega(\pmb{x}) \in \mathcal{P}_r \Lambda^{3}(\mathbb{R}^3)$, one can readily find $\sigma \in \mathcal{P}_{r+1} \Lambda^{2}(\mathbb{R}^3)$ such that $\omega(\pmb{x}) = d \sigma(\pmb{x})$, due to the exactness of the polynomial de Rham complex. Using Stokes theorem, one gets:
\begin{align}\label{eq:div-integral}
  \int_{{K}^{\mathrm{cut}}} \omega
  &= \int_{\partial {K}^{\mathrm{cut}}} \sigma
  = \int_{\mathcal{B}_{K}^{\mathrm{cut}}} \sigma +
  \int_{(\partial{K})^{\mathrm{cut}}} \sigma
  = \sum_{F \in \mathcal{B}_{K}} \int_{F \cap K} \sigma + \sum_{F \in \Lambda^2(K)} \int_{F \cap \Omega} \sigma
  \\
 &
   = \sum_{F \in \mathcal{B}_{K}} \int_{\mapF^{-1}(F \cap K)} \mapF^*(\sigma) + \sum_{F \in  \Lambda^2(K)} \int_{\mapF^{-1}(F \cap \Omega)} \mapF^*(\sigma).
\end{align}
where $\mapF^*(\sigma)$ denotes the pull-back of $\sigma$ by $\mapF$. In order to create a quadrature on ${K}^{\mathrm{cut}}$, we combine this expression with a moment-fitting technique. First, we integrate all the elements in the monomial basis for $\mathcal{P}_{r}(\mathbb{R}^3)$; e.g., given a monomial $3$-form $\omega(\pmb{x}) = x^\alpha y^\beta z^\gamma dx \wedge dy \wedge dz$, we define $\sigma(\pmb{x}) = \frac{1}{\alpha+1} x^{\alpha+1} y^\beta z^\gamma dy \wedge dz$ (which holds $\omega = d \sigma$) and use (\ref{eq:div-integral}). After computing these integrals for the monomial basis at each face $F \in \mathcal{B}_K$ , we can combine them to compute the integrals of Lagrangian polynomials, use the corresponding nodal interpolation, and end up with a quadrature on the cut cell $K^{\mathrm{cut}}$.

To compute the surface integrals in (\ref{eq:boundary_integral}) and the right-hand side of (\ref{eq:div-integral}), we have several options. One approach is to generate a simplex nonlinear mesh of the boundary of the nonlinear polytopes in $\mathcal{T}^{\mathrm{cut}}$, which is feasible on surfaces. We can use a moment-fitting method to compress the resulting quadrature. E.g., for the integral on $\mathcal{B}_{K}^{\mathrm{cut}}$, we consider a nonlinear triangulation $\mathcal{S}^{\mathrm{cut}}_{\hat{F}}$ of $\pmb{\phi}^{-1}_F(F \cap K)$ and compute:
\begin{align}\label{eq:boundary-integral-simplexify}
  \int_{\mathcal{B}_{K}^{\mathrm{cut}}} \sigma = \sum_{F \in \mathcal{B}_{K}} \sum_{S \in \mathcal{S}^{\mathrm{cut}}_{\hat{F}}} \int_{S} \mapF^*(\sigma)
  = \sum_{F \in \mathcal{B}_{K}} \sum_{S \in \mathcal{S}^{\mathrm{cut}}_{\hat{F}}} \int_{\hat{F}} \pmb{\phi}_S^* \circ \mapF^*(\sigma)
\end{align}
where we rely on a map $\pmb{\phi}_S: \hat{F} \rightarrow S$ to transform the integral to the reference triangle $\hat{F}$; $S$ is a nonlinear triangle in $\mathbb{R}^2$ because it can have nonlinear edges. We proceed analogouly for the surface integral on $(\partial K)^{\mathrm{cut}}$; this is a simpler case, since the surface is already contained in a plane in the physical space.

The edges of $S \in \mathcal{S}_{\hat{F}}^{\mathrm{cut}}$ that result from intersections are implicitly defined. Thus, we must consider an approximation $\tilde \phi_S$ of the map $\phi_S$ in (\ref{eq:boundary-integral-simplexify}). This approximation is detailed in the next section. Since $F$ (and as a result $S \in \mathcal{T}^{\mathrm{cut}}$) is a smooth manifold, the polynomial approximation $\tilde \phi_S$ will introduce an error that is reduced by increasing the polynomial order of the approximation or by reducing the diameter of $S$. In turn, the diameter goes to zero with both the background mesh $\mathcal{T}$ and surface mesh $\mathcal{B}$ characteristic sizes.

Another approach is to integrate the pull-back $\mapF^*(\sigma)$ of the polynomial differential form $\sigma$ in (\ref{eq:div-integral}) on the surface to $\hat{F}$ (which are polynomials of higher order due to the factor $\mathrm{det}(\frac{\partial \mapF}{\partial \pmb{\xi}})$ that comes from the pullback) and transform the integral to $\partial \hat{F}$ using Stokes theorem:
\begin{align}\label{eq:boundary-integral-edge}
\int_{F \cap K} \sigma &= \int_{{\mapF}^{-1}(F \cap K)}
\mapF^*(\sigma) 
 = \int_{\mapF^{-1}(F \cap K)} d \varrho \\
& = \int_{\partial \mapF^{-1}(F \cap K)} \varrho = \int_{\mapF^{-1}(\partial F \cap K)} \varrho, \qquad \forall \varrho  \in \mathcal{P}_{r+2q} \Lambda^1(\mathbb{R}^2) \ : \ d \varrho =
\mapF^*(\sigma). 
\end{align}
We note that $\mapF^*(\sigma) 
\in \mathcal{P}_{r+2q-1}\Lambda^{2}(\mathbb{R}^2)$ and the existence of $\varrho$ is assured by the exactness of the polynomial de Rham complex. In this case, we can extract the (nonlinear) edges $E$ of $\partial F \cap K$ and define a map $\mapE : \hat{E} \rightarrow E$ from $\mapF$. Next, we can compute the integrals in a reference segment, i.e.,
\begin{align}\label{eq:boundary-integral-edge-2}
  \int_{F \cap K} \sigma & = \sum_{E \in \partial F \cap K}  \int_{\pmb{\phi}_{E}^{-1}(E)} \varrho.
\end{align}

As above, since some edges $E \in \partial F \cap K$ are only implicitly defined, we must consider approximations $\tilde{\pmb{\phi}}_E$ of $\mapE$ in (\ref{eq:boundary-integral-edge}). However, we do not need to compute approximated surface maps or triangulations of $\mapF^{-1}(F \cap K)$. Since we approximate each edge separately, and both $F$ and the faces of $K$ are smooth manifolds, a polynomial approximation provides error bounds that vanish by increasing the polynomial order or by reducing the edge sizes; edge sizes also go to zero with both the background mesh $\mathcal{T}$ and surface mesh $\mathcal{B}$ characteristics sizes.

\section{Intersection algorithm}\label{sec:algorithms}
In this section, we describe an algorithm that takes a background mesh $\mathcal T$ and a high-order oriented surface $\mathcal B$ as inputs and returns $\mathcal{T}^\mathrm{cut}$ and $\mathcal{B}^{\mathrm{cut}}$. This algorithm is robust to the relative position of $\mathcal T$ and $\mathcal B$. Moreover, it provides an accurate description of the intersections. We list below the different steps considered in the definition of the algorithm:

\begin{enumerate}
  \item In \sect{sec:intersection-points}, we describe the nonlinear computations of the singular points utilized in the intersection algorithms. There, we utilize multivariate root-finding techniques \cite{Mourrain_2009}.
  \item In \sect{sec:surface-trimming} and \sect{sec:connection-algorithm}, we propose an accurate intersection algorithm for nonlinear polyhedra. It combines refinement strategies (to simplify the nonlinear intersection to simple cases) and linear clipping methods \cite{Badia_2022-stl}.
  \item \sect{sec:surface-partition} describes a partition method for the intersected surface. This partition is composed of standard polytopes and is ready to be parametrized.
  \item In \sect{sec:surf-approx}, we introduce a parametrization method for intersected nonlinear polyhedra. This parametrization consists of a combination of least-squares methods
  \cite{Borges_2002} and sampling strategies \cite{Fries_2015}. It returns a set of nonrational B\'ezier patches.
  \item In \sect{sec:cell-int}, we build the polyhedral representation of the intersected cells. In addition, we provide the tool to parametrize the resulting surface.
\end{enumerate}
  We describe a global algorithm that combines the previous algorithms in \sect{sec:global-alg}.

\subsection{Intersection points}\label{sec:intersection-points}
In this section, we perform some intersection algorithms that will be required in our geometrical framework.
In order to compute surface intersections, we
require the computation of three types of intersections, described below.

Consider an oriented plane $\pi$ defined by a point $\pmb{x}_\pi$ on the plane and its outward normal $\pmb{n}_\pi$. The plane-point distance function can be computed as
\begin{align}\label{eq:distance-function}
  \gamma_\pi(\pmb{x}) \doteq \mathtt{dist}(\pi,\pmb{x}) \doteq (\pmb{x}-\pmb{x}_\pi)\cdot \pmb{n}_\pi, \qquad \pmb{x} \in \mathbb{R}^{3}.
\end{align}
In the following, we will also consider this distance function parametrized in the reference face $\hat{F}$ as $\hat{\gamma}_\pi(\hat{\pmb{\xi}}) \doteq \gamma_\pi \circ \mapF(\hat{\pmb{\xi}})$. Let us denote by $\mathrm{int}(\gamma_\pi)=\{\pmb{x}:\gamma_\pi(\pmb{x})<0\}$ the interior of a level set. The zero level set of the distance function is represented by \begin{align}\label{eq:level-set-distance-function}\hat \gamma^0_\pi \doteq \{ \  \hat{\pmb{\xi}} \in \hat F \ : \ \mathtt{dist}(\pi,\mapF(\hat{\pmb{\xi}})) =0\ \}.\end{align}

\subsubsection{Curve-plane intersection}
First,
we define the curve-plane intersections as the intersection of an edge $E = \mapE(\hat{E})$ (which can be described as a diffeomorphic polynomial map on a reference segment $\hat{E}$) with a plane $\pi$ (see \fig{fig:intersection-edge}). We compute the intersection points in the reference segment $\hat{E}$ as the result of finding:
\begin{equation}\label{eq:curve-plane}
\hat{t} \in\hat E \ : \ \gamma_\pi \circ \mapE(\hat{t}) =  0.
\end{equation}
If $E \in \Lambda^1(F)$, the map $\mapE: \hat{E} \rightarrow E$ can be readily obtained from the corresponding face map $\mapF$. Since $\mapE$ is a polynomial, the computation of $\hat{t}$ involves finding the roots of a univariate polynomial system. One can utilize root isolation techniques combined with standard iterative solvers, e.g., the Newton-Rapson method. As the solution may be not unique, the root isolation techniques in  \cite{Mourrain2004} provide the means to find multiple roots by leveraging the variation diminishing property of B\'ezier curves.
\begin{figure}[http]
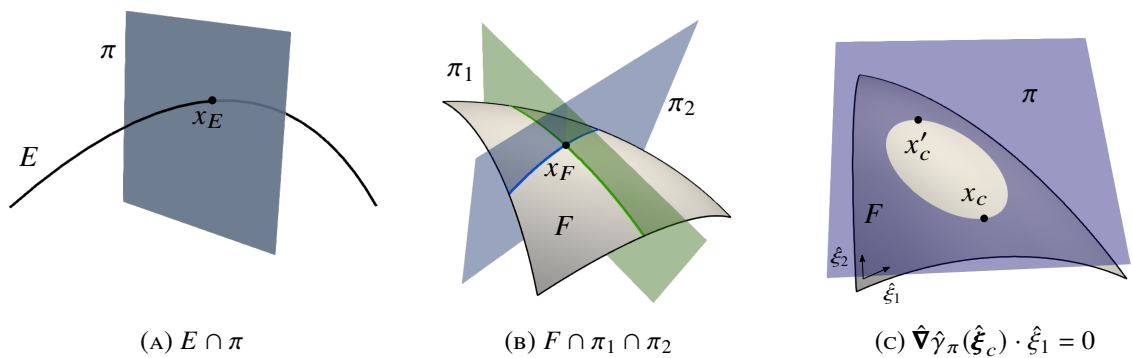

  \centering
  \begin{subfigure}[b]{0.32\textwidth}
    \includefig[\normalsize]{\textwidth}{edge_plane_intersection}
    \caption{$E \cap \pi$}
    \label{fig:intersection-edge}
  \end{subfigure}
  \begin{subfigure}[b]{0.32\textwidth}
    \centering
    \includefig[\normalsize]{0.8\textwidth}{plane_plane_intersection}
    \caption{$F \cap \pi_1 \cap \pi_2$}
    \label{fig:intersection-plane}
  \end{subfigure}
  \begin{subfigure}[b]{0.32\textwidth}
    \centering
    \includefig[\normalsize]{0.8\textwidth}{aa-critical-point_phys}
    \caption{$\hat{\pmb{\nabla}}\hat \gamma_\pi (\hat {\pmb{\xi}}_c) \cdot \hat \xi_1 =0$}
    \label{fig:intersection-critical}
  \end{subfigure}

  \caption[Definition of intersection points.]{Definition of intersection points. The intersection of curves and planes (see (a)) is computed through univariate root-finding methods.  The surface-plane-plane intersection points in (b) also represent the intersection of surface-ine intersections.
  The \acs{aa} critical points in (c) are defined in the reference space of $F$. These points split the surface-plane intersection curves into monotonic curves in the reference space (see an analogous reference space in \fig{fig:refinement-rules-a}).
  Both computations, surface-plane-plane intersection points and \acs{aa} critical points, require solving a bivariate root system.}
\end{figure}

\subsubsection{Surface-line intersection}
\label{ssub:surface-line-intersection}
Next, we consider line-surface intersections. Since a line can be represented as the intersection of two half-spaces, we compute surface-line intersections as surface-plane-plane intersections (see \fig{fig:intersection-plane}). To compute these intersections, we find:
\[ \hat{\pmb{\xi}}\in\hat F \  : \  \gamma_{\pi_i} \circ \mapF(\hat{\pmb{\xi}}) = 0, \, \quad \forall \ i \in \{1,2\},\]
where $\pi_1$, $\pi_2$ are the planes whose intersection is the line we want to intersect. We note that $\hat{\pmb{\xi}} \in \mathbb{R}^2$. It can be checked that this is a system of two bivariate polynomial equations of order $q$. The roots can be computed using bivariate root-finding algorithms for polynomials. The algorithms described in \cite{Mourrain_2009} bound and subdivide the reference domain until the roots are isolated with a given precision. In this work, we adapt the implementation to simplices using the bounding techniques defined in \cite{Reuter_2007}.

\subsubsection[Critical points of the zero isosurface of a distance function]{Critical points of the zero isosurface of a distance function with respect to an edge.}
  Finally, we compute the points in which the zero let set curve
  $\hat \gamma^0_\pi$  has a zero directional derivative with respect to a given direction $\pmb{t}$ (see \fig{fig:intersection-critical}). When the direction $\pmb{t}$ is \ac{aa}, we call these points \emph{\ac{aa} critical points}. These points are obtained as the solution  of the following system:
  \begin{equation}
  \hat{\pmb{\xi}} \in \hat F \ : \ \hat{\pmb{\nabla}} \hat  \gamma_\pi (\hat{\pmb{\xi}})  \cdot \pmb{t} = 0, \quad   \hat  \gamma_\pi(\hat{\pmb{\xi}}) = 0,
\end{equation}
which is again a system of two bivariate polynomial equations of order $q$ and can be computed as above. {We note that the critical points are defined in the reference space. Thus, we take the gradient in the reference space.}

\subsection{Nonlinear trimming surface}\label{sec:surface-trimming}
Given a face $F\in\mathcal{B}_K$, we consider its intersection against a cell $K \in \mathcal{T}_h$. The objective is to create a partition of $\mathcal{B}_K$ by splitting its faces by intersection against the background cells $K \in \mathcal{T}_h$. We consider $\mathcal{T}_h$ to be a partition of $\Omega_\mathrm{art}$ in a pointwise sense, i.e., every $\pmb{x} \in \Omega_\mathrm{art}$ belongs to only one $K \in \mathcal{T}_h$. $K$ is neither open nor closed in general, \rev{i.e., each background facet belongs to a single background cell. In particular, \rev{given a cell whose closure is $\overline{K} = [x^-,x^+]\times [y^-,y^+] \times [z^-,z^+]$, we define the cell as $K = (x^-,x^+]\times (y^-,y^+] \times (z^-,z^+]$.}}
We refer to \rev{\cite[Sec. 3.6]{Badia_2022-stl}} for more details. The pointwise partition is required to properly account for the case in which $F \in \mathcal{B}_K$ is (in machine precision) on $K$; otherwise, we could be counting more than once the same face $F$. We utilize the notation $K^\mathrm{\circ}$ and $\overline{K}$ to refer to the interior and closure of $K$, resp.
\begin{assumption}
    We assume that $F \cap K^\circ \neq \emptyset$.
\end{assumption}
In \cite{Badia_2022-stl}, we propose an algorithm that determines whether the assumption holds, i.e., checks if $F\cap \pi_f = F$ for some $f \in \Lambda^2 (K)$. Here, $f \in \Lambda^2 (K)$ represents a face bounding $K$ and $\pi_f$ its corresponding half-space. If the assumption does not hold, $F\cap K = F \cap f$, and we perform the intersection in one dimension less, which is a simplification of the one below.
\begin{assumption}
  We assume that all intersection queries are well-posed, i.e., they return a finite number of points.
\end{assumption}
For simplicity in the exposition, we postpone the ill-posed cases till the end of the section.

For each $f \in \Lambda^2(K)$ and its corresponding half-space $\pi_f$, we can define the distance function $\hat{\gamma}_{\pi_f}$ in the reference space using (\ref{eq:distance-function}) , which we denote with $\hat{\gamma}_{f}$ for brevity. We can also consider the zero level set curves $\{\hat{\gamma}^0_{f}\}_{f \in \Lambda^2(K)}$, using (\ref{eq:level-set-distance-function}), and their intersections:
\[
    \hat{\gamma}^0_{ f} \cap \hat{\gamma}^0_{{f}'}, \quad f, f' \in \Lambda^2(K).
\]
We note that the intersection points of these curves are the intersections $\pmb{\phi}_F^{-1}( \pi_f \cap \pi_{f'})$; $\pi_f \cap \pi_{f'}$ contains the edge $e \in \Lambda^1(K)$ such that $e \in f, f'$. We also note that $\hat \gamma_f^0 \cap \overline{\hat{F}},\ f \in \Lambda^2(K)$, can have several disconnected parts. We represent with $C_f(\hat{F})$  the set of mutually disconnected components of $\hat \gamma_f^0 \cap \overline{\hat{F}}$.

In the following, we want to refine $F$ in such a way that the nonlinear case is \emph{diffeomorphically} equivalent to the linear one. Thus, we can use a linear clipping algorithm to compute the cyclic graph representation of the nonlinear polytope (see \cite{Badia_2022-stl}).  For this purpose, we consider the \ac{aa} refinement $\mathrm{ref}_1$ of $F$ presented in \alg{alg:aa-refinement} and illustrated in \fig{fig:refinement-rules}. \rev{This \ac{aa} refinement generates a $kd$-tree partition. A $kd$-tree is a binary tree that splits a $k$-dimensional space of each non-leaf node through a hyperplane (see, e.g., \cite{wald2006building} for details).}

Let us consider the level set curve $\hat \gamma^0_{\varepsilon_d}$ associated with the diagonal edge $\varepsilon_d \in \Lambda^1(\hat F)$. After $\mathrm{ref}_1$, these edges cannot contain intersection points in their interior. Having intersection points only on \ac{aa} edges simplifies the exposition of the next algorithms. We cannot do the same for \ac{aa} edges since new \ac{aa} edges are generated after each refinement step and new intersections can appear.

\begin{algorithm}
  \caption{$\mathrm{ref}_1$: \acl{aa} refinement of invariants}\label{alg:aa-refinement}
  \justify
  Let us consider a $kd$-tree partition $\mathrm{ref}_1^0(\hat F)$ of $\hat F$ by refining against
  \begin{enumerate}
    \item
    $\hat{\xi}_1$-aligned lines that contain the $\hat{\xi}_2$-aligned critical points of $\hat \gamma_f^0$, $f\in \Lambda^2(K)$. Proceed analogously for $\hat{\xi}_2$-aligned lines and $\hat{\xi}_1$-aligned critical points.
    \item $\hat{\xi}_1$ and $\hat{\xi}_2$-aligned lines containing each intersections $\hat \gamma_f^0 \cap \hat \gamma_{f'}^0$ and $\hat \gamma_f^0 \cap \hat \gamma^0_{\varepsilon_d}$, for all $f,f' \in \Lambda^2 (K)$.
  \end{enumerate}
  After this refinement, if $\hat{\gamma}^0_f \cap \partial \hat F_{R},\ f\in\Lambda^2(K),\ \hat F_{R} \in \mathrm{ref}^0_1(\hat F)$, has more than two intersection points, consider the partition $\mathrm{ref}_1^1(\hat F_{R}$) by refining against
  \begin{enumerate}
    \setcounter{enumi}{2}
    \item \ac{aa} lines perpendicular to the \ac{aa} edge $\varepsilon \in \Lambda^1(\hat F_{R})$ containing each of these intersection points.
  \end{enumerate}
  Next, if $\hat{\gamma}_{f} \cap \hat{\gamma}_{f'}$  or $\hat \gamma_f \cap \hat \gamma^0_{\varepsilon_d}$, $f,f' \in \Lambda^2(K)$, intersect more than once $\hat F _{R} \in \mathrm{ref}^1_1 \circ \mathrm{ref}^0_1(\hat{F})$,
  \begin{enumerate}
    \setcounter{enumi}{3}
    \item Consider a partition of $\hat{F}_{R}$ by an \ac{aa} line that passes between two vertices.
  \end{enumerate}
  Return the final $kd$-tree partition $\mathrm{ref}_1(\hat F)$ after these four steps.
\end{algorithm}

\begin{figure}
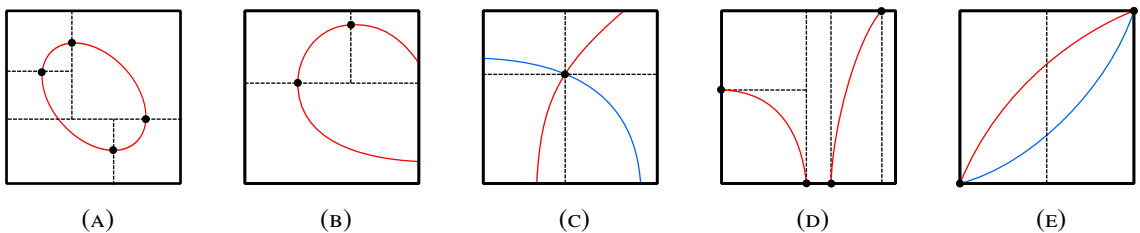

  \centering
  \begin{subfigure}[b]{.19\textwidth}
    \centering
    \includefig[\tiny]{.8\textwidth}{invariant_1}
    \caption{}
    \label{fig:refinement-rules-a}
  \end{subfigure}
  \begin{subfigure}[b]{.19\textwidth}
    \centering
    \includefig[\tiny]{.8\textwidth}{invariant_2}
    \caption{}
    \label{fig:refinement-rules-b}
  \end{subfigure}
  \begin{subfigure}[b]{.19\textwidth}
    \centering
    \includefig[\tiny]{.8\textwidth}{invariant_3}
    \caption{}
    \label{fig:refinement-rules-c}
  \end{subfigure}
  \begin{subfigure}[b]{.19\textwidth}
    \centering
    \includefig[\tiny]{.8\textwidth}{invariant_6}
    \caption{}
    \label{fig:refinement-rules-d}
  \end{subfigure}
  \begin{subfigure}[b]{.19\textwidth}
    \centering
    \includefig[\tiny]{.8\textwidth}{invariant_4}
    \caption{}
    \label{fig:refinement-rules-e}
  \end{subfigure}
  \caption[Refinement steps of \alg{alg:nonlinear:aa-refinement}.]{
  Refinement steps of \alg{alg:aa-refinement}. Step 1 is represented in (a) and (b).
  Steps 2-4 are represented in (c)-(e), resp.
  Red and blue dashed lines represent the $\hat \gamma_f^0$ and $\hat \gamma_{f^\prime}^0$-curves,  resp., $f,f^\prime \in \Lambda^2(K)$. Interior black lines represent the proposed partition of $\hat F$ for a given invariant. Solid black lines are the edges of $\Lambda^1( \hat F)$. There are multiple possible partitions, depending on the order the intersections are processed.}
  \label{fig:refinement-rules}
\end{figure}

\begin{proposition}\label{prop:bounded-refinement}
The number of faces in $\mathrm{ref}_1(\hat{F})$ is bounded.
\end{proposition}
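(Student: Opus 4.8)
The plan is to prove boundedness by showing that every step of $\mathrm{ref}_1$ inserts only finitely many axis-aligned cutting lines, with a bound depending solely on the B\'ezier order $q$ and the fixed number of faces $|\Lambda^2(K)| = 6$ of the background cell. Since a $kd$-tree partition has exactly one more leaf than the number of internal splits, boundedness of the total number of cuts immediately yields boundedness of the number of faces of $\mathrm{ref}_1(\hat F)$. So the whole argument reduces to counting cuts.

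First I would bound the geometric features that drive the refinement, all of which are roots of polynomial systems of $q$-controlled degree. Each distance function $\hat{\gamma}_f = \gamma_{\pi_f} \circ \mapF$ is a bivariate polynomial of degree $q$, because $\gamma_{\pi_f}$ is affine and $\mapF$ has degree $q$; hence each level set $\hat{\gamma}^0_f$ is an algebraic curve of degree $q$. The \ac{aa} critical points solve $\{\hat{\gamma}_f = 0,\ \hat{\pmb{\nabla}}\hat{\gamma}_f \cdot \pmb{t} = 0\}$, whose second equation has degree $q-1$; by B\'ezout together with the well-posedness Assumption (finiteness of every root set), their number is at most $q(q-1)$ per curve and direction. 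The pairwise intersections $\hat{\gamma}^0_f \cap \hat{\gamma}^0_{f'}$ number at most $q^2$ each by B\'ezout, over the $\binom{|\Lambda^2(K)|}{2}$ pairs, while the diagonal intersections $\hat{\gamma}^0_f \cap \hat{\gamma}^0_{\varepsilon_d}$ are crossings of a degree-$q$ curve with a line, hence at most $q$ each. All these counts are $O(q^2)$, with the combinatorial constant fixed by $|\Lambda^2(K)| = 6$.

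Next I would count the cuts contributed by each step. Steps 1 and 2 place at most two \ac{aa} lines through each of the $O(q^2)$ feature points, so the grid $\mathrm{ref}^0_1(\hat F)$ has at most $(O(q^2)+1)^2 = O(q^4)$ cells. For Step 3, within a cell $\hat F_R$ each curve $\hat{\gamma}^0_f$ meets a given \ac{aa} edge in at most $q$ points (a univariate degree-$q$ restriction), so at most $O(q)$ perpendicular cuts are inserted per cell and $O(q^5)$ in total. For Step 4, each cut separates a pair of excess curve intersections, so the number of such cuts is bounded by the global $O(q^2)$ pairwise-intersection count. Summing, the total number of cuts — and therefore of leaves — is bounded by a polynomial in $q$.

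The main obstacle is to certify that the conditional, cell-by-cell Steps 3 and 4 terminate and do not cascade into unboundedly many new subdivisions. The key observation I would use is that every such cut is placed through a \emph{pre-existing} feature point — a curve-edge crossing for Step 3 and a pairwise curve intersection for Step 4 — so no new features are ever created. Consequently each cut strictly decreases a nonnegative integer quantity: the number of crossings of $\hat{\gamma}^0_f$ with $\partial \hat F_R$ in excess of two for Step 3, and the number of excess pairwise intersections inside a cell for Step 4. Both quantities are globally dominated by the feature counts established above, so this monotone-descent argument guarantees termination after finitely many refinements and delivers the stated bound.
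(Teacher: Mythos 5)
Your counting in the first three paragraphs is sound and in fact sharpens the paper's own argument: where the paper only remarks that the number of critical points and intersection points is finite because $\mapF$ is polynomial, you make this quantitative via B\'ezout (legitimately so, since the well-posedness assumption rules out common components). The per-step cut counts are also fine, \emph{provided} each of the four steps of \alg{alg:aa-refinement} is executed exactly once --- which is precisely how the algorithm is defined: it is four sequential passes, not a loop run to a fixed point.

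The gap is in your final paragraph, where you identify termination as the main obstacle and resolve it with a descent argument resting on the claim that every cut passes through a \emph{pre-existing} feature point, so that ``no new features are ever created.'' For Step 3 this is false. The features driving Step 3 are the crossings of $\hat{\gamma}^0_f$ with $\partial\hat{F}_R$, and these are refinement-\emph{dependent}: every cut inserts new axis-aligned edges, and the curves cross those new edges at points that did not exist before the cut. The paper makes exactly this point: the points driving steps (1), (2) and (4) are invariant under refinement because the sides of children faces are parallel to those of the parent, whereas ``only the intersection points in step (3) do change with refinement.'' Your fallback claim --- that each cut strictly decreases the number of boundary crossings in excess of two --- is unsubstantiated and generally false: when a cut line splits a connected component of $\hat{\gamma}^0_f \cap \hat{F}_R$ into two pieces, each piece acquires a new crossing on the cut line, so a child cell can carry the same excess as its parent. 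The paper's resolution is structural rather than variational: Step 3 is performed exactly once, over the fixed finite partition $\mathrm{ref}_1^0(\hat{F})$, because a single pass suffices for the purposes of \prop{prop:monotonicity}; hence its cuts are counted against a fixed finite set of crossings (your $O(q^5)$ bound) and no cascade can arise. Your proof becomes correct if the descent argument is deleted and replaced by this observation.
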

\begin{proof}
First, we note that there is a limited number of intersections and critical points since $\mapF$ is a polynomial. Thus, the number of lines in the refinement strategy is bounded, so the resulting number of faces is also bounded. Furthermore, these points do not change with refinement (the sides of the children faces can only be parallel to the sides of the parent face). Only the intersection points in step (3) do change with refinement. However, we only need to perform step (3) once for the purpose of \prop{prop:monotonicity}.
\end{proof}

\begin{proposition}\label{prop:monotonicity}
  After the refinement strategy above, for any $f\in\Lambda^2(K)$, $C_f(\hat{F}_R)$ is a set of smooth connected curves such 
  that: given $\hat{F}_R \in \mathrm{ref}_1(\hat{F})$
  \begin{enumerate}[label=(\roman*)]
    \item the curves in $C_f(\hat{F}_R)$ cannot be tangent to any \ac{aa} line in $\hat F_R \setminus \Lambda^0 (\hat{F}_R)$;
    \item every curve in $C_f(\hat{F}_R)$ is strictly monotonic with respect to the coordinates $(\hat\xi_1,\hat\xi_2)$, can only intersect once \ac{aa} segments in $\hat F_R \setminus \Lambda^0(\hat{F}_R)$ and can only intersect diagonal edges on $\Lambda^0(\hat{F}_R)$;
    \item all the curves in $C_f(\hat{F}_R)$ are strictly increasing with respect to the coordinates ($\hat \xi_1,\hat \xi_2$) (or all strictly decreasing).
  \end{enumerate}
\end{proposition}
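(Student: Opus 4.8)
The plan is to translate each of the three claims into a statement about the sign pattern of the gradient $\hat{\pmb{\nabla}}\hat{\gamma}_f$ along the level set and then read off the effect of each refinement rule in \alg{alg:aa-refinement}. The starting observation is that, wherever $\hat{\pmb{\nabla}}\hat{\gamma}_f\neq 0$, a component of $\hat{\gamma}_f^0$ is a smooth curve whose tangent is orthogonal to $\hat{\pmb{\nabla}}\hat{\gamma}_f$; hence it has a horizontal tangent exactly where $\hat{\pmb{\nabla}}\hat{\gamma}_f\cdot\hat{\xi}_1=0$ (an $\hat{\xi}_1$-aligned critical point) and a vertical tangent exactly where $\hat{\pmb{\nabla}}\hat{\gamma}_f\cdot\hat{\xi}_2=0$ (an $\hat{\xi}_2$-aligned critical point), and its local slope is $\mathrm{d}\hat{\xi}_2/\mathrm{d}\hat{\xi}_1=-(\hat{\pmb{\nabla}}\hat{\gamma}_f\cdot\hat{\xi}_1)/(\hat{\pmb{\nabla}}\hat{\gamma}_f\cdot\hat{\xi}_2)$. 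Because $\mapF$ is polynomial these critical points are finite in number, so by \prop{prop:bounded-refinement} they are all resolved by the refinement.

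For (i) I would argue by contradiction using the $kd$-tree structure. Suppose a component of $C_f(\hat F_R)$ were tangent to a horizontal edge $e$ of $\hat F_R$ at a point $p\notin\Lambda^0(\hat F_R)$. Tangency to a horizontal edge means a horizontal tangent, so $p$ is an $\hat{\xi}_1$-aligned critical point; by step (1) of $\mathrm{ref}_1^0$ a vertical refinement line passes through $p$. Since $e$ lies on a horizontal line, the two lines meet at $p$, which is then a vertex of the partition, contradicting $p\notin\Lambda^0(\hat F_R)$. The symmetric argument (vertical edges, $\hat{\xi}_2$-aligned critical points, horizontal refinement lines) completes (i). The same bookkeeping shows that every interior critical point is pushed onto a refinement line, hence onto $\partial\hat F_R$, where the curve meets the edge transversally.

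For (ii) monotonicity is now immediate: by the previous paragraph no interior point of a component carries a horizontal or vertical tangent, so $\hat{\pmb{\nabla}}\hat{\gamma}_f\cdot\hat{\xi}_1$ and $\hat{\pmb{\nabla}}\hat{\gamma}_f\cdot\hat{\xi}_2$ keep a constant nonzero sign along the interior of the arc; the slope is therefore finite and of one sign, and any residual extremum sits at an endpoint on $\partial\hat F_R$, so $\hat{\xi}_1$ and $\hat{\xi}_2$ are each strictly monotone on the closed arc. Strict monotonicity of $\hat{\xi}_2$ (resp. $\hat{\xi}_1$) forces at most one intersection with any horizontal (resp. vertical) \ac{aa} segment. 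For the diagonal, step (2) places \emph{both} a horizontal and a vertical line through every intersection $\hat{\gamma}_f^0\cap\hat{\gamma}^0_{\varepsilon_d}$, so each such point is a vertex, and step (4) guarantees at most one of them per subface; hence a component can only meet the diagonal edge on $\Lambda^0(\hat F_R)$.

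The crux is (iii), the claim that all components share one sense, and I expect it to be the main obstacle. I would prove it by first showing that after steps (1) and (3) each subface contains at most one component of $\hat{\gamma}_f^0$: by (ii) every component is a strictly monotone arc, which cannot close up, so it meets $\partial\hat F_R$ in exactly two points; two or more components would give at least four points in $\hat{\gamma}_f^0\cap\partial\hat F_R$, which step (3) refines away until at most two remain. A single arc has a single sense, giving (iii). The delicate point is exactly this reduction: one must verify that a single pass of the perpendicular refinement in step (3) really separates all boundary crossings into distinct sub-edges in every configuration. The alternative, more intrinsic route is to note that the sense of an arc is $-\mathrm{sign}\big((\hat{\pmb{\nabla}}\hat{\gamma}_f\cdot\hat{\xi}_1)(\hat{\pmb{\nabla}}\hat{\gamma}_f\cdot\hat{\xi}_2)\big)$, so an increasing and a decreasing arc in the same subface would have to be separated by the zero set of this product; a transition of sense along the level set can occur only where $\hat{\pmb{\nabla}}\hat{\gamma}_f$ vanishes, i.e.\ at a simultaneous $\hat{\xi}_1$- and $\hat{\xi}_2$-aligned critical point, through which step (1) places both a horizontal and a vertical line, turning it into a vertex that separates the two senses into different subfaces. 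Reconciling these two viewpoints and ruling out every degenerate incidence is where the argument needs the most care.
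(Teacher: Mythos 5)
Your arguments for (i) and (ii) are essentially the paper's own proof (critical points become vertices by step (1), absence of interior \ac{aa} tangencies plus the implicit function theorem gives strict monotonicity, and step (2) handles the diagonal), so those parts stand. The genuine gap is in (iii), exactly where you flagged it, and unfortunately \emph{both} of your proposed routes fail. Route A rests on the claim that step (3) leaves at most two points of $\hat\gamma_f^0\cap\partial\hat F_R$ (hence at most one component) per subface. This is false: take two disjoint ``parallel'' strictly increasing arcs, say one running from $(0.2,0)$ to $(0.4,1)$ and one from $(0.3,0)$ to $(0.5,1)$. A single pass of step (3) inserts the vertical lines $\hat\xi_1\in\{0.2,0.3,0.4,0.5\}$, and the strip $0.3\le\hat\xi_1\le 0.4$ still contains a piece of each arc, i.e.\ four boundary intersection points. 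The algorithm applies step (3) only once --- the proof of \prop{prop:bounded-refinement} depends on this --- and iterating it ``until at most two remain'' has no termination guarantee, since every new line creates new boundary intersections. Indeed, the paper never claims one component per refined face: the connection algorithm of \sect{sec:connection-algorithm} and \prop{prop:connect-cyclic} are designed precisely to handle several $\hat\alpha$-curves inside one $\hat F_R$.

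Route B fails for a different reason: a ``transition of sense'' argument only applies \emph{along a connected component}. Two disconnected components of opposite sense need not be joined through any point of the level set where $\hat{\pmb{\nabla}}\hat\gamma_f$ vanishes; the zero set of $(\hat{\pmb{\nabla}}\hat\gamma_f\cdot\hat\xi_1)(\hat{\pmb{\nabla}}\hat\gamma_f\cdot\hat\xi_2)$ separating them need not meet $\hat\gamma_f^0$ at all, and step (1) only inserts lines through points \emph{of} $\hat\gamma_f^0$. Concretely, take the degree-two patch $\mapF(\hat{\pmb{\xi}})=(\hat\xi_1,\hat\xi_2,(\hat\xi_1-\tfrac12)^2-\hat\xi_2^2)$ cut by the plane $z=\epsilon$, so that $\hat\gamma_f(\hat{\pmb{\xi}})=(\hat\xi_1-\tfrac12)^2-\hat\xi_2^2-\epsilon$: its zero set consists of two disjoint strictly monotone arcs of \emph{opposite} sense (left branch decreasing, right branch increasing), while the gradient vanishes only at $(\tfrac12,0)$, which is not on the level set, so steps (1)--(2) insert nothing between the two arcs. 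What actually enforces (iii) is step (3) itself, via the argument you are missing: refining through the boundary intersection points of an increasing curve $\hat\alpha_f$ creates its axis-aligned bounding box; $\hat\alpha_f$ joins the bottom-left corner of that box to its top-right corner and splits it into a part whose boundary contains the top and left edges and a part whose boundary contains the bottom and right edges; a strictly decreasing curve inside the box must connect the former to the latter and would therefore cross $\hat\alpha_f$, which is impossible because distinct components of the non-self-intersecting curve $\hat\gamma_f^0$ are disjoint ($\mapF$ being a diffeomorphism). This separation is topological, not differential, which is why no gradient-based mechanism can replace it.
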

\begin{proof}

First, we note that \ac{aa} critical points do not change by refinement. The curves are smooth since they are the intersection of a smooth surface (image of a polynomial map) with a plane.

Let us prove (i) by contradiction. We assume that some $\hat \gamma^0_f$, $f \in \Lambda^2(K)$ is tangent to a $\hat \xi_1$-aligned \ac{aa} line on $\hat{F}_R \setminus \Lambda^0(\hat{F})$. The point was a $\hat\xi_1$-critical point before the refinement. As a result, after refinement, this point belongs to a $\hat\xi_2$-aligned axis of $\hat{F}_R$ (by (1) in $\mathrm{ref}_1$). Then, the point can only be in $\Lambda^0(\hat{F}_R)$. We proceed analogously for $\hat \xi_2$-aligned lines.

Next, we prove (ii). By definition of the refinement rule, there cannot be \ac{aa} critical points in $\hat{F}_R^\circ$. We also know that the curves cannot be tangent to \ac{aa} sides. Thus, by the implicit function theorem, $\hat \gamma _{ f}(\hat{\pmb{ \xi}}) = 0 $ can be expressed as the graph of a function $d_1(\hat \xi_1) = \hat \xi_2$ and $d_2(\hat \xi_2) = \hat \xi _1$ in $\hat F \setminus \Lambda^0 (\hat F)$. Since these functions are smooth, by the mean value theorem, they must be strictly monotonic in $\hat F$ with respect to $\hat \xi_1$ and $\hat \xi_2$.
Thus, they can intersect \ac{aa} lines at most once. Besides, all intersections against diagonal sides are on $\Lambda^0(\hat{F}_R)$ after refinement (by (2) in $\mathrm{ref}_1$); note that we are not inserting new non-\ac{aa} edges by refinement.

{Finally, we prove (iii). We note that the partition in step (3) or $\mathrm{ref}_1$ prevents two curves $\hat \alpha_f, \hat \alpha_f^\prime \in C_f(\hat F_R)$, strictly increasing and decreasing, resp., to be in the same $\hat F_R$. Let us assume that these two curves are in $\hat F_R$ before step (3). Then, by refining through the intersection points of $\hat \alpha_f$, we create the \ac{aabb} of the curve; \rev{an \ac{aabb} is the smallest axis-aligned hypercube that contains a given entity (see, e.g., \cite{ericson2004real}).} $\hat \alpha_f$ (which cannot touch the face boundary as proven above) splits the face into two parts, including the bottom-left and top-right vertices of this face, resp. A strictly decreasing curve in this face must connect the left-top and bottom-right edges intersecting $\hat \alpha_f$. However, this is not possible since $\hat\alpha_f$ and $\hat \alpha_f^\prime$ are disconnected components of a non-self-intersecting curve $\hat\gamma_f^0$ (since $\mapF$ is a diffeomorphism).}

\end{proof}

\begin{proposition}\label{prop:ref1}
{Given $f, f' \in \Lambda^2(K)$, $f \neq f'$, and connected components $\hat{\alpha}_f \in C_f(\hat{F}_R)$ and $\hat{\alpha}_{f'} \in C_{f'}$, satisfy the following properties in $\hat{F}_R \in \mathrm{ref}_1(\hat F)$,
  \begin{enumerate}[label=(\roman*)]
      \item $\hat \alpha_f$ intersects at most once an edge $\hat{\varepsilon} \in \Lambda^1(\hat F_R)$ and is not tangent to the edge;
      \item {if $\hat \alpha _f$ intersects $\hat F_R^\circ$, $\hat \alpha_f$ intersects twice $\partial\hat F_R$;}
      \item $\hat \alpha_f$ and $\hat \alpha_{f^\prime}$ do not intersect in $\overline{\hat{F}_R} \setminus \Lambda^0(\hat F_R)$;
      \item $\hat \alpha_f$ and $\alpha_{f^\prime}$ intersects at most once in $\overline{\hat{F}_R}$.
  \end{enumerate}
  }
\end{proposition}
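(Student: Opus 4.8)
The plan is to dispatch the four properties one at a time, leaning on the monotonicity structure of \prop{prop:monotonicity} and on where the distinguished points are forced to sit by the steps of $\mathrm{ref}_1$ in \alg{alg:aa-refinement}. Throughout I would use that every face $\hat{F}_R \in \mathrm{ref}_1(\hat{F})$ is convex (a rectangle in the interior of $\hat F$, a triangle or trapezoid along the diagonal), and that the finite set of intersection and critical points does not move under refinement (as already observed in the proof of \prop{prop:bounded-refinement}).

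For (i) I would partition $\Lambda^1(\hat F_R)$ into \ac{aa} edges and (at most) one diagonal edge. On an \ac{aa} edge, \prop{prop:monotonicity}(ii) already gives at most one crossing and \prop{prop:monotonicity}(i) gives non-tangency. On a diagonal edge, \prop{prop:monotonicity}(ii) confines any crossing of $\hat\alpha_f$ to $\Lambda^0(\hat F_R)$; since $\hat\alpha_f$ is strictly increasing in $(\hat\xi_1,\hat\xi_2)$ by \prop{prop:monotonicity}(iii) while the diagonal is strictly decreasing, the two meet at most once and transversally, which also yields non-tangency. For (ii) I would argue topologically: after $\mathrm{ref}_1$ there are no \ac{aa} critical points in $\hat F_R^\circ$, so $\hat\alpha_f$ is a regular level-set arc with no endpoint in the interior and, being monotone, is not a closed loop. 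Hence if it meets $\hat F_R^\circ$ it is an arc whose two relative endpoints lie on $\partial\hat F_R$; writing it as a graph $\hat\xi_2 = d_1(\hat\xi_1)$ over the convex face and using (i) to exclude any further boundary contact (tangency is excluded, and a transversal crossing would terminate the component inside $\overline{\hat F_R}$), I conclude it crosses $\partial\hat F_R$ at exactly two points.

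For (iii) and (iv) the decisive observation is that $\hat\alpha_f \cap \hat\alpha_{f'} \subseteq \hat\gamma_f^0 \cap \hat\gamma_{f'}^0$, and these points are precisely the $\mapF^{-1}(\pi_f \cap \pi_{f'})$ identified just before the statement; they are finite and fixed under refinement. Step (2) of $\mathrm{ref}_1$ inserts both a $\hat\xi_1$- and a $\hat\xi_2$-aligned line through each such point, turning it into a grid node and hence a vertex of every face containing it, so every intersection lies in $\Lambda^0(\hat F_R)$; this gives (iii). Step (4) then splits any face still containing more than one such point by an \ac{aa} line passing between two vertices, leaving at most one intersection point per face, which gives (iv). The same argument applies verbatim with $\hat\gamma^0_{\varepsilon_d}$ in place of $\hat\gamma^0_{f'}$, covering the diagonal case treated in $\mathrm{ref}_1$.

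I expect the main obstacle to be (ii): making the \emph{enters the interior $\Rightarrow$ exactly two boundary crossings} argument fully rigorous requires ruling out tangential contact and re-entry through the same edge, and handling the corner and diagonal geometry of the non-rectangular faces rather than only the rectangular ones. Parts (i), (iii) and (iv) are essentially bookkeeping once \prop{prop:monotonicity} and the placement enforced by the refinement steps are in hand.
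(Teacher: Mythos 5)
Your proof is correct and takes essentially the same route as the paper: (i) is drawn from \prop{prop:monotonicity}, (ii) uses the same closed-loop/tangency/re-entry exclusion argument (you phrase it constructively, the paper by contradiction), (iii) follows from step (2) of $\mathrm{ref}_1$ turning curve--curve intersections into vertices of the refined faces, and (iv) from the step of \alg{alg:aa-refinement} that splits faces containing more than one such intersection point. Note that the paper attributes (iv) to ``step (3)'' while you cite step (4); your reading matches what the algorithm actually enforces, so this discrepancy reflects a slip in the paper's cross-reference rather than a gap in your argument.
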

\begin{proof}
{Statement (i) is a direct consequence of \prop{prop:monotonicity}. Let us prove (ii) by contradiction. We assume that there exists a $\hat \alpha_f\in C_f(\hat F_R)$, $f\in\Lambda^2(K)$, that does not intersect any edge $\hat{\varepsilon} \in \Lambda^1 (\hat F_R)$. $\hat \alpha_f$ is a closed curve in $\mathbb R ^2$. This curve is not monotonic, which is in contradiction with \prop{prop:monotonicity}. If we assume that $\hat \alpha_f\in C_f(\hat F_R)$, $f\in\Lambda^2(K)$ intersects only once $\partial \hat F_R$, then $\hat \alpha_f$ is tangent to an edge $\hat\varepsilon \in \Lambda^1(\hat F_R)$, which is in contradiction with (i). $\hat \alpha_f$ cannot intersect more than twice $\partial \hat F_R$ since $\hat\alpha_f$ is a connected and non-self-intersecting curve.}

All intersections are on vertices of refined faces by step (2) in $\mathrm{ref}_1$, which proves (iii). Step (3) in the refinement rule explicitly enforces (iv).

\end{proof}

We note that we have not provided yet an algorithm that determines $C_f(\hat{F}_R)$, $f \in \Lambda^1(\hat{F}_R)$, i.e., the $\hat{\alpha}$-curves after $\mathrm{ref}_1$. We can compute all the intersections (vertices) but it still remains open how to connect these vertices.  We develop the details of the connectivity of the intersection points in the next section. In any case, assuming that the $\hat{\alpha}$-curves are connected, we can define the following refinement rule $\mathrm{ref}_2$ to compute the intersection $\hat{F}_R\cap K$, where $\hat{F}_R \in \mathrm{ref}_1 (\hat{F})$. The $\mathrm{ref}_2$ is not \ac{aa} but it does not introduce new intersection points, as stated in \thm{th:clipping} (see \fig{fig:clipping-surface}).

\begin{algorithm}[H]
  \caption{$\mathrm{ref}_2$: partition by connecting intersections}\label{alg:split-connect}
  \justify
  Split $\hat F_R \in \mathrm{ref}_1(\hat F)$ through each edge $\alpha \in C_f(\hat{F}_R)$, $ \forall  f \in \Lambda^2(K)$.
\end{algorithm}

\begin{theorem}\label{th:clipping}
    The previous refinement rule is such that $\hat F_R \in \mathrm{ref}_2\circ\mathrm{ref}_1(\hat F)$ has the following properties:
    \begin{enumerate}[label=(\roman*)]
      \item $\Lambda^1( \hat F_R)$ is composed of \ac{aa} edges and strictly monotonic nonlinear edges (including potential diagonal edges);
      \item The vertices of $\hat F_R$ lay on the boundary of its \ac{aabb};
      \item All the nonlinear edges are strictly increasing or all strictly decreasing;
      \item $\hat F_R$ is diffeomorphically equivalent to a convex linear polygon $P^\mathrm{ln}$;
    \end{enumerate} 
\end{theorem}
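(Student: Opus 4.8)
The plan is to establish the four properties in the logical order (i), (iii), (ii), (iv), since it is the monotonicity and the \ac{aabb} placement of the vertices that ultimately make the final diffeomorphism possible. Throughout I would use two structural facts: after \alg{alg:aa-refinement} every leaf $\hat F_R \in \mathrm{ref}_1(\hat F)$ is an \ac{aa} rectangle, except for leaves crossed by the original diagonal $\varepsilon_d \in \Lambda^1(\hat F)$, which carry a single inherited diagonal edge; and \prop{prop:monotonicity} together with \prop{prop:ref1} already describe the $\hat\alpha$-curves inside such a leaf (smooth, strictly monotonic, crossing $\partial\hat F_R$ exactly twice, mutually non-crossing except at $\Lambda^0(\hat F_R)$). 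Property (i) is then almost immediate: by \alg{alg:split-connect} the only new edges created are the curves $\hat\alpha \in C_f(\hat F_R)$, each of which is a strictly monotonic smooth curve by \prop{prop:monotonicity}(ii), while every pre-existing edge of a leaf is \ac{aa} or the diagonal. Hence $\Lambda^1(\hat F_R)$ consists exactly of \ac{aa} edges, the potential diagonal, and strictly monotonic nonlinear edges.

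For (iii) I would argue by contradiction with a topological crossing argument. Suppose a leaf contained both a strictly increasing curve $\hat\alpha_f$ and a strictly decreasing curve $\hat\alpha_{f'}$ meeting the interior. By \prop{prop:ref1}(ii) each crosses $\partial\hat F_R$ exactly twice; the increasing one separates the rectangle into an ``upper-left'' and a ``lower-right'' component, and the two endpoints of the decreasing one fall into these two different components. A Jordan-curve / intermediate-value argument then forces the two curves to cross in $\hat F_R^\circ$, contradicting \prop{prop:ref1}(iii), which confines all inter-face intersections to $\Lambda^0(\hat F_R)$ by steps (2) and (4) of \alg{alg:aa-refinement}. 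The delicate point is to verify that the endpoint of the decreasing curve really lands in the correct component for every admissible boundary configuration, including the leaves carrying the diagonal; this case distinction is the first place I expect to be careful.

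With (iii) in hand, (ii) follows from a single observation about monotone curves: a strictly increasing curve from $p$ to $q$ has $p$ and $q$ as the two opposite (bottom-left and top-right) corners of its own \ac{aabb}, and the curve stays inside that box. Since after \alg{alg:split-connect} all curves in a leaf share one monotonicity and do not cross in the interior, they form a nested family slicing the rectangle into ``strips''; applying the observation edge by edge shows that every vertex of a strip---an original rectangle corner, a curve/boundary intersection, or a curve/curve meeting at $\Lambda^0(\hat F_R)$---lies on the boundary of the strip's \ac{aabb}.

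The crux is (iv). My plan is to define $P^\mathrm{ln}$ as the linear polygon obtained by replacing each nonlinear edge of $\hat F_R$ by the straight chord between its endpoints, leaving the \ac{aa} and diagonal edges unchanged, and then to exhibit a diffeomorphism $P^\mathrm{ln} \to \hat F_R$. Convexity of $P^\mathrm{ln}$ I would deduce from (ii)--(iii): the boundary splits into a lower-left and an upper-right chain, each either weakly monotone (the \ac{aa} edges) or strictly increasing (the chords), with all vertices on the \ac{aabb} boundary, which excludes reflex angles. For the map itself, because every leaf is simultaneously $\hat\xi_1$- and $\hat\xi_2$-monotone I can write $\hat F_R = \{(\hat\xi_1,\hat\xi_2) : \hat\xi_1 \in [x^-,x^+],\ \ell(\hat\xi_1) \le \hat\xi_2 \le u(\hat\xi_1)\}$ with piecewise-smooth boundary functions $\ell < u$, and take the vertical-stretching map $(\hat\xi_1,\hat\xi_2) \mapsto (\hat\xi_1, (\hat\xi_2-\ell)/(u-\ell))$ onto a rectangle, whose Jacobian is positive precisely because $u-\ell > 0$ in the interior, i.e.\ by strict monotonicity. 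The main obstacle I anticipate is making this last step uniform: handling the degenerate ``curved-triangle'' leaves where $\ell$ and $u$ coincide at a vertex, so the stretch map is singular there, and reconciling the chord-polygon $P^\mathrm{ln}$ with the stretched rectangle so that the diffeomorphism respects the combinatorial edge structure. A transfinite (Coons-type) interpolation matching $P^\mathrm{ln}$ edge-to-edge onto the curved boundary, with invertibility certified by strict monotonicity, is the clean but technically heaviest route, and it is here that I expect the real work of the proof to lie.
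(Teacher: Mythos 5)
Your plan follows the paper's proof in its essentials: (i)--(ii) are read off from the structure of $\mathrm{ref}_1$ as a $kd$-tree partition whose leaves are \acp{aabb} clipped by mutually non-crossing strictly monotonic curves (\prop{prop:monotonicity}, \prop{prop:ref1}), and (iv) is obtained by replacing each nonlinear edge with the chord between its endpoints. For (iv), though, the paper is far less demanding than you anticipate: it represents $\hat F_R$ as a cyclic graph, forms the chord polygon, and asserts the diffeomorphism directly from smoothness of the edges and preservation of the vertices and of the topology; no stretching map or Coons construction appears, and convexity of $P^{\mathrm{ln}}$ is not argued at all. So your explicit monotone-graph parametrization and your convexity argument deduced from (ii)--(iii) are additions rather than a reconstruction of what the paper does, and the degenerate ``curved-triangle'' difficulty you flag is simply not addressed there.

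The genuine gap is in your treatment of (iii). Your contradiction needs the two endpoints of the decreasing curve $\hat\alpha_{f'}$ to lie in the two different components of $\hat F_R \setminus \hat\alpha_f$, and that is precisely what can fail. Concretely, let $\hat\alpha_f$ increase from the bottom edge to the top edge of the leaf, and let $\hat\alpha_{f'}$ decrease from a point of the top edge to the right of $\hat\alpha_f$'s exit down to a point of the right edge: both of its endpoints lie in the same component, the two curves are disjoint, and no crossing is forced. Moreover, no step of \alg{alg:aa-refinement} removes this configuration --- steps (1)--(2) act only on critical points and on actual intersections, step (3) on a single $\hat\gamma^0_f$ meeting $\partial\hat F_R$ in more than two points, and step (4) on pairs of curves that intersect more than once --- and the configuration is geometrically realizable, e.g.\ with $f,f'$ opposite faces of $K$ (whose intersection curves with $F$ are disjoint in physical space) and a sufficiently twisted $\mapF$. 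So the case distinction you flagged as delicate cannot be resolved favorably, and your argument for (iii) cannot be completed as stated. You should also know that the paper does not resolve this either: its proof of (iii) cites \prop{prop:monotonicity}(iii), which asserts a common monotonicity direction only for the components of a single $C_f(\hat F_R)$, i.e.\ curves coming from one background face, and is silent about curves from two different faces $f \neq f'$ --- exactly the case your crossing argument was designed to cover. Closing (iii) in the cross-face case appears to require either an extra refinement rule (splitting any leaf containing curves of opposite monotonicity) or a per-face weakening of the statement; the downstream uses (\prop{prop:partition}, the cyclic-graph representation) survive such a weakening, as does your proof of (iv), which only needs the edges of a given $\hat F_R$ to be non-crossing and monotone.
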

\begin{proof}
  {We note that the refinement $\mathrm{ref}_1$ is a $kd$-tree partition of the square that contains the reference triangle. Let us consider the set of \acp{aabb} obtained by this refinement. By construction of the partition, $\hat F_R\in\mathrm{ref}_2\circ\mathrm{ref}_1(\hat F)$ is the clipping of an \ac{aabb} by strictly monotonic curves that do not intersect among themselves (see \prop{prop:monotonicity} (ii) and \prop{prop:ref1}). Thus, (i) and (ii) readily hold, while (iii) holds due to \prop{prop:monotonicity} (iii). 

  The nonlinear polygon can be represented as a cyclic graph, where the vertices are the intersection points of the edges. We can replace the nonlinear edges with linear ones connecting the vertices, creating a closed linear polygon. Since the nonlinear edges are smooth, the vertices are preserved, and the topology of the surface is preserved, the linear polygon is diffeomorphic to the nonlinear one. Thus, (iv) holds.}
\end{proof}

After refining $F$,
the $\hat \gamma^0$-curves do not intersect $\hat F^\circ_R$, i.e., they belong to $\partial \hat F_R$, for $\hat F_R \in \mathrm{ref}_2\circ\mathrm{ref}_1(\hat F)$.
Thus, $F_R^\circ$ can only be inside or outside $K$.
Now, we can classify the faces with respect to $K$ as {interior},
\begin{equation}
  \label{eq:Fcut}
  \mathcal{F}^\mathrm{cut} \doteq \{ \ \hat{F}_R \in \mathrm{ref}_2\circ\mathrm{ref}_1(\hat{F}) \ :\  \hat F_R \subseteq \bigcap_{f\in\Lambda^2(K)} \mathrm{int} (\hat \gamma_f)  \ \},
\end{equation}
and {exterior, $\mathcal{F}^\mathrm{ext} \doteq \{ \ \hat{F}_R \in \mathrm{ref}_2\circ\mathrm{ref}_1(\hat{F}) \} \setminus \mathcal{F}^\mathrm{cut}$}.
Remember that $K$ is not open or closed in general, as stated at the beginning of this section. Thus,  $\mathrm{int}(\hat \gamma_f)$ (where $\mathrm{int}(\cdot)$ denotes the interior), $f\in\Lambda^2(K)$ in \eqref{eq:Fcut} can be either open or closed (see more details in \cite{Badia_2022-stl}).
Once the refined faces are classified, we define \alg{alg:intersect_F_K} to obtain $F\cap K$ as a partition into nonlinear polygons (see \fig{fig:clipping-surface}).
{We note that, due to \thm{th:clipping}, the linearization of $\mathcal F^\mathrm{ref}$ can be represented with a cyclic graph (see \cite{Badia_2022-stl}). Thus, $\mathcal F^\mathrm{cut}$ can utilize such representation.}

\begin{algorithm}[H]
  \caption{$F\cap K$}\label{alg:intersect_F_K}
  \justify
  Compute the partition $\mathcal{F}^\mathrm{ref} \doteq \mathrm{ref}_2\circ\mathrm{ref}_1(F)$. \\
  Return the faces $\mathcal{F}^\mathrm{cut}$ in $\mathcal{F}^\mathrm{ref}$ that are inside $K$ (see \eqref{eq:Fcut}).
\end{algorithm}

\begin{figure}
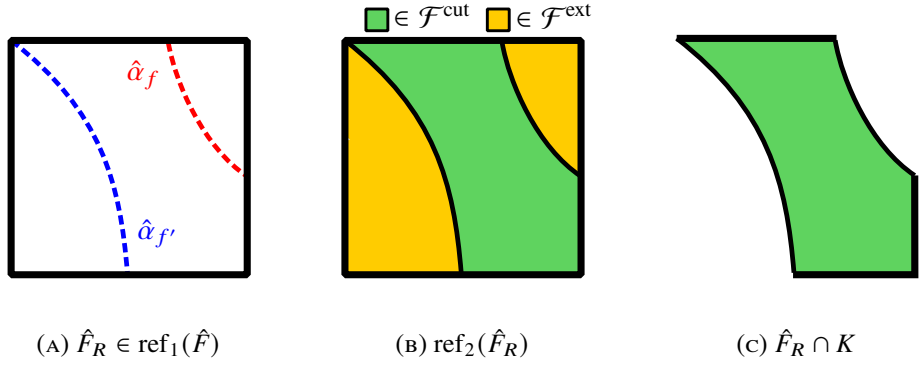

  \centering
  \includefig[\normalsize]{0.15\textwidth}{refinement-connection_legend}

  \begin{subfigure}[b]{0.2\textwidth}
    \includefig[\normalsize]{\textwidth}{refinement-connection_1_1}
    \caption{$\hat F_R \in \mathrm{ref}_1(\hat F)$}
  \end{subfigure}
  \hspace{1cm}
  \begin{subfigure}[b]{0.2\textwidth}
    \includefig[\normalsize]{\textwidth}{refinement-connection_2_1}
    \caption{$\mathrm{ref}_2(\hat F_R)$}
  \end{subfigure}
  \hspace{1cm}
  \begin{subfigure}[b]{0.2\textwidth}
    \includefig[\normalsize]{\textwidth}{refinement-connection_3}
    \caption{$\hat F_R\cap K$}
  \end{subfigure}
  \caption[Representation of clipping algorithm $F\cap K$.]{Representation of clipping algorithm $F\cap K$. In (a), we start with a refined face $\hat F_R\in \mathrm{ref}_1(\hat F)$ such that the $\hat \alpha$-curves are strictly monotonic with respect to the axes in the reference space and do not intersect in $\hat F_R\setminus \Lambda^0(\hat F_R)$.
  Here, the $\hat \alpha$-curves are $\hat \alpha_f \in C_f(\hat F_R)$ (red) and $\hat \alpha_{f^\prime} \in C_{f^\prime}(\hat F_R)$ (blue) for $f,f^\prime \in \Lambda^2(K)$.
  Then, in (b), we generate a partition $\mathrm{ref}_2(\hat F_R)$
  through the $\hat \alpha$-curves, in which we classify the faces with respect to $K$.
  Finally, in (c), we restrict
  $\mathrm{ref}_2(\hat F_R)$ inside $K$ (see \alg{alg:split-connect}). } 
  \label{fig:clipping-surface}
\end{figure}

\begin{remark}\label{rmk:ill-posed}
  For linear F and intersection curves, the \ac{aa} critical point computation is ill-posed. However, there is no need to add any point in this case, since the intersection is linear.  The curve-plane intersection is ill-posed if the curve is contained in the plane (linear edge); analogously for the surface-line intersection (linear face). These intersections can also be disregarded because they are not affecting the \( \mathrm{meas}_2 \) of sets after trimming.
\end{remark}

\begin{remark}\label{rmk:coarsening}
  The partition generated by the refinement above may contain a level of refinement that is not required to fulfill all the requirements for \prop{prop:ref1} to hold. Thus, one can perform a coarsening step of $\mathcal F^\mathrm{ref}$ that reduces the non-essential vertices, edges, and faces, while maintaining a diffeomorphically equivalent surface.
\end{remark}

\begin{remark}\label{rmk:accuracy}
  The presence of \ac{aa} critical points can improve the accuracy of a parametrization step in \sect{sec:surf-approx}, since the resulting curves are strictly monotonic with respect to the edges of the reference space. Additionally, one can consider another refinement of the $\gamma^0_f$-curves to bound the relative chord $\hat \delta_{\max}$ up to a given threshold.
\end{remark}

\subsection{Connection algorithm}\label{sec:connection-algorithm}

Let us consider a face
{$F\in\mathrm{ref}_1(F^0)$} after the refinement in \alg{alg:aa-refinement}.
We stress that $F$ hereafter denotes a refined face
{and ${F^0}$ the original face before refinement in the previous section.}
We remark that the reference space is always the one of the original face, i.e., the map $\mapF$ is inherited from the unrefined face.
Thus, the curves $\gamma_f^0$ are independent of the refinement.

As above, let us consider the set $C_f(\hat F)$ of mutually disconnected components of $\hat \gamma_f^0 \cap \overline{\hat F}$ (now at the refined face).
We omit the face sub-index, i.e., we use $C(\hat F)$, to represent the union of all these sets for all faces $f \in \Lambda^2(K)$.
In the reference space $\hat F$, the curves $\hat \alpha _f \in C_f(\hat F)$ are strictly monotonic smooth curves in $\hat F \setminus \Lambda^0(\hat F)$ (see \prop{prop:monotonicity}); for brevity, we refer as monotonic the curves that are monotonic with respect to $(\hat \xi_1,\hat \xi_2)$.
{In \prop{prop:monotonicity}, we prove that all the curves in $C_f(\hat F)$ are strictly increasing or all curves are strictly decreasing.}
Moreover, two $\hat \alpha$-curves, $\hat \alpha_f \in C_f(\hat F)$ and $\hat \alpha_{f^\prime} \in C_{f^\prime}(\hat F)$, $f,f^\prime \in \Lambda^2(K)$, cannot intersect in $\hat F \setminus \Lambda^0(\hat F)$ (see \prop{prop:ref1}). Let us define the set of intersection points:
\begin{equation}
  I = \{ \hat \gamma _ { f}^0 \cap \partial \hat F  \ : \ f \in \Lambda^2(K) \}.
\end{equation}
There is an injective map between the intersection points in $I$ and the $\alpha$-curves. We assume that $I$ is composed of isolated points, i.e., $\hat \gamma _ { f}^0 \cap \partial \hat F$ is not an edge of $\hat{F}$. This singular case can readily handled, as discussed in \rmk{rmk:ill-posed}.

By construction (clipping of the reference triangle with AA lines), $\hat{F}$ has four edges and the bottom-left corner connects two AA edges. Thus, we can label the edges of $\Lambda^1(\hat F)$ as $\hat \varepsilon_1, ..., \hat \varepsilon_4$ in the counterclockwise direction, starting with the leftmost vertical edge. We note that $\hat\varepsilon_3$ can be non-\ac{aa} and $\hat\varepsilon _4$ empty, but it does not affect the discussion. The refinement rules ensure that the non-\ac{aa} edges are not intersected by $\hat \gamma_f^0$.
One can simply consider the \ac{aabb} that contains the face, which cannot affect the intersection points, since the $\hat\alpha$-curves only connect \ac{aa} edges that do not change by this step.
Thus, we can assume that all edges are \ac{aa} without loss of generality.

Let us consider $C_f(\hat F)$ to be composed of strictly increasing curves. We define $\Gamma^+ \doteq \hat\varepsilon_1   \cup \hat\varepsilon _2$ and $\Gamma^- \doteq \partial \hat F \setminus \Gamma ^{+}$ and $I^{+} \doteq I \cap \Gamma^{+}$ and $I^{-} \doteq I \cap \Gamma^{-}$. We proceed analogously if $C_f(\hat F)$ is composed of strictly decreasing curves, but defining $\Gamma^{+} \doteq \hat\varepsilon_ 1   \cup \hat\varepsilon _4$.  

\begin{proposition} \label{prop:connection-by-sets}
Every node in $I^{+}$ is connected to one and only one node in $I^{-}$. The opposite is also true. Thus, we can connect all the nodes in $I^{+}$ and $I^{-}$.
\end{proposition}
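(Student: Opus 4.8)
The plan is to show that the $\alpha$-curves induce a perfect matching on $I$ in which every matched pair has exactly one endpoint in $I^{+}$ and one in $I^{-}$. I would work with the reduction already set up in the text: replacing $\hat F$ by its \ac{aabb} (which leaves the intersection points unchanged), I may assume its four edges are \ac{aa}, with $\hat\varepsilon_1$ the left edge ($\hat\xi_1=\hat\xi_1^{-}$), $\hat\varepsilon_2$ the bottom edge ($\hat\xi_2=\hat\xi_2^{-}$), $\hat\varepsilon_3$ the right edge ($\hat\xi_1=\hat\xi_1^{+}$), and $\hat\varepsilon_4$ the top edge ($\hat\xi_2=\hat\xi_2^{+}$). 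By \prop{prop:monotonicity}~(iii) I may assume, without loss of generality, that every curve in $C(\hat F)$ is strictly increasing, so that $\Gamma^{+}=\hat\varepsilon_1\cup\hat\varepsilon_2$ is the lower-left portion of $\partial\hat F$ and $\Gamma^{-}=\hat\varepsilon_3\cup\hat\varepsilon_4$ the upper-right portion; the strictly decreasing case is identical after replacing $\Gamma^{+}$ by $\hat\varepsilon_1\cup\hat\varepsilon_4$.

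First I would fix an $\alpha$-curve $\hat\alpha\in C_f(\hat F)$ that meets $\hat F^\circ$. By \prop{prop:ref1}~(i)--(ii) it crosses $\partial\hat F$ transversally in exactly two points, lying on two different edges, which are therefore two distinct nodes of $I$; hence the $\alpha$-curves partition the nodes of $I$ into pairs. I parametrize $\hat\alpha$ so that both coordinates $(\hat\xi_1,\hat\xi_2)$ increase along it, and let $p_-$ and $p_+$ denote the endpoints with the smallest and largest coordinates, respectively. The core claim is $p_-\in\Gamma^{+}$ and $p_+\in\Gamma^{-}$. I would argue by a \emph{no-room} principle: at $p_-$ the curve enters $\hat F^\circ$ with both coordinates strictly increasing, so neither coordinate can already sit at its maximal value; thus $p_-$ lies on neither the right edge $\hat\varepsilon_3$ nor the top edge $\hat\varepsilon_4$, whence $p_-\in\hat\varepsilon_1\cup\hat\varepsilon_2=\Gamma^{+}$. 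Symmetrically, just before reaching $p_+$ both coordinates are strictly smaller, so neither can sit at its minimal value, giving $p_+\in\hat\varepsilon_3\cup\hat\varepsilon_4=\Gamma^{-}$.

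With the claim in hand I would conclude as follows. Each $\alpha$-curve connects exactly one node of $I^{+}$ to exactly one node of $I^{-}$; since distinct curves meet only in $\Lambda^0(\hat F)$ (\prop{prop:ref1}~(iii)--(iv)), their endpoint pairs are disjoint and exhaust $I$. The pairing is therefore a bijection between $I^{+}$ and $I^{-}$, which is precisely the statement: each node of $I^{+}$ is connected to one and only one node of $I^{-}$, and conversely.

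The hard part will be the book-keeping at the corners, where the assignment of a node to $\Gamma^{+}$ or $\Gamma^{-}$ could a priori be ambiguous (the top-left corner lies on both $\hat\varepsilon_1\subset\Gamma^{+}$ and $\hat\varepsilon_4\subset\Gamma^{-}$, and similarly the bottom-right corner). I would dispose of this by noting that a strictly increasing curve can terminate at neither of these two corners: at the top-left corner $\hat\xi_2$ is already maximal and $\hat\xi_1$ minimal, so there is no room for the curve to increase both coordinates on one side and to decrease both on the other, contradicting strict monotonicity; the bottom-right corner is excluded symmetrically. Hence among corner cases $p_-$ can only be the bottom-left corner (unambiguously in $\Gamma^{+}$) and $p_+$ only the top-right corner (unambiguously in $\Gamma^{-}$). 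Any residual degenerate configuration, such as an intersection curve lying along a full edge of $\hat F$, is not a node of $I$ by hypothesis and is treated as a singular case per \rmk{rmk:ill-posed}.
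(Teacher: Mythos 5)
Your proof is correct and takes essentially the same route as the paper's: both rest on the strict monotonicity of the $\hat\alpha$-curves (\prop{prop:monotonicity}~(iii)) together with the geometry of the \ac{aa} box to conclude that a curve entering the interior must have one endpoint on $\Gamma^{+}$ and the other on $\Gamma^{-}$. Your write-up is somewhat more explicit than the paper's proof---which only spells out why two nodes of $I^{+}$ cannot be joined and leaves the $I^{-}$-to-$I^{-}$ exclusion and the corner cases implicit---but the underlying argument is identical.
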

\begin{proof}   
  Let us consider the case in which $C_f(\hat F)$ is composed of strictly increasing curves. A node in $I^{+}$ cannot be connected to a node in $I^{+}$ since one node in $I$ belongs to only one curve, the curves are strictly increasing, and a strictly increasing function that starts in $\varepsilon_1$ or $\varepsilon_2$ cannot intersect again $\varepsilon_1$ or $\varepsilon_2$.
  Using an analogous argument, we can prove the result when $C_f(\hat F)$ is composed of strictly decreasing curves.
\end{proof}

\begin{proposition} \label{prop:connect-cyclic}
If $I^{+}$ and $I^{-}$ are sorted clockwise and anti-clockwise, resp., then each node in $I^{+}$ can only be connected to the node in the same position in $I^{-}$.
\end{proposition}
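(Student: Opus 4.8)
The plan is to reduce the statement to the classical topological fact that a family of pairwise non-crossing simple arcs joining two disjoint boundary arcs of a disk realizes an order-preserving (``parallel'') matching, and then to verify that the prescribed clockwise/anti-clockwise orientations are precisely the ones for which ``parallel'' reads as ``same position''. First I invoke \prop{prop:connection-by-sets} to fix the bijection $\iota:I^+\to I^-$ induced by the $\hat\alpha$-curves, and note that distinct curves are disjoint in $\overline{\hat F}\setminus\Lambda^0(\hat F)$ (by \prop{prop:ref1}(iii), together with the fact that components of a single $\hat\gamma_f^0$ are disjoint by definition). By \prop{prop:monotonicity} each curve is a strictly increasing simple arc meeting $\partial\hat F$ only at its two endpoints, one in $\Gamma^+$ and one in $\Gamma^-$ (the decreasing case being identical after the stated swap $\Gamma^+\doteq\hat\varepsilon_1\cup\hat\varepsilon_4$). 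Each such arc is therefore a crosscut of $\hat F$, so by the Jordan arc theorem it splits $\hat F$ into a \emph{lower-left} component $R_\alpha^{\ell}$, containing the bottom-left vertex of the \ac{aabb}, and an \emph{upper-right} component $R_\alpha^{u}$, containing the top-right vertex. This is exactly where strict monotonicity enters: a strictly increasing arc separates the global-minimum corner from the global-maximum corner of the box.

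The heart of the argument is the pairwise comparison. Take two curves $\hat\alpha,\hat\beta$ with endpoints $a^\pm\doteq\hat\alpha\cap\Gamma^\pm$ and $b^\pm\doteq\hat\beta\cap\Gamma^\pm$. Since $\hat\beta$ is connected and disjoint from the crosscut $\hat\alpha$, it lies entirely in a single component, so both $b^+$ and $b^-$ lie on the boundary of that one component. Hence $b^+$ belongs to the sub-arc of $\Gamma^+$ bounding $R_\alpha^{u}$ if and only if $b^-$ belongs to the sub-arc of $\Gamma^-$ bounding $R_\alpha^{u}$. The point to check is that scanning $\Gamma^+$ clockwise (bottom edge right-to-left, then left edge bottom-to-top) and $\Gamma^-$ anti-clockwise (right edge bottom-to-top, then top edge right-to-left) traverses these two sub-arcs in the same sense relative to the crosscut, so that ``$b^+$ precedes $a^+$'' is equivalent to ``$b^-$ precedes $a^-$''. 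I would confirm this on the canonical configuration of two parallel increasing arcs, which shows that the upper arc's endpoints indeed occupy the later position in both orderings. Consequently $\iota$ is monotone for these two orderings, and a monotone bijection between finite linearly ordered sets of equal cardinality sends the $i$-th element to the $i$-th element.

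Finally I assemble the global statement: applying the pairwise comparison to every pair shows simultaneously that $\iota$ respects the clockwise order on $I^+$ and the anti-clockwise order on $I^-$, which is precisely the claim. The main obstacle I anticipate is purely the orientation bookkeeping, namely verifying with the concrete labelling ($\hat\varepsilon_1,\dots,\hat\varepsilon_4$ counterclockwise from the leftmost vertical edge and $\Gamma^+=\hat\varepsilon_1\cup\hat\varepsilon_2$) that clockwise on $I^+$ and anti-clockwise on $I^-$ make the non-crossing matching position-\emph{preserving} rather than position-reversing, and then checking the decreasing-curve variant with $\Gamma^+=\hat\varepsilon_1\cup\hat\varepsilon_4$. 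The Jordan separation itself is routine once strict monotonicity from \prop{prop:monotonicity} is in hand; only the sign of the orientation convention is delicate.
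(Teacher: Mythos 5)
Your proof rests on the same topological core as the paper's --- a Jordan-type separation argument combined with the pairwise disjointness of the $\hat\alpha$-curves and \prop{prop:connection-by-sets} --- but it is organized differently: the paper argues by contradiction, taking the hypothetical ``bad'' connection from $i^+_1$ to $i^-_k$ ($k\neq 1$) as the separator and then counting (the nodes $i^-_1,\dots,i^-_{k-1}$ land on one side, $I^+\setminus i_1^+$ on the other, so \prop{prop:connection-by-sets} forces some $\hat\alpha$-curve to cross the separator, which is impossible). You instead use each actual curve as a crosscut and prove directly that the induced matching is monotone with respect to the two boundary orderings, finishing with the monotone-bijection remark. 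Both are valid; the paper's contradiction is shorter because it never has to pin down the orientation conventions explicitly, while your direct version exposes the ``parallel matching'' structure at the price of the orientation bookkeeping that you flag yourself as the delicate point.

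That bookkeeping is where your one genuine error sits. The claim that ``a strictly increasing arc separates the global-minimum corner from the global-maximum corner of the box'' is false. Take an increasing arc from $(0,0.9)$ on $\hat\varepsilon_1$ (left edge, in $\Gamma^+$) to $(0.1,1)$ on $\hat\varepsilon_4$ (top edge, in $\Gamma^-$): it merely clips the top-left corner of the \ac{aabb}, and the bottom-left and top-right vertices lie in the \emph{same} component of the complement. What a strictly increasing crosscut always separates is the bottom-right corner from the top-left corner (the strictly decreasing case is the one separating bottom-left from top-right), so your components $R_\alpha^{\ell}$ and $R_\alpha^{u}$ are not well defined by the criterion you give. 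Fortunately the slip is not load-bearing: your pairwise comparison only needs that $\hat\beta$ lies in a single component of $\hat F\setminus\hat\alpha$ and that the two boundary scans are consistent, and consistency holds precisely because the clockwise scan of $\Gamma^+=\hat\varepsilon_1\cup\hat\varepsilon_2$ (bottom edge then left edge) and the anti-clockwise scan of $\Gamma^-$ (right edge then top edge) both run from the bottom-right corner toward the top-left corner; hence ``after $a^+$'' and ``after $a^-$'' describe the two boundary portions of one and the same component, namely the one whose outer boundary contains the top-left corner. With the corners relabelled accordingly, your canonical-configuration check and the final monotone-bijection step go through as planned, and the decreasing case follows by the stated swap of $\Gamma^+$.
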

\begin{proof}
Let us assume that $i^{+}_1 \in I^{+}$ is connected to $i^{-}_k$, $k \ne 1$. The line that connects $i^{+}_1$ and $i^{-}_k$ split $\hat F$ into two non-empty parts. We note that this is due to the fact that $i^+_1$ and $i^+_k$ cannot be on the same edge.
One of the parts contains  nodes $i^{-}_1, ..., i_{k-1}^{-} \in I^{-}$ and the other part contains $I^{+} \setminus i_1^{+}$.
By \prop{prop:connection-by-sets}, nodes contained in the two different regions are connected by $\alpha$-curves. But these $\alpha$-curves then intersect the one connecting $i^{+}_1$ and $i^{-}_k$. This is not possible, since these are disconnected components of a non-self-intersecting curve. Thus, $i^{+}_1$ can only be connected to $i^{-}_1$. We proceed analogously for the other nodes.
\end{proof}

This way, we compute all the connections in the graph with \alg{alg:vertex-connection}. First, in \alglin{alg:vertex-connection-1}, we compute the intersections between $\hat\gamma^0_f$ and $\partial \hat F$ using intersection algorithms (see \fig{fig:vertex-connection-a}).
In \alglin{alg:vertex-connection-2}, we return the straightforward connection of two points. Otherwise, in \alglin{alg:vertex-connection-4},
we compute the derivative sign of the $\alpha$-curves in $C_f(\hat F)$ (e.g., evaluating the gradient $\pmb{\nabla}(\hat \gamma_f^0)$ at any intersection point $i\in I$).
Next, we connect%
\footnote{In \alglin{alg:vertex-connection-10}, we use the notation \texttt{zip} to iterate simultaneously over multiple iterators of the same length, e.g.,  $\pmb{a}$ and $\pmb{b}$. Each iteration returns a tuple of the $i^\mathrm{th}$ value of each interator, $(a_i, b_i)$.}
the nodes in the sorted sets $I^{+}$ and $I^{-}$ according to \prop{prop:connect-cyclic}  (\alglin{alg:vertex-connection-8}-\ref{alg:vertex-connection-10}).
The algorithm returns a set of edges $\mathcal E$ that connect the intersections $I$. These connections define entirely the $\hat\alpha$-curves, which are then used in \alg{alg:split-connect}. We note that the full connection algorithm is barely used in practice if we start with a surface mesh with a small chordal error of its linearisation.

\begin{algorithm}
  \caption{$\mathtt{connect}(\hat F,\hat  \gamma_f)$}\label{alg:vertex-connection}
  \begin{algorithmic}[1]
    \STATE $I \gets \{ \hat \gamma^0_f \cap \partial \hat F \}$
    \label{alg:vertex-connection-1}
    \IF{ $\mathtt{length}(I) = 2$}
      \RETURN $\mathcal E \gets \{ i_1,i_2 \in I \}$\label{alg:vertex-connection-2}
    \ENDIF
    \STATE $s \gets \mathtt{derivative\_sign}(\gamma^0_f)$
    \label{alg:vertex-connection-4}
        \STATE $I^{+},I^{-} \gets \mathtt{split\_by\_position}(I,s)$
        \label{alg:vertex-connection-8}
        \STATE $I^{+}, I^{-} \gets \mathtt{cyclic\_sort\_by\_sign}(I^{-},I^{+},s)$
        \label{alg:vertex-connection-9}
        \RETURN $\mathcal E \gets \{\ (i^+,i^-) \in \mathtt{zip}(\ I^{+} ,\ I^{-})\ \}$
        \label{alg:vertex-connection-10}
  \end{algorithmic}
\end{algorithm}

\begin{figure}
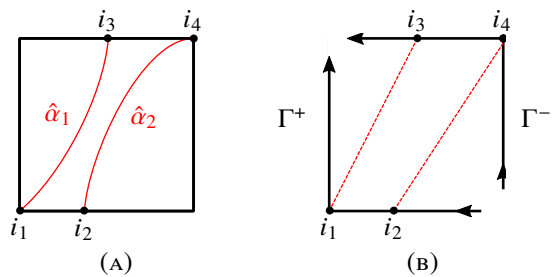

  \centering
  \vspace{0.1cm}
  \begin{subfigure}[b]{.19\textwidth}
    \centering
    \includefig[\small]{.8\textwidth}{conn_1_1}
    \vspace{0.2cm}
    \caption{}
    \label{fig:vertex-connection-a}
  \end{subfigure}
  \hspace{0.05\textwidth}
  \begin{subfigure}[b]{.19\textwidth}
    \centering
    \includefig[\small]{.8\textwidth}{conn_2_1}
    \vspace{0.2cm}
    \caption{}
    \label{fig:vertex-connection-b}
  \end{subfigure}
  \caption[Representation of \alg{alg:nonlinear:vertex-connection} for strictly increasing curves.]{Representation of \alg{alg:vertex-connection} for strictly increasing curves.
  In (a), we present the intersections $i_1,...,i_4 \in I$ of the curves $\hat\alpha_1,\hat\alpha_2 \in C_f(\hat F)$ with $\partial \hat F$. These intersections are classified into $I^+=\{i_3,i_4 \}$ and $I^-=\{i_1,i_2\}$ according to their position in the boundary, $\Gamma^+$ or $\Gamma^-$, resp. These sets are sorted in clockwise and anticlockwise order, resp. This sorting leads to the connection shown in (b) (dashed red). If the curves in  $C_f(\hat F)$ were strictly decreasing, the figures would be symmetric to these ones.  }
  \label{fig:vertex-connection}
\end{figure}

\begin{remark}\label{rmk:filter-intersections}
  In the computation of intersection points in \alg{alg:vertex-connection}, we disregard the ones that do not logically represent an intersection. E.g., if the level set value $\gamma_f$ does not change the sign around a vertex with zero level set value.
\end{remark}

One can observe that for this algorithm to work, $I$ must have an even number of elements (after filtering $I$ with \rmk{rmk:filter-intersections}).
It is true since the $\hat \alpha$-curves are monotonic. Thus, any $\hat \alpha_f\in C_f(\hat F)$ that $\hat \alpha _f \cap \hat F\setminus\Lambda^0(\hat F) \ne \emptyset$ can only intersect twice $\partial \hat F$ (see \thm{th:clipping}).

\subsection{Surface partition}\label{sec:surface-partition}
Let us consider a nonlinear general polygon $\hat F$ in $\mathbb{R}^2$, e.g., $F\in \mathcal F^\mathrm{cut}$. In order to parametrize $F$, we need a partition into regular polygons, e.g., simplices or quadrilaterals.
This parameterization is needed to compute bulk and surface integrals when using (\ref{eq:boundary-integral-simplexify}). However, this is not needed when transforming these integrals into edge integrals using (\ref{eq:boundary-integral-edge}).

Let us recover the definition of a kernel point. A kernel point can be connected by a segment to any other point of a polytope without intersecting its boundary. The union of all possible kernel points is the kernel polytope, which is convex. The kernel polytope of a convex polytope is itself \cite{Ziegler1995}. In a linear polytope $P^\mathrm{ln}$, we can compute a kernel point by finding a point that belongs to the half-spaces defined by the faces of $P^\mathrm{ln}$; see more details in \cite{Sorrente_2021}.

Finding the kernel of a nonlinear polytope $\hat F$ is not trivial. One can find a lower bound of the kernel polytope with the convex hull of the nonlinear faces.
If a kernel point exists for a given polytope $\hat F$, we can compute a simplex partition of $\hat F$ using linear edges. However, the kernel point does not exist in general.

\begin{proposition}\label{prop:partition}
If $\hat F$ has the properties of \thm{th:clipping} then $\hat F$ can be partitioned into a set of nonlinear triangles and quadrilaterals by adding linear edges only.
\end{proposition}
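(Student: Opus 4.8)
The plan is to avoid the kernel-point construction altogether (which, as noted just above the statement, may fail for a nonlinear polytope) and instead cut $\hat F$ into \emph{vertical slabs} by inserting only \ac{aa} segments. First I would invoke property (iii) of \thm{th:clipping} to assume, without loss of generality, that every nonlinear edge of $\hat F$ is strictly increasing. The central observation is then that $\hat F$ is monotone with respect to the $\hat\xi_1$-axis: its boundary splits into a lower chain $\hat\xi_2 = L(\hat\xi_1)$ and an upper chain $\hat\xi_2 = U(\hat\xi_1)$, with $L \le U$ over a common interval $[\hat\xi_1^-,\hat\xi_1^+]$, each chain being a concatenation of \ac{aa} edges and strictly increasing nonlinear edges. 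Granting this monotone structure, the decomposition is immediate; it is the easy part of the argument.

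The key step is to derive this monotone structure from the hypotheses. By property (i) every edge is either \ac{aa} or a strictly increasing monotone curve, and by property (ii) the endpoints of every edge lie on the boundary of its \ac{aabb}; hence $\hat F$ is obtained by clipping its \ac{aabb} with non-crossing chords whose nonlinear pieces are graphs of strictly increasing functions (cf. \prop{prop:monotonicity} and \prop{prop:ref1}, which supply the monotonicity and the absence of mutual interior intersections). I would then show that any vertical line $\hat\xi_1 = c$ meets each edge at most once — for a strictly increasing nonlinear edge and for a horizontal \ac{aa} edge this is monotonicity, while vertical \ac{aa} edges are parallel to the line — and that, because the vertices sit on the \ac{aabb} boundary and $\hat F$ is simply connected (property (iv)), the line meets $\partial \hat F$ in at most two points. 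Consequently $\hat F \cap \{\hat\xi_1 = c\}$ is a single segment, which is exactly $\hat\xi_1$-monotonicity.

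With monotonicity in hand I would order the vertices of $\hat F$ by their $\hat\xi_1$-coordinate and draw the vertical segment $\{\hat\xi_1 = \hat\xi_1(v)\} \cap \overline{\hat F}$ through each vertex $v$; by monotonicity this segment runs from $L$ to $U$ and stays in $\overline{\hat F}$, so it is an admissible interior cut consisting only of a linear (\ac{aa}) edge. Between two consecutive cut lines there is no vertex of $\hat F$, so over that $\hat\xi_1$-subinterval the lower and upper chains each reduce to a \emph{single} edge (\ac{aa} or nonlinear increasing). Each resulting slab is therefore bounded by one bottom edge, one top edge, and the two vertical cut segments on its left and right: it is a nonlinear quadrilateral when both vertical sides are nondegenerate and a nonlinear triangle when one vertical side collapses to a point. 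Since the only inserted edges are the vertical cuts, the partition adds linear edges only and keeps the original nonlinear edges as sides of the slabs, which is the claim.

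The main obstacle is the rigorous proof of $\hat\xi_1$-monotonicity, and in particular ruling out that a vertical line re-enters $\hat F$: this is precisely where properties (ii) and (iii) are essential, since without the vertices confined to the \ac{aabb} boundary one could carve inward notches that destroy monotonicity. A secondary technical point is the bookkeeping of degeneracies — vertical \ac{aa} edges of $\hat F$ (whose cut line coincides with the edge), several vertices sharing an $\hat\xi_1$-coordinate, and the diagonal edge permitted by \thm{th:clipping}(i) — which I would absorb into the slab construction by treating collapsed vertical sides as triangle apices. If one prefers not to establish global monotonicity abstractly, the same slabs can be read off directly from the band structure produced by $\mathrm{ref}_2$ in \alg{alg:split-connect}, each cell of which lies between two consecutive $\hat\alpha$-curves and is monotone by construction.
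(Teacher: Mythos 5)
Your construction is correct, but it is a genuinely different route from the paper's. The paper exploits the normalization that $\Lambda^1(\hat F)$ contains (at most) two strictly increasing nonlinear curves: it connects their endpoints by straight chords, observes that these chords are necessarily non-increasing (lower endpoints lie on the bottom/left of the \ac{aabb}, upper endpoints on the top/right) and therefore can meet each strictly increasing curve only at the shared endpoints, which isolates a single nonlinear quadrilateral $\hat Q$ with two nonlinear and two linear sides; the remainder $\hat F \setminus \hat Q$ is cut off from a convex set (the \ac{aabb}, possibly clipped by the diagonal) by straight chords, hence is linear and convex, and is triangulated trivially. You instead establish $\hat\xi_1$-monotonicity of $\hat F$ and perform the classical trapezoidal (slab) decomposition with vertical cuts through every vertex. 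Both arguments produce partitions into nonlinear triangles and quadrilaterals using only added linear edges, and your slabs have the same structure (two linear sides, two monotone sides) needed for the anisotropic $q \times 1$ tensor-product parametrization discussed after the proposition. The trade-off is practical: the paper's partition keeps every nonlinear edge whole, so no new points on the implicitly defined intersection curves need to be computed, and it yields few pieces; your slabs cut the nonlinear edges wherever a vertical line through a vertex crosses them, and each such crossing costs an additional \ac{aa} root-finding solve of the kind in \sect{sec:intersection-points}, with generally more pieces. In exchange, your argument does not require reducing to the two-curve case and reuses a standard computational-geometry tool, so it applies verbatim to any region with monotone edges.

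The step you flag as the main obstacle---monotonicity---can be closed rigorously, and with exactly the ingredient you point to, namely property (ii) of \thm{th:clipping}. Since every edge of $\hat F$ is \ac{aa}, diagonal, or strictly monotone, each edge meets a vertical line at most once (or, for a vertical edge, in a segment). A connected region with such edges can fail to be $\hat\xi_1$-monotone only if it has an interior cusp: a vertex $v$ whose two incident edges both leave $v$ toward the same horizontal side while the interior angle at $v$ is reflex. At such a cusp, $\hat F$ contains points strictly above $v$, strictly below $v$, and strictly on the horizontal side opposite to the edges, while the edges themselves supply points of $\overline{\hat F}$ on the remaining side; hence $v$ lies in the open interior of the \ac{aabb} of $\hat F$, contradicting property (ii). Property (iv) is not actually needed for this; connectedness of $\hat F$ and the edge properties suffice. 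With this lemma in place, your slab construction (including the degenerate cases you enumerate, and the fallback via the band structure of \alg{alg:split-connect}) goes through.
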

\begin{proof}
  Let us assume that $\Lambda^1(\hat F)$ is composed of two strictly increasing nonlinear curves and a set of \ac{aa} edges. According to \thm{th:clipping}, these curves and edges can only intersect at the boundary of the \ac{aabb} of $\hat F$. Thus, we connect the endpoints of the nonlinear curves with non-increasing linear edges. Since the new edges are non-increasing, they can only intersect once each curve, i.e., at the endpoints. This connection generates a nonlineal quadrilateral $\hat Q\subseteq \hat F$. The remaining parts $\hat F \setminus \hat Q$ are linear and convex (see \thm{th:clipping}), in which a simplex partition is straightforward.
\end{proof}
Therefore, we can compute a hybrid partition if a nonlinear polytope $\hat F$  has no kernel point
We note that the nonlinear quadrilaterals of \prop{prop:partition} are composed of two nonlinear edges and two linear edges. Thus, they can be represented with a tensor product map with anisotropic order, e.g., $q \times 1$ where $q$ is the order of the nonlinear edges. If needed, we can decompose the nonlinear quadrilaterals into triangles within their reference space.
\begin{remark}
  We can consider a coarsening of $\mathcal F^\mathrm{cut}$ that satisfies \prop{prop:ref1} (see \rmk{rmk:coarsening}) if the resulting polygons have a kernel point. We note that the coarsening stated in \rmk{rmk:coarsening} can lead to non-convex polygons.
\end{remark}
\subsection{Surface parametrization}\label{sec:surf-approx}
The intersection $\mathcal{B}^\mathrm{cut}_K \doteq \mathcal{B} \cap K$ is a nonlinear surface mesh
in which the intersection curves are implicitly represented.
Thus, we need a parametrization of the edges and surfaces for further operations.
We utilize a least-squares method \cite{Borges_2002} combined with a sampling strategy \cite{Fries_2015}. This iterative process converges to an accurate parametrization of the intersection curves and trimmed surfaces.

In the least-squares method we can find the B\'ezier control points $X^b=\{x_j^b\}_{j=1}^m$  that minimize $X^\ell - \mathbf{B} X^b$ where $\{\mathbf{B}\}_{ij} = b_j(\xi _i)$ are the B\'ezier basis evaluated at the reference points $\{\xi_j\}_{j=1}^n$ and $X_\xi= \{ x^\ell_j \}_{j=1}^n$ represents the set of sampling points (see \cite{Schmidt_2012} for more details).
When $\mathbf{B}$ is not a square matrix, i.e., $n>m$, $X_b$ is approximated through a linear least-squares operation. To isolate the approximation between the $d$-faces, one can recursively perform a least-squares operation on the interior points, increasing the dimension.
Note that for square matrices, i.e., $n=m$, we are building the B\'ezier extraction operator \cite{Borden_2010}.

\citet{Fries_2015} discussed several sampling strategies and demonstrated similar convergence in \ac{fe} analysis. In each of the sampling strategies, they solve a nonlinear problem for every sampling point. In our case, we aim to parametrize the intersection curves $F\cap G, \ F\in \mathcal B, \ G \in \Lambda^2(K)$.
Thus, we define a sampling strategy to represent the intersection curves
by solving a closest point problem.
Specifically, we find
\[
  \hat{\pmb{\xi}} \in \hat F \ : \  (\pmb{x}_0 - \pmb{\phi}_F(\hat{\pmb{\xi}})) \cdot \pmb{n}_\pi = 0, \qquad  (\pmb{x}_0 - \pmb{\phi}_F(\hat{\pmb{\xi}})) \cdot ( \pmb{n}_\pi \times (\pmb{\partial}_{\hat{\xi}_1}(\mapF(\hat{\pmb{\xi}})) \times \pmb{\partial}_{\hat{\xi}_2}(\mapF(\hat{\pmb{\xi}}))) ) = 0.
\]
When we compute the closest point to a surface, e.g., when parametrizing the interior of a nonlinear face, the problem reads as follows: find
\[
  \hat{\pmb{\xi}} \in \hat F \ : \qquad  (\pmb{x}_0 - \pmb{\phi}_F(\hat{\pmb{\xi}})) \cdot \pmb{\nabla}(\pmb{\phi}_F(\hat{\pmb{\xi}})) = \pmb{0}.
\]
This algorithm requires a seed point in the physical space $\pmb{x}_0$, which does not belong to the surface or curve, and an initial reference point $\hat{\pmb{\xi}}_0 \in \hat F$. This sampling strategy, as well as others, assumes relatively small curvature in the curves and surfaces. The curvature has already been bounded in \sect{sec:surface-trimming}.

The least-squares method with a sampling strategy is insufficient for an accurate parametrization, see the example of \fig{fig:draw-surf-approx-2}. The authors in \cite{Borges_2002} propose a method to optimize the reference points $\{\xi_j\}_{j=1}^n$ iteratively. We use a similar approach that fixes the reference points and recomputes the sampling points at each iteration. This approach allows us to compute the least-squares operator $\mathbf{B}^+$ only once. In addition, the sampling will be evenly spaced in the reference space. The example in \fig{fig:draw-surf-approx} shows how the approximation improves with the iterations. This method applies to the parametrization of curves and surfaces.

\begin{figure}[http]
  \centering
  \includefig[\normalsize]{0.8\textwidth}{draw_approx_legend}

  \begin{subfigure}{0.24\textwidth}
    \includegraphics[width=\textwidth]{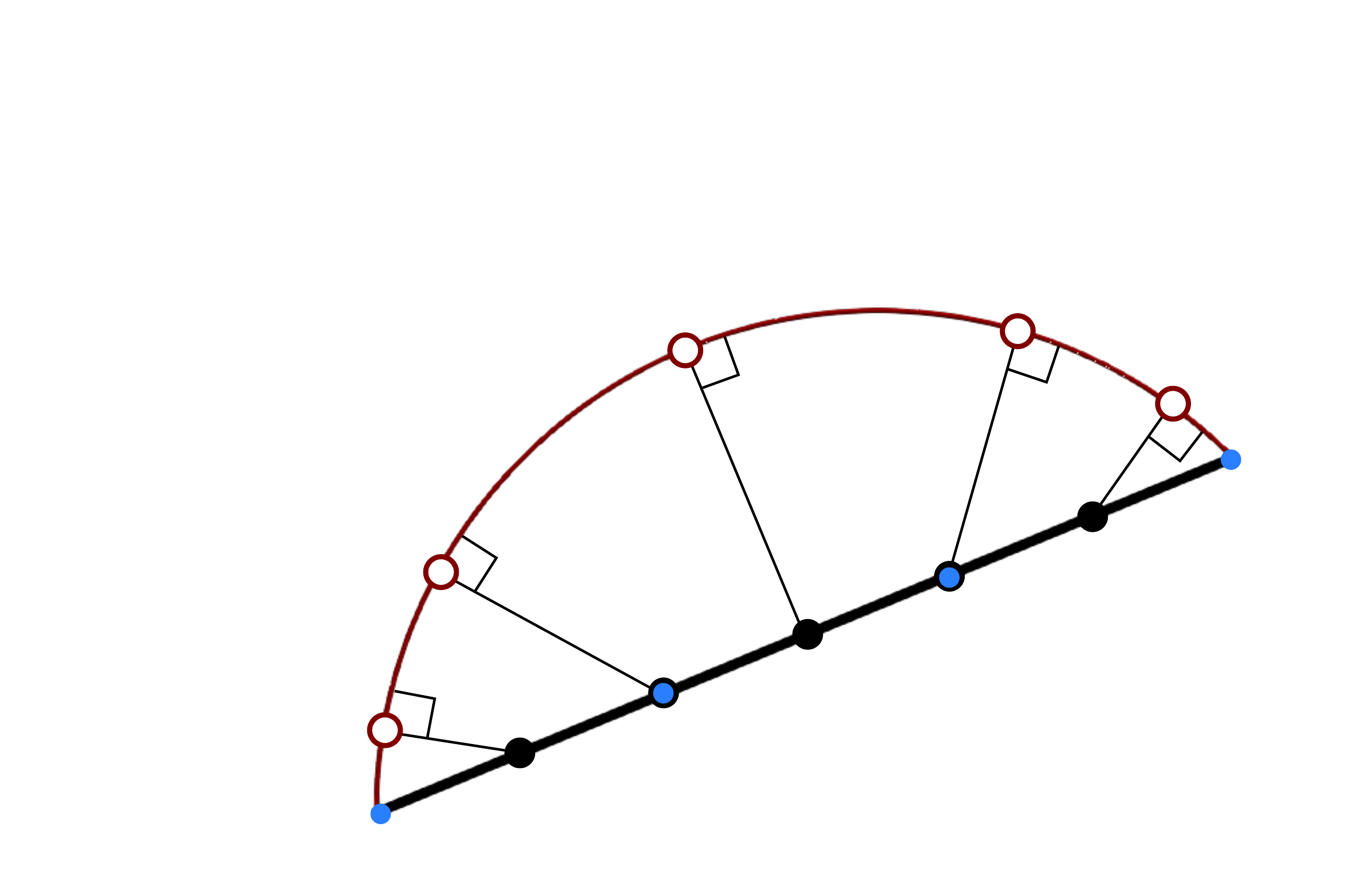}
    \caption{$\mathrm{it} = 0$}
  \end{subfigure}
  \begin{subfigure}{0.24\textwidth}
    \includegraphics[width=\textwidth]{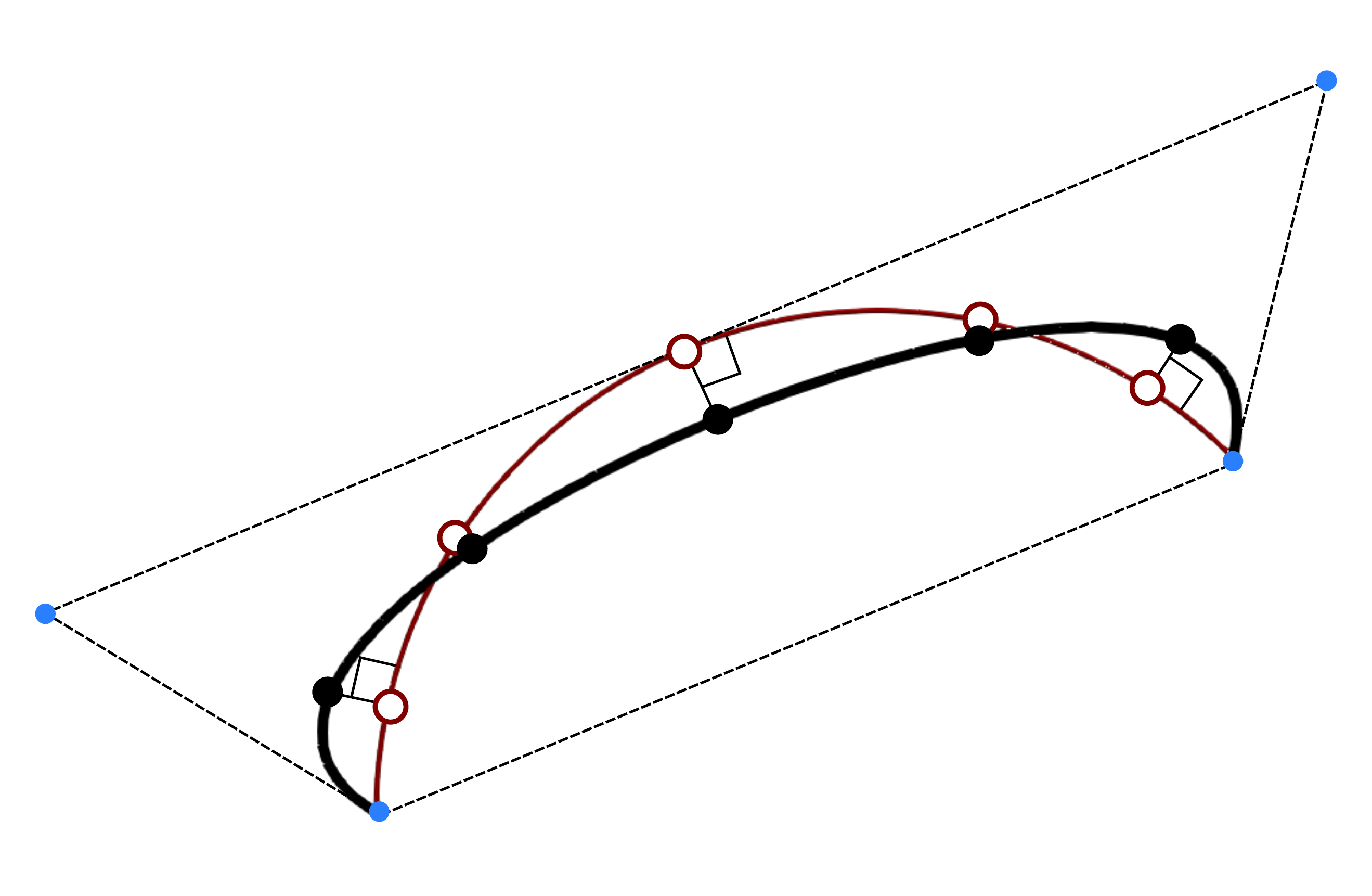}
    \caption{$\mathrm{it} = 1$}
    \label{fig:draw-surf-approx-2}
  \end{subfigure}
  \begin{subfigure}{0.24\textwidth}
    \includegraphics[width=\textwidth]{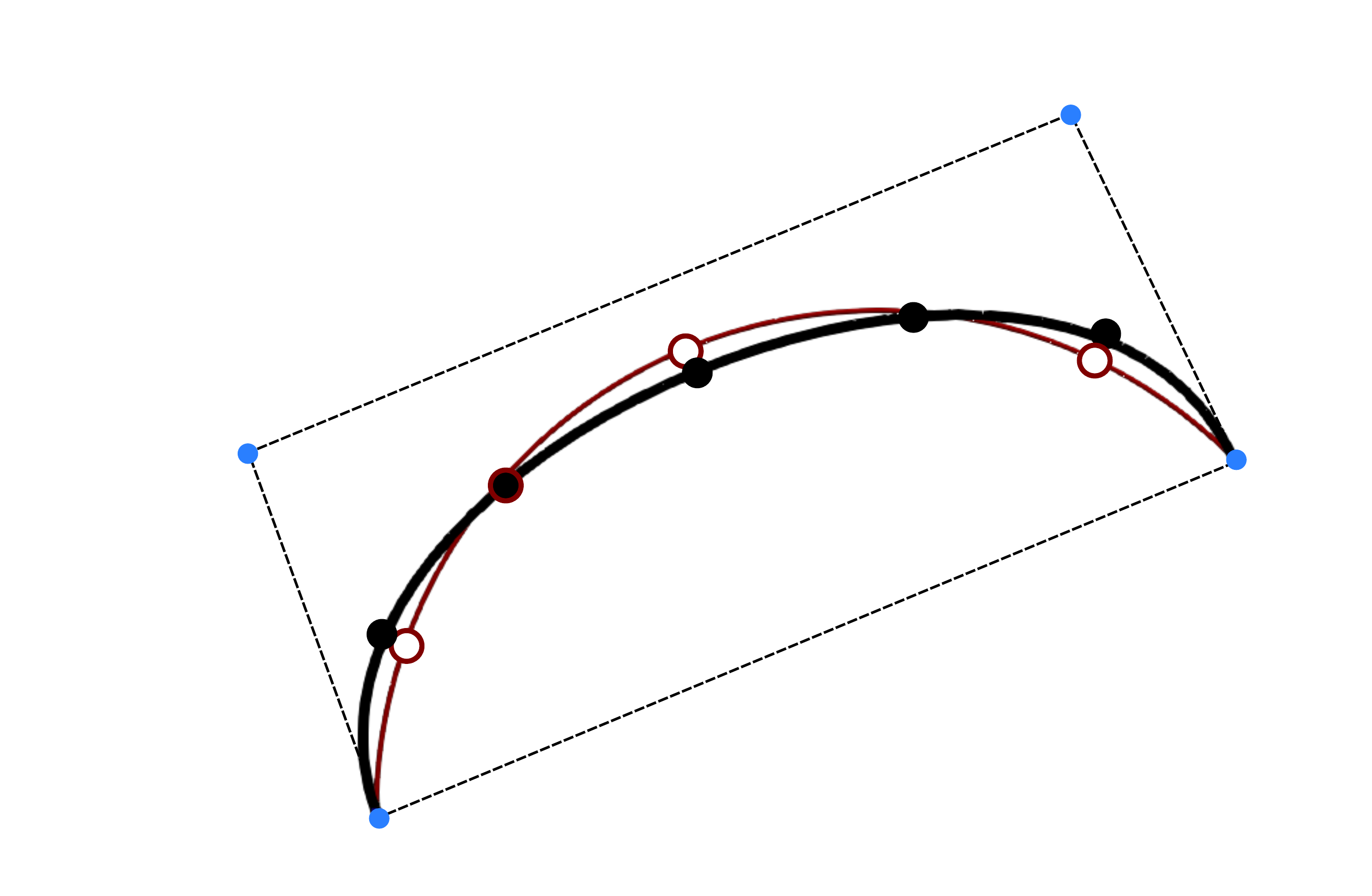}
    \caption{$\mathrm{it} = 5$}
  \end{subfigure}
  \begin{subfigure}{0.24\textwidth}
    \includegraphics[width=\textwidth]{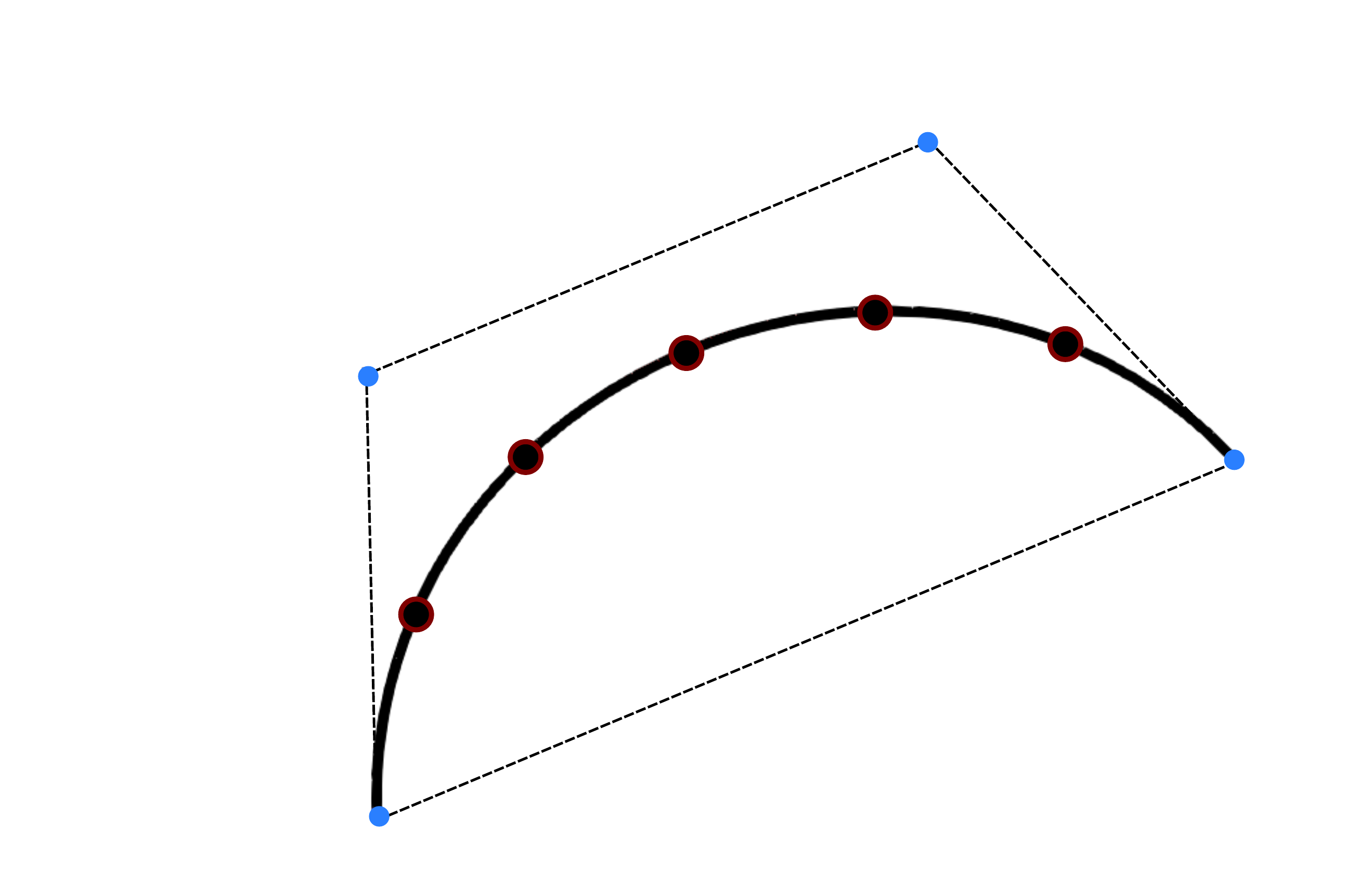}
    \caption{$\mathrm{it} = 15$}
  \end{subfigure}
  \caption[Example of iterative approximation of a curve.]{Example of iterative approximation of a curve. From the linear approximation in (a), we sample the closest points. These sampling points $X^\ell$ are approximated in a B\'ezier curve $\mathcal X^b$ in (b). In (b), we evaluate the approximated seed points $\mathcal T ^b(\mathcal X_\xi)$ to compute the new sampling points. Across the iteration, in (c) and (d), the approximation improves by reducing the distance between the sampling points and the curve. In (d), the sampling points are evenly spaced in the reference space (up to a stopping criterion). Note that this example corresponds to a particular case of the \alg{alg:surf-approx} in which the approximation degree is fixed to $p=3$.}
  \label{fig:draw-surf-approx}
\end{figure}

The parameterization method is described in \alg{alg:surf-approx}. Even though this algorithm is designed for $\mathcal B^\mathrm{cut}_K$, it is general to any nonlinear mesh, namely $\mathcal S$. We first generate a linear approximation of $\mathcal S$ in \alglin{alg:surf-approx-1} with a simplex partition (or a partition into standard polytopes, see \sect{sec:surface-partition}). In
\alglin{alg:surf-approx-2}, we initialize the sampling points through the degree elevation, see \cite{Engvall_2017}. Then, we parametrize the $d$-faces from lower to higher dimension \alglin{alg:surf-approx-3}. Each $d$ is parametrized from lower to higher order \alglin{alg:surf-approx-6}, in simplices, we start with $p^0=d$. The gradual increment of the approximation error improves the convergence. In each iteration, we compute a B\'ezier mesh with the least-squares method then we compute the sampling points from the evaluations of the B\'ezier mesh. When the variation of the error estimator is below a tolerance, we increase the degree. Finally, we return the B\'ezier mesh with the least-squares method of the highest degree.

\begin{algorithm}
  \caption{$\mathtt{parametrize}(\mathcal T,D)$}\label{alg:surf-approx}
  \begin{algorithmic}[1]
    \STATE $ \mathcal T ^\mathrm{lin} \gets \mathtt{simplexify}(\mathcal S,D) $ \label{alg:surf-approx-1}
    \STATE $\mathcal T ^\ell \gets \mathtt{elevation}(\mathcal T ^\mathrm{lin},p^\ell)$\label{alg:surf-approx-2}
    \FOR{ $d  \in \{1,\ldots,D\} $} \label{alg:surf-approx-3}
      \FOR{$p \in \{p^0,\ldots,p^{b}\}$} \label{alg:surf-approx-6}
      \STATE $e^- \gets \infty$; \quad $\Delta e \gets \infty$
      \WHILE{$\Delta e > \epsilon$  }
        \STATE $\mathcal T ^b \gets \mathbf{B}^+_{p} \cdot \mathcal T ^\ell_d$  \label{alg:surf-approx-7}
        \STATE $\mathcal T ^\ell_d \gets \mathtt{sample}(\mathcal T ^b, \mathcal T ^\ell_d, \mathcal S)$  \label{alg:surf-approx-8}
        \STATE $e \gets \| \mathcal T ^b - \mathcal T ^\ell_d  \|_{\ell^2}$; \quad $\Delta e \gets | e - e^- | / e$;\quad $e^- \gets e$ \label{alg:surf-approx-9}
      \ENDWHILE
    \ENDFOR
    \ENDFOR
    \RETURN $\mathbf{B}^+_{p^b} \cdot \mathcal T ^\ell$
  \end{algorithmic}
\end{algorithm}

\subsection{Cell intersection}\label{sec:cell-int}

The intersection of a cell with the domain $K^\mathrm{cut}\doteq K \cap \Omega$ can be represented using its boundary $\partial K ^\mathrm{cut} \doteq (\partial K)^\mathrm{cut} \cup \mathcal B^\mathrm{cut}_K$, as stated in \sect{sec:geo-fe}.
In \alg{alg:boundary-intersection} , we aim to compute $(\partial K)^\mathrm{cut} \doteq \partial K \cap \Omega = \partial K \cap \mathrm{int}(\mathcal B^\mathrm{cut}_K)$ using the linear algorithms from \cite{Badia_2022-stl}. Here, $\mathrm{int}(\mathcal B^\mathrm{cut}_K)$ represents the domain bounded by $\mathcal B^\mathrm{cut}_K$. For this purpose,  we first need a linearized surface $\mathcal B^\mathrm{lin}_K \doteq \mathtt{lin}(\mathcal B_K^\mathrm{cut})$ partitioned into simplices (see \alglin{alg:boundary-intersection-1} and \fig{fig:boundary-intersection-a} to \fig{fig:boundary-intersection-b}).
Since $\mathcal B_K^\mathrm{cut}$ is composed of nonlinear polytopes that are diffeomorphically equivalent to linear and convex polytopes, both linearization and simplex decomposition are straightforward.

The main algorithm in \cite{Badia_2022-stl} provides a linearized partition for $K^\mathrm{cut}$, i.e., $\mathcal T^\mathrm{lin}_K \doteq K \cap \mathcal B^\mathrm{lin}_K$ in \alglin{alg:boundary-intersection-2} and \fig{fig:boundary-intersection-c}.
From $\mathcal T^\mathrm{lin}_K$, we can obtain $(\partial K)^\mathrm{lin}$
by filtering the faces that belong to $\partial K$ in \alglin{alg:boundary-intersection-3}.
Additionally, we merge cells $P\in(\partial K)^\mathrm{lin}$ such that $\Lambda^1(P)$ is composed of cell edges $\varepsilon_K \subseteq \varepsilon \in \Lambda^1(K)$ and surface edges $\varepsilon_{\mathcal B}\in \Lambda^1(\mathcal B_K^\mathrm{lin})$. It is important to note that the surface edges $\varepsilon_{\mathcal B}\in \Lambda^1(\mathcal B_K^\mathrm{lin})$ have bijective map to $\varepsilon_{\mathcal B}^\prime\in \Lambda^1(\mathcal B_K^\mathrm{cut})$, namely $\Phi_{\mathcal B}: \varepsilon_{\mathcal B} \mapsto \varepsilon^\prime_{\mathcal B}$.
Therefore, we recover the nonlinear cell boundary intersection $(\partial K)^\mathrm{cut}$
by replacing the $\varepsilon_{\mathcal{B}}$ edges with the ones parametrized in $\mathcal B_K^\mathrm{cut}$ (see \alglin{alg:boundary-intersection-4} and \fig{fig:boundary-intersection-d}).

\begin{figure}[http]
  \centering
  \begin{subfigure}[b]{0.24\textwidth}
    \includegraphics[trim={12cm 0 17cm 0 0},clip,width=\textwidth]{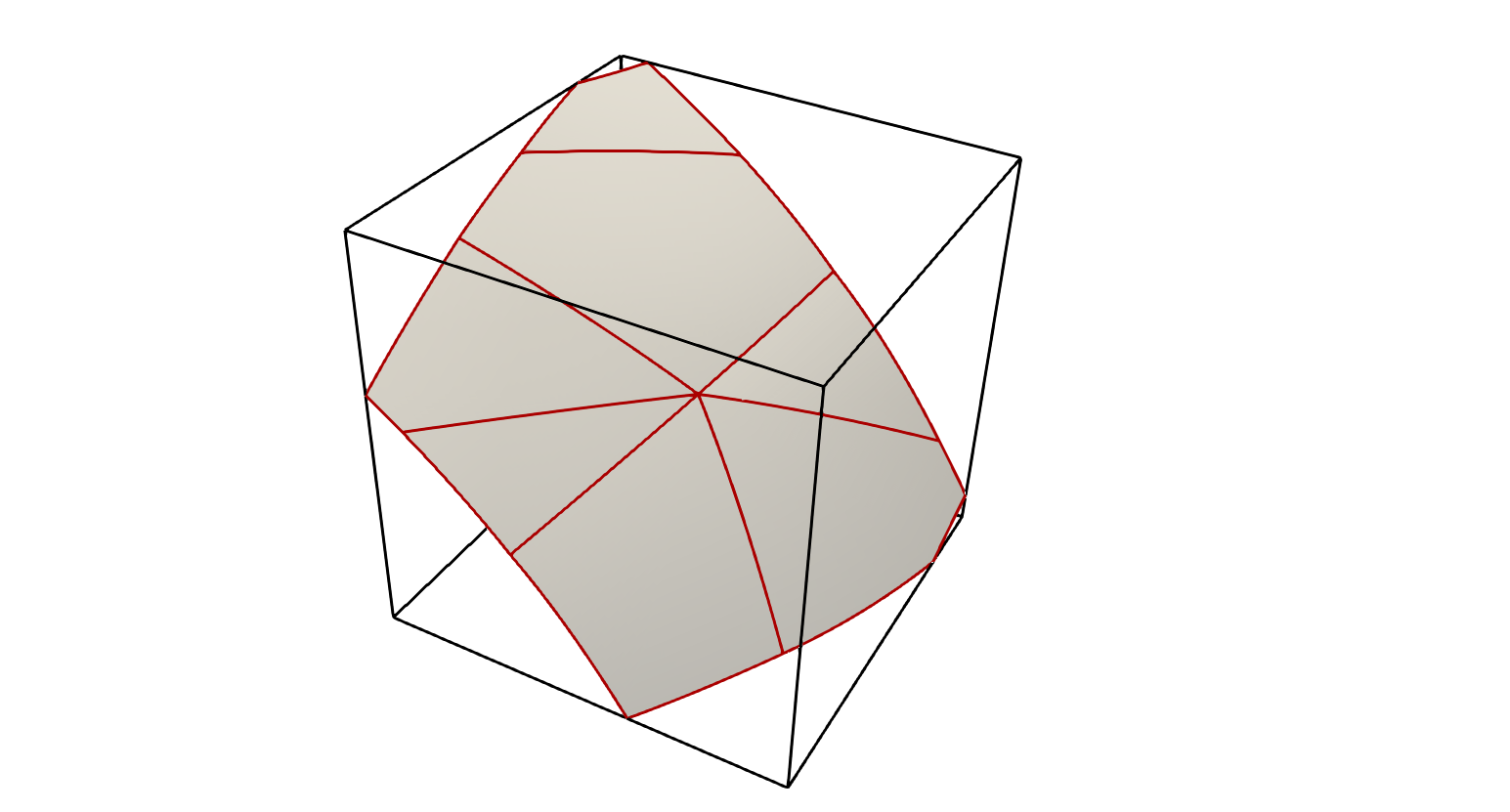}
    \caption{$\mathcal B^\mathrm{cut}_K$}
    \label{fig:boundary-intersection-a}
  \end{subfigure}
  \begin{subfigure}[b]{0.24\textwidth}
    \includegraphics[trim={12cm 0 17cm 0 0},clip,width=\textwidth]{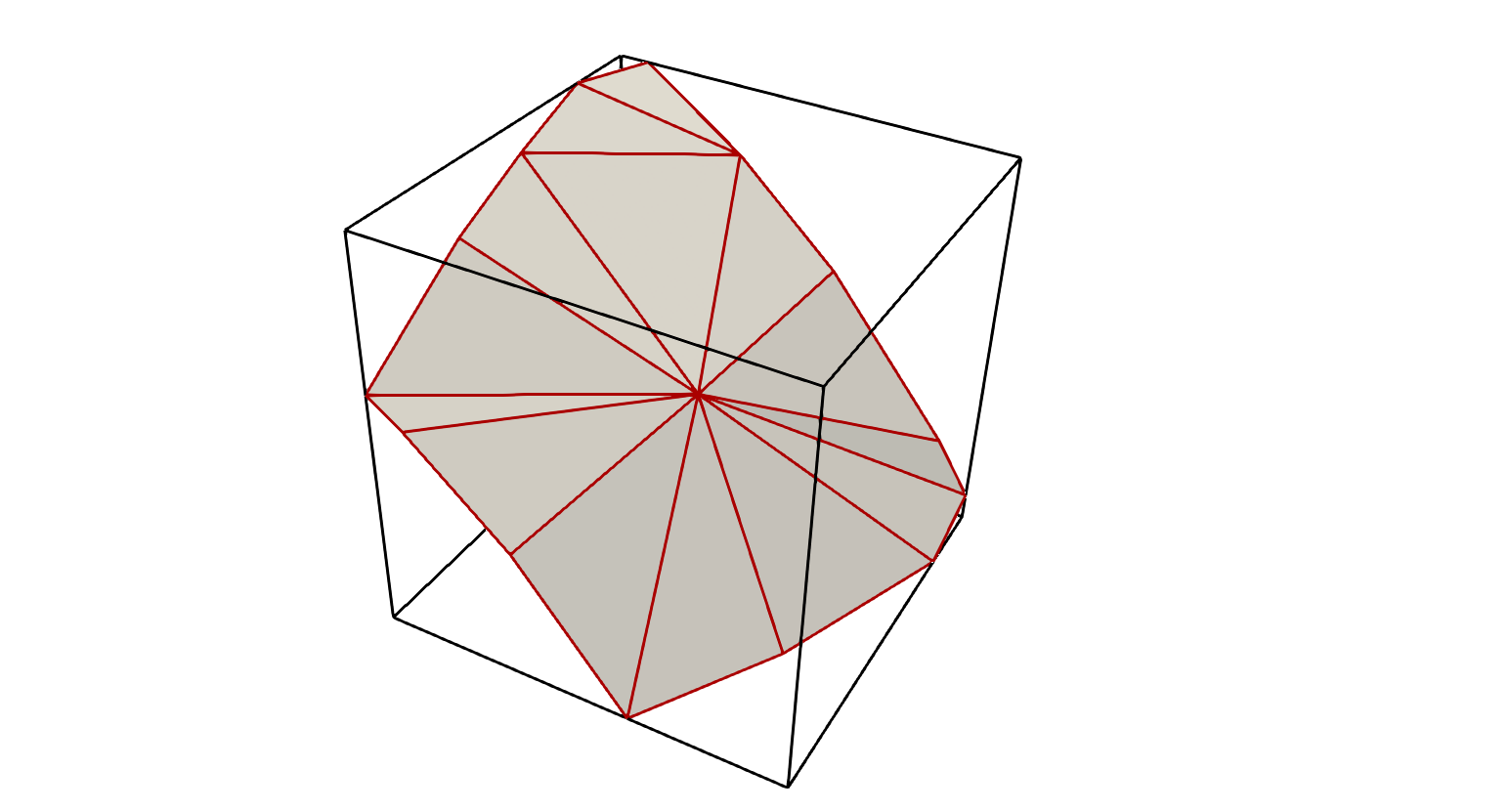}
    \caption{$\mathcal B^\mathrm{lin}_K$}
    \label{fig:boundary-intersection-b}
  \end{subfigure}
  \begin{subfigure}[b]{0.24\textwidth}
    \includegraphics[trim={12cm 0 17cm 0 0},clip,width=\textwidth]{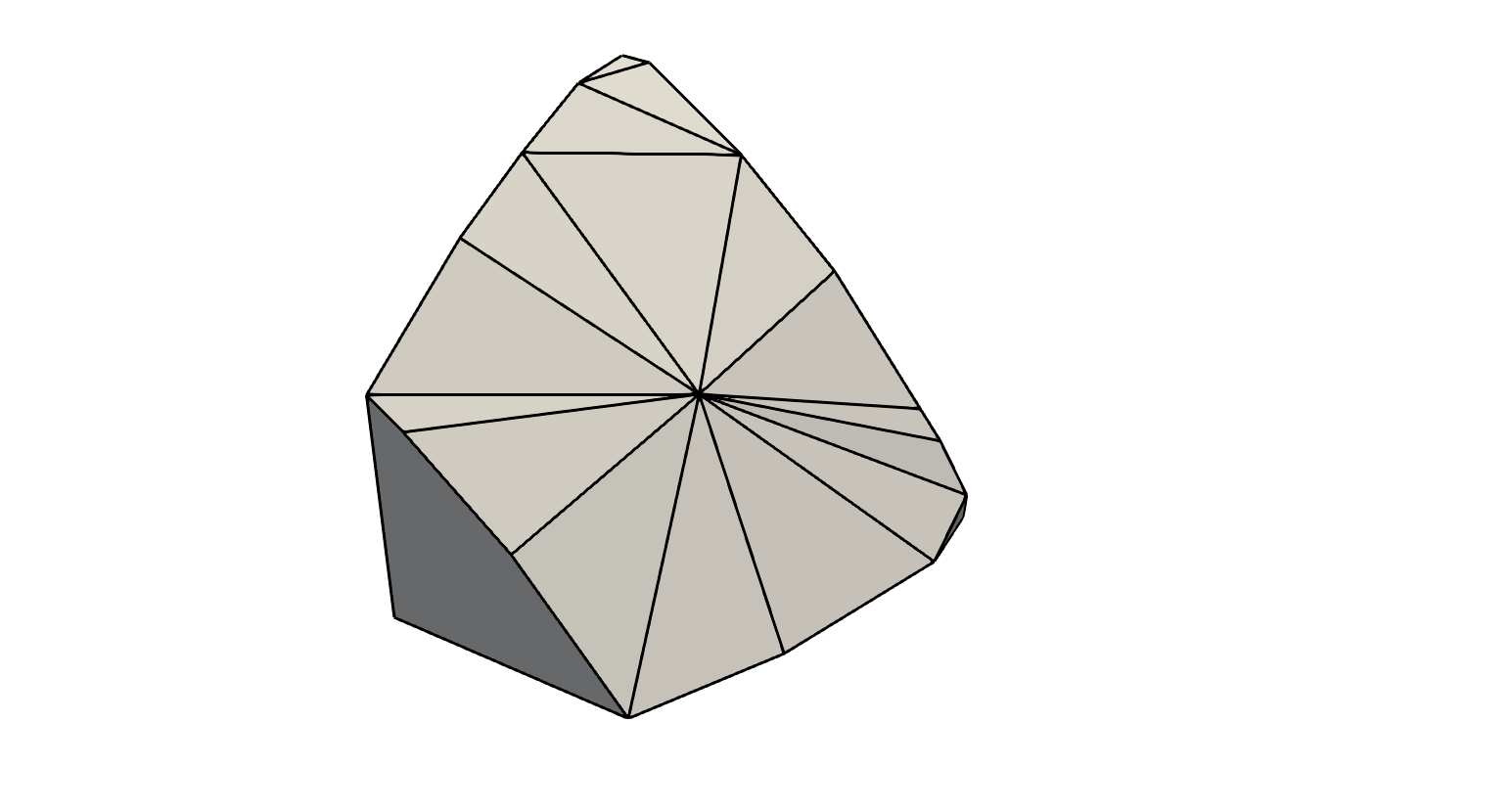}
    \caption{$\mathcal T^\mathrm{lin}_K$}
    \label{fig:boundary-intersection-c}
  \end{subfigure}
  \begin{subfigure}[b]{0.24\textwidth}
    \includegraphics[trim={12cm 0 17cm 0 0},clip,width=\textwidth]{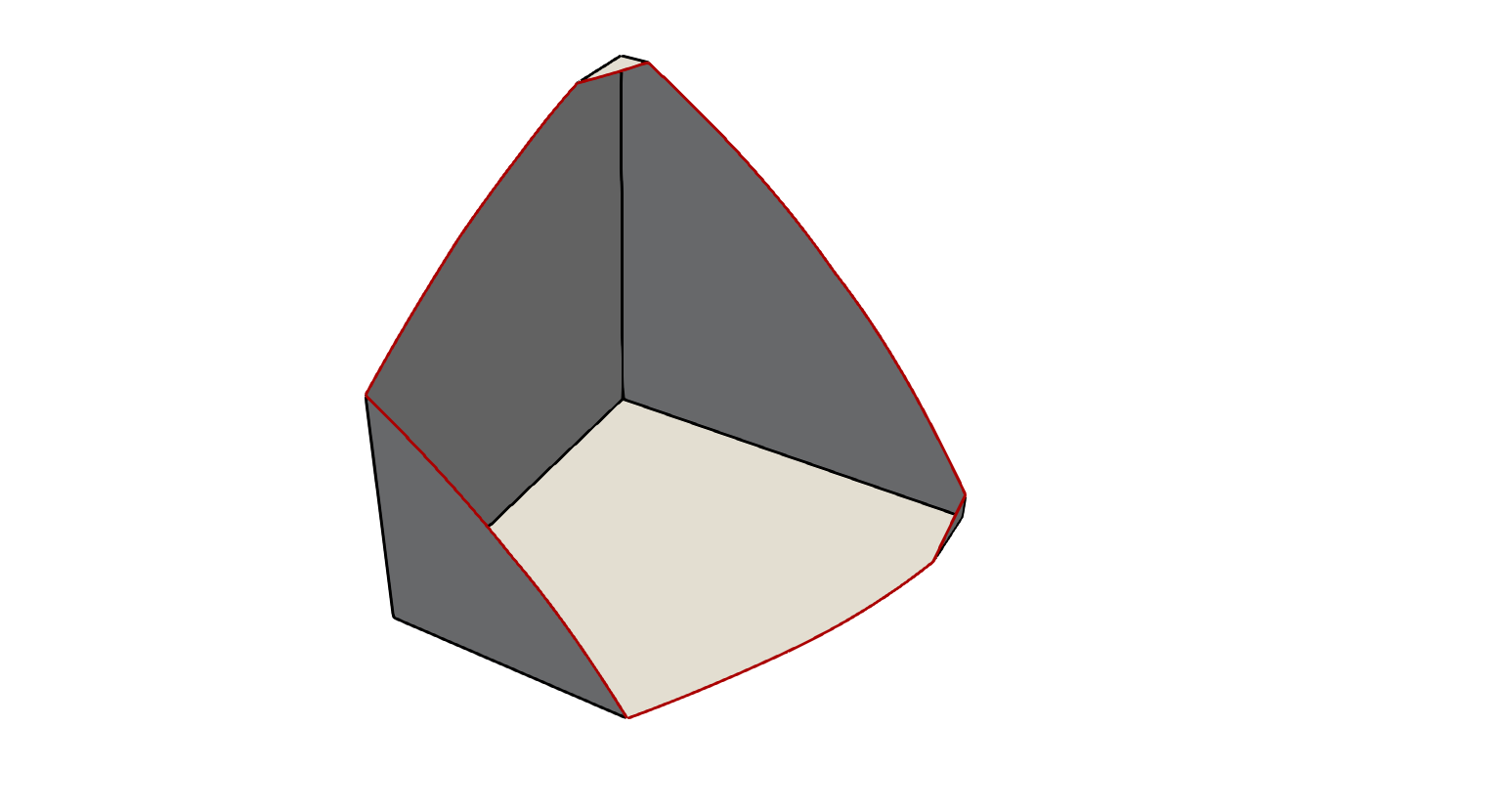}
    \caption{$(\partial K)^\mathrm{cut}$}
    \label{fig:boundary-intersection-d}
  \end{subfigure}
  \caption[Representation of the steps that generate $(\partial K)^\mathrm{cut}$.]{Representation of the steps that generate $(\partial K)^\mathrm{cut}$. First, the surface portion $\mathcal{B}^\mathrm{cut}_K$ (see (a)) is linearized and partitioned into simplices ($\mathcal{B}^\mathrm{lin}_K $ in (b)). Then, in (c), we intersect the background cell $K$ with the half-spaces of the planes of $\mathcal B^\mathrm{lin}$ using linear algorithms, resulting in $\mathcal T^\mathrm{lin}_K$. This linear intersection is possible since the faces $F \in \mathcal B^\mathrm{lin}_K$ are planar. Next, we extract the boundary of $\mathcal T_K^\mathrm{lin}$ that belong to $\partial K$, leading to $(\partial K)^\mathrm{lin}$. Finally, we replace the edges in $(\partial K)^\mathrm{lin}$ by the ones in $\mathcal{ B}^\mathrm{cut}_K$ to obtain $(\partial K)^\mathrm{cut}$ (see (d)).
  The boundary of $K\cap\Omega$ is represented by $(\partial K)^\mathrm{cut}\cup \mathcal B_K^\mathrm{cut}$.}
  \label{fig:boundary-intersection}
\end{figure}

\begin{algorithm}
  \caption{$\partial K \cap \mathrm{int}(\mathcal B_K^\mathrm{cut}) \rightarrow (\partial K)^\mathrm{cut}$}\label{alg:boundary-intersection}
  \begin{algorithmic}[1]
    \STATE $\mathcal B^\mathrm{lin}_K \gets \mathtt{lin}(\mathcal B_K^\mathrm{cut})$
    \label{alg:boundary-intersection-1}
    \STATE $\mathcal T^\mathrm{lin}_K \gets K \cap \mathrm{int}(\mathcal B^\mathrm{lin}_K)$\label{alg:boundary-intersection-2}
    \STATE $(\partial K)^\mathrm{lin} \gets \{ F \in \Lambda^2(P) : P \in T^\mathrm{lin}_K,\ F \subset \partial K \}$
    \label{alg:boundary-intersection-3}
    \RETURN $(\partial K)^\mathrm{cut} \gets \mathtt{replace\_edges}(\ (\partial K)^\mathrm{lin},\  \Phi_{\mathcal B} )$
    \label{alg:boundary-intersection-4}
  \end{algorithmic}
\end{algorithm}

The parametrization of the edges of $(\partial K)^\mathrm{cut}$ is extracted from $\mathcal B^\mathrm{cut}_K$. However, the surface parametrization requires a surface partition into standard polytopes.
We utilize the methods described in \sect{sec:surface-partition} to build such a partition. In this case, the \ac{aa} critical points and \ac{aa} partitions are defined in the reference space of $P \in (\partial K)^\mathrm{cut}$, i.e., in the reference space of $F\in \Lambda^2(K)$. It is worth noting that, in this partition,
the intersections of the nonlinear edges $\varepsilon \in \Lambda^1(\mathcal B_K^\mathrm{cut})$ with \ac{aa} lines are computed in the reference space of $\varepsilon$. This fact ensures local conformity.

\subsection{Global algorithm}\label{sec:global-alg}
The algorithms presented in the previous sections are defined cell-wise. In this section, we describe an algorithm that allows us to integrate \ac{fe} functions in the whole domain and its boundary. Each cell of the background mesh is intersected by the domain bounded by a high-order B\'ezier surface mesh. Therefore, to proceed, we first need to generate this surface mesh from the given \ac{brep}, e.g.,
analytical functions or \ac{cad} representation.
From \ac{cad} models we can generate high-order surface meshes with a third-party library, e.g., \texttt{gmsh} \cite{Geuzaine_2009}. We can convert these meshes into B\'ezier patches with a B\'ezier projection operation or a least-squares approximation.

The intersection of a background cell $K\in \mathcal{T}$ with the whole surface mesh $\mathcal{B}$ would be inefficient. Therefore, we restrict the surface mesh to the faces colliding with the cell $K$. We perform this operation in a preprocessing step similar to \cite{Badia_2022-stl}. The interrogations are approximated by linear operations on the convex hull of each B\'ezier patch.  We can accelerate these queries with a hierarchy of simpler bounding domains, e.g., \acp{aabb}, \acp{obb} \rev{ \cite{ericson2004real} } or \acp{kdop} \cite{Klosowski_1998, Xiao_2019b}. We note that these operations prioritize speed over accuracy as the subsequent operations can deal with false positives in the surface restriction.

The global algorithm is described in \alg{alg:global} and demonstrated through an example in \fig{fig:global-alg}. First, in \alglin{alg:global-alg-2}, the B\'ezier mesh is extracted from the given representation, e.g., \ac{cad} model in \fig{fig:global-alg-a}.
For each cell $K\in \mathcal{T}$, we restrict the faces $F\in \mathcal B$ touching $K$,
see \fig{fig:global-alg-b} step (i) and \alglin{alg:global-alg-4}.
The surface $\mathcal B_K$
is intersected by the walls of $K$, see step (ii) and \alglin{alg:global-alg-5}. Then, the cell boundary is intersected by the half-spaces of the intersected surface $\mathcal B_K^\mathrm{cut}$ (\alglin{alg:global-alg-6}). Both boundary intersections $\mathcal B_K^\mathrm{cut} \cup (\partial K)^\mathrm{cut}$ represent the boundary of the cut cell $K^\mathrm{cut}$, see step (iii). The parameterization of these is stored for integration purposes (\alglin{alg:global-alg-7}).

\begin{algorithm}
  \caption{$\mathcal{T} \cap \mathrm{int}(\mathcal{B}^\mathrm{CAD})$}
  \label{alg:global}
\begin{algorithmic}[1]
  \STATE $\mathcal{T}^\textrm{cut} \gets \emptyset$\label{alg:global-alg-1}
  \STATE $\mathcal{B} \gets \mathtt{extraction}(\mathcal{B}^\mathrm{CAD})$\label{alg:global-alg-2}
  \FOR{$K\in \mathcal{T}$}
    \STATE $\mathcal{B}_K \gets \mathtt{restrict}(\mathcal{B},K)$\label{alg:global-alg-4}
    \STATE $\mathcal{B}^\mathrm{cut}_K \gets  \mathcal{B}_K \cap K$\label{alg:global-alg-5}
    \STATE $ (\partial K)^\mathrm{cut} \gets \partial K \cap \mathrm{int}( \mathcal{B}_K^\mathrm{cut})$\label{alg:global-alg-6}
    \STATE $\mathcal{T}^\textrm{cut} \gets \mathcal{T}^\textrm{cut} \cup \mathtt{parametrize}(\mathcal{B}_K^\mathrm{cut}) \cup \mathtt{parametrize}( (\partial K)^\mathrm{cut} )$\label{alg:global-alg-7}
  \ENDFOR
  \RETURN $\mathcal{T}^\textrm{cut}$
\end{algorithmic}
\end{algorithm}

\begin{figure}[http]
  \centering
  \begin{subfigure}[b]{0.5\textwidth}
    \includegraphics[width=\textwidth]{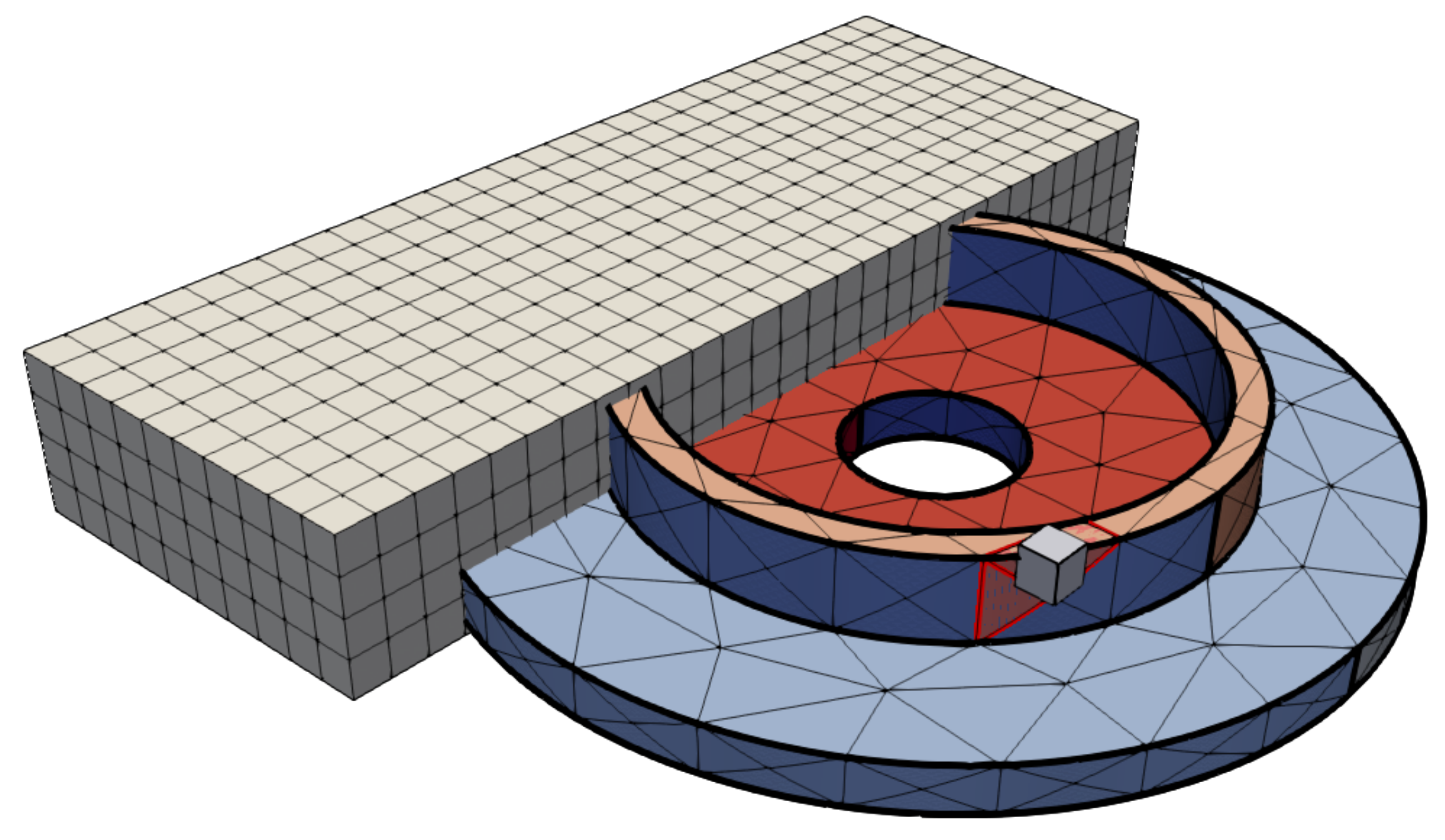}
    \caption{CAD and cell sample}
    \label{fig:global-alg-a}
  \end{subfigure}
  \begin{subfigure}[b]{0.44\textwidth}
    \centering
    \includefig{0.7\textwidth}{local_cut_alg}
    \caption{Local main steps}
    \label{fig:global-alg-b}
  \end{subfigure}
  \caption[CAD geometry and the clipping steps for a background cell.]{In (a) the CAD geometry (colored CAD entities) is first approximated into a high-order surface mesh. In each cell of the background mesh (b), in (i) we restrict the nonlinear faces touching the background cell. Then, in (ii) we clip the nonlinear faces by the background cell walls. Finally, in (iii) we build the polytopal representation of the intersected $\partial K \cap \Omega$. Afterward, we prepare the polytpes for the integration, e.g., with a surface parametrization.}
  \label{fig:global-alg}
\end{figure}

Once we have classified the non-intersected cells as described in  \cite{Badia_2022-stl}, we can proceed with the integration over the entire domain. The integration strategy depends on the dimension of the parametrization used in \alglin{alg:global-alg-7}.
In the methods described in this section, we utilize high-order $2$-faces for parametrization, enabling us to employ Stokes theorem for integration \cite{Chin_2020} in combination with moment-fitting methods \cite{Badia_2022-highorder}.
However, when using edge parametrizations, it is necessary to employ Stokes theorem over trimmed faces \cite{Gunderman_2021}.

\section{Numerical experiments}\label{sec:experiments}
In the numerical experiments, we aim to demonstrate the robustness of the method and optimal convergence of the geometrical and \ac{pde} solution approximations. First,
we demonstrate $hp$-convergence of the intersection and parametrization methods in \sect{sec:conv-approx}. Then, in \sect{sec:robustness-approx}, we show the robustness of the method concerning the relative position of the background mesh and the geometry. Next, in \sect{sec:fe-analysis}, we show the optimal $hp$-convergence of the \ac{fe} analysis for a manufactured solution of the Poisson equation and an elasticity benchmark. Finally, we show the application of the method in real-world examples on \ac{cad} described in terms of \ac{step} files.

\subsection{Experimental setup}
The numerical experiments have been performed on Gadi, a high-end supercomputer at the NCI (Canberra, Australia) with 4962 nodes, 3074 of them
powered by a 2 x 24 core Intel Xeon Platinum 8274 (Cascade Lake) at 3.2 GHz and 192 GB RAM. The algorithms presented in this work have been implemented in the Julia programming language \cite{Julia-2017}.
The unfitted FE computations have been performed using the Julia FE library
\texttt{Gridap.jl} \cite{Badia_2020-gridap,Verdugo_2022} version 0.17.17 and the extension package for unfitted methods \texttt{GridapEmbedded.jl} version 0.8.1 \cite{GridapEmbedded-jl}. \texttt{STLCutters.jl} version 0.1.6 \cite{Martorell_STLCutters_2021} has been used to compute intersection computations on \ac{stl} geometries. The computations of the convex hulls have been performed with \texttt{DirectQhull.jl}  version 0.2.0 \cite{DirectQhull-jl}, a Julia wrapper of the \texttt{qhull} library \cite{Barber_1996}.

The \ac{cad} geometry preprocess is done in \texttt{gmsh} library  \cite{Geuzaine_2009} by using the \texttt{GridapGmsh.jl} Julia wrapper \cite{GridapGmsh-jl}. The \texttt{gmsh} library calls \ac{occt} \cite{occt}  as a parser for \ac{step} files. The sample geometries are extracted from \cite{occt} and \cite{grabcad}.

\subsection{Approximation and parametrization analysis}\label{sec:conv-approx}
In this section, we analyze the approximation of the geometry by a B\'ezier mesh $\mathcal B$. We use a sphere as test geometry. We define the sphere by the boundary of a reference cube bumped by $f(\hat {\pmb{x}}) = \pmb{x}_0 + R(\hat{\pmb{x}} - \hat{\pmb{x}}_0) /\| \hat{\pmb{x}} - \hat{\pmb{x}}_0 \|$ where $\hat{ \pmb{x}}_0$ is the center of the reference cube, $ \pmb{x}_0$ is the center of the sphere and $R=1$ is the radius of the sphere in our experiments. In the following experiments, we consider a triangular surface mesh for the reference cube boundary obtained from the convex decomposition of a Cartesian mesh. We define the relative cell size as $h_\mathrm{surf} = 1 / n_\mathrm{surf}$ where $n_\mathrm{surf}$ is the number of elements in each Cartesian direction. The surface partition of the sphere is approximated by a B\'ezier mesh $\mathcal B$ of order $p$ by using a least-squares operation. This sphere is embedded in a cube of side $L=3$. This domain is discretized with a background Cartesian mesh $\mathcal T$ of relative cell size $h=1/n$ where $n$ is the number of cells in each direction. During the intersection of the surface, the relative chord of the edges is bounded to $\hat\delta_{\max}<0.1$. The surfaces are computed by integrating a unit function on the high-order surface mesh, while we use Stokes theorem in the volume computation.

In the surface approximation experiments of \fig{fig:conv-approx-a}, we test a matrix of surface cell sizes $h_\mathrm{surf}=2^{-\alpha}$, with $\alpha=2,..,5$ and a range of B\'ezier orders $p=1,..,7$. We compute the B\'ezier approximation error as the difference between the surface of $\mathcal B$ and the analytical surface. We observe that the convergence rate of the approximation errors is $p+1$ for odd orders and $p+2$ for even orders. In \cite{Antolin_2019}, one can observe similar convergence rates of the surface and volume errors. \rev{It is worth noting that the convergence rates are affected by machine precision for $p=6,7$ in \fig{fig:conv-approx-a}.}

In \fig{fig:conv-approx-b}, the surface mesh $\mathcal B$ is intersecting by each background cell $K\in\mathcal T$. This generates $\mathcal{B}^\mathrm{cut}$. These intersections are performed for surfaces $\mathcal B$ of order $p$ and parametrized with order $p$, where $p=2,...,6$. In \fig{fig:conv-approx-d}, we present the results of intersecting each background cell $K\in\mathcal{T}$ with the domain bounded by $\mathcal B$, resulting in $\mathcal T^\mathrm{cut}$. In this case, the approximation of $\mathcal B$ and surface parametrization are computed with order $p=2,3,4$ for computational reasons. In both cases, the cell size of the background mesh is fixed to $h=2^{-2}$. We observe in \fig{fig:conv-approx-b} and \fig{fig:conv-approx-d} that the convergence rates of the cut surface error, $|\mathtt{surf}(\mathcal B^\mathrm{cut}) - \mathtt{surf}(\mathcal B)|$, and the cut volume error, $|\mathtt{vol}(\mathcal T^\mathrm{cut}) - \mathtt{vol}(\mathcal B)|$, are similar to the convergence rates of \fig{fig:conv-approx-a}.

\begin{figure}[http]
  \centering
  \includegraphics[width=\textwidth]{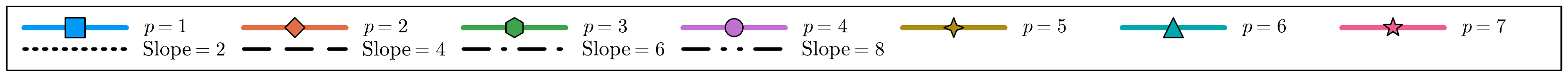}
  \begin{subfigure}{0.24\textwidth}
      \includegraphics[width=\textwidth]{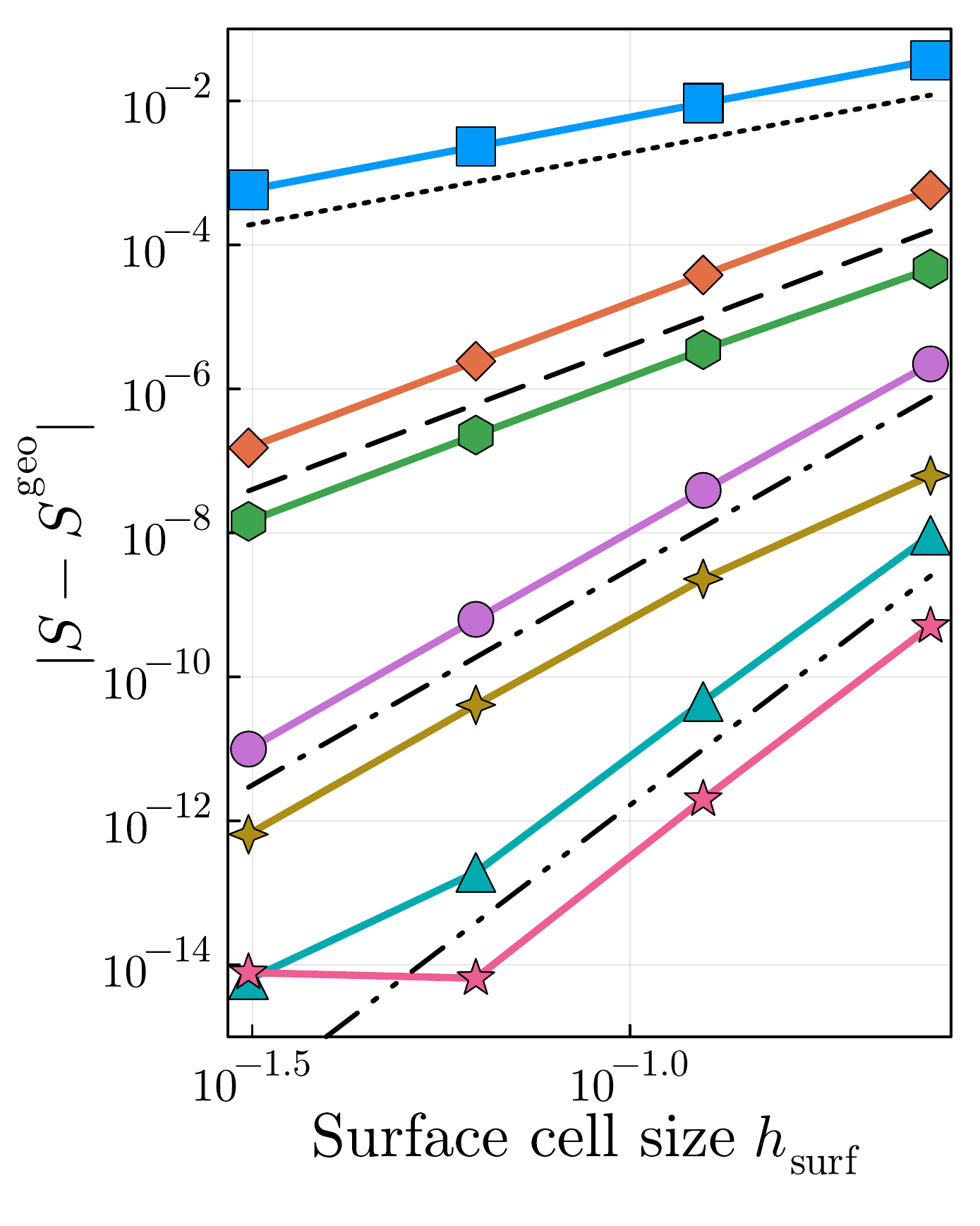}
      \caption{}
      \label{fig:conv-approx-a}
  \end{subfigure}
  \begin{subfigure}{0.24\textwidth}
      \includegraphics[width=\textwidth]{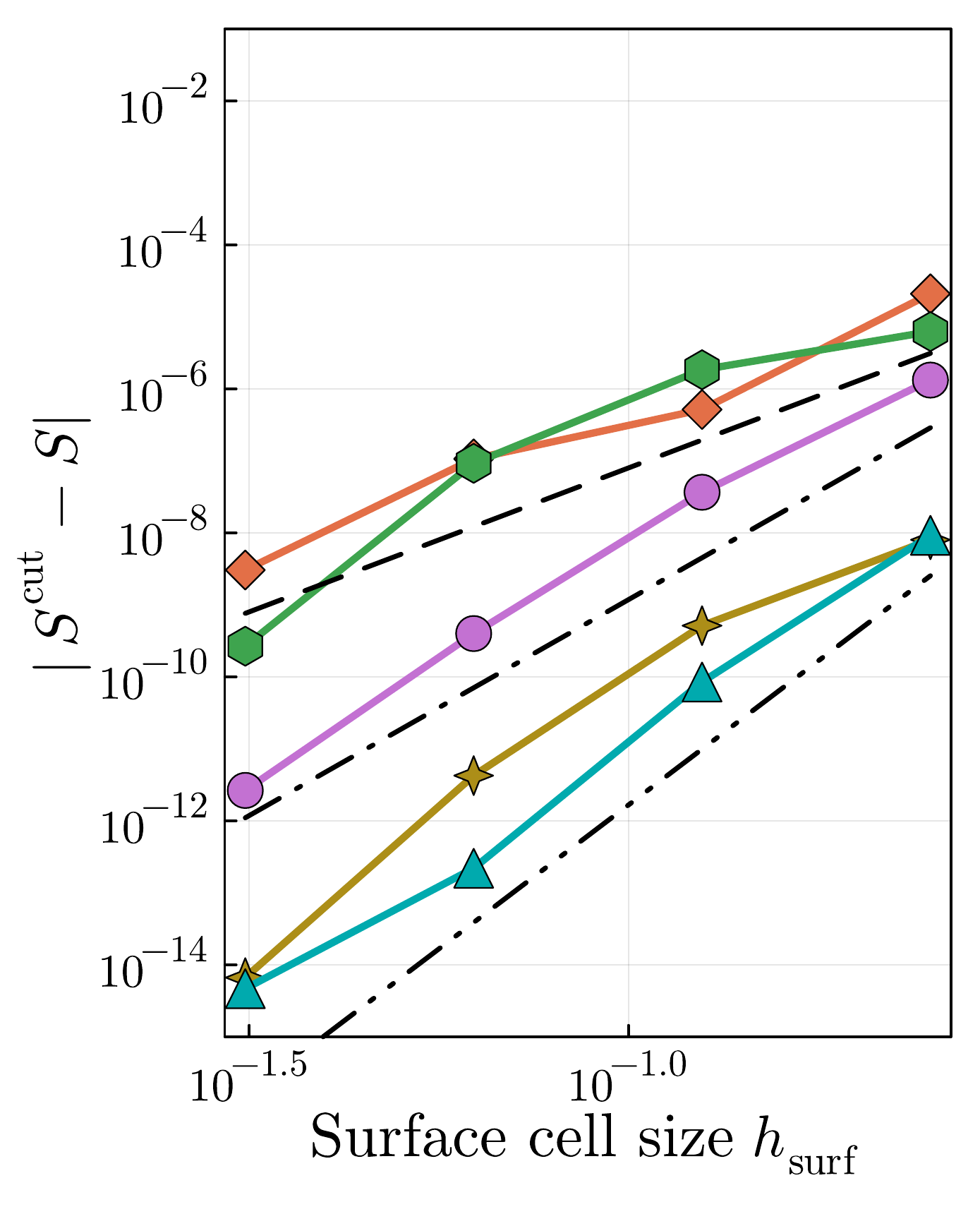}
      \caption{}
      \label{fig:conv-approx-b}
  \end{subfigure}
  \begin{subfigure}{0.24\textwidth}
      \includegraphics[width=\textwidth]{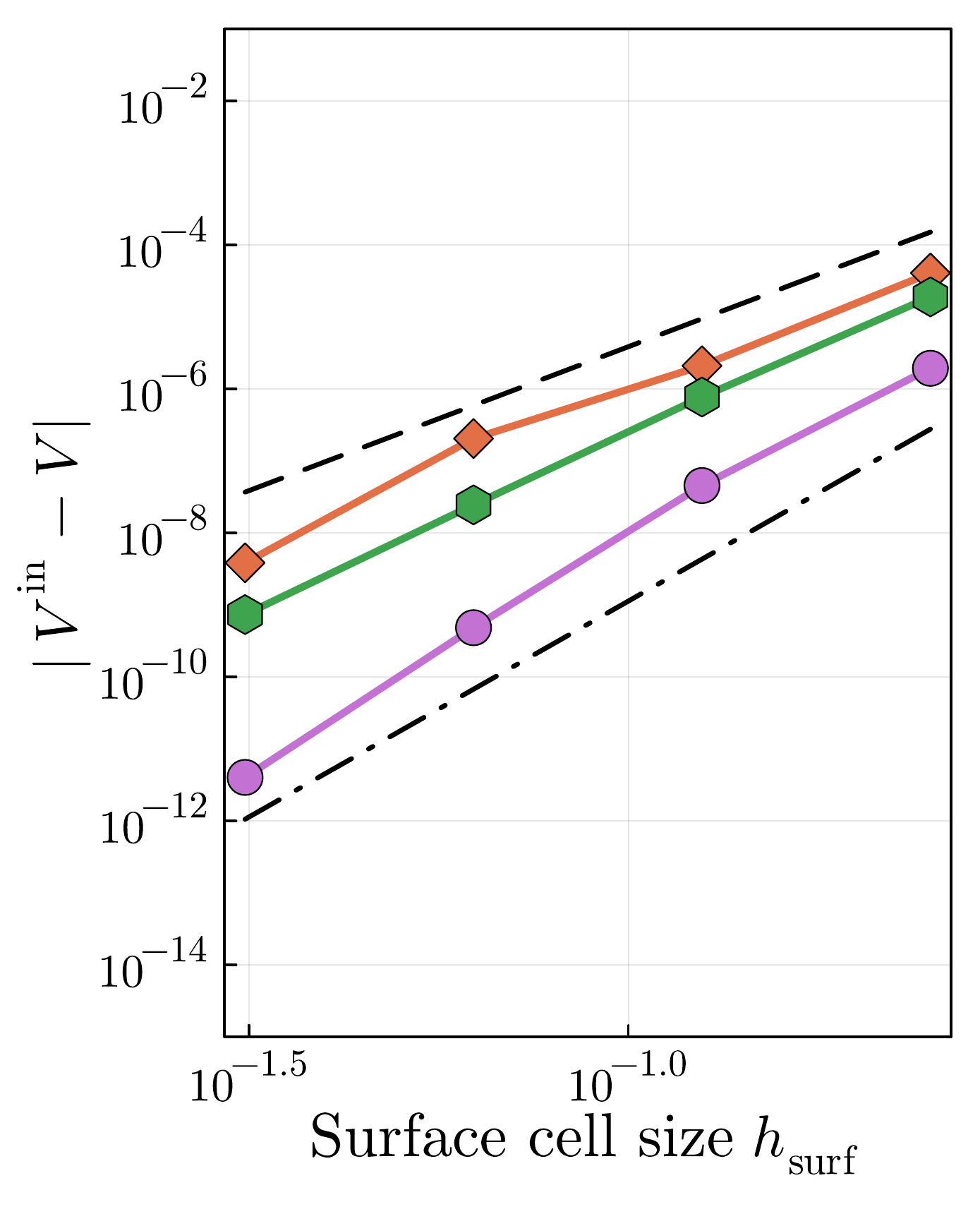}
      \caption{}
      \label{fig:conv-approx-d}
  \end{subfigure}
  \caption[Surface and volume errors of the approximation and the reparametrization of a sphere.]{Surface and volume errors of the approximation and the reparametrization of a sphere. In (a), the surface error of approximating the geometry
  $\mathcal B^\mathrm{geo}$ into a B\'ezier mesh $\mathcal B$, $| S - S^\mathrm{geo} |$ where
  $S^\mathrm{geo}=\mathtt{surf}(\mathcal B^\mathrm{geo})$ and $S=\mathtt{surf}(\mathcal B)$.
  In (b), the parametrization error of $\mathcal{B}^\mathrm{cut}$, the intersection of $\mathcal B$ with each background cell $K\in\mathcal T$,
  $|S^\mathrm{cut} - S|$
  where
  $S^\mathrm{geo}=\mathtt{surf}(\mathcal B^\mathrm{geo})$.
  In (c), the volume integration error of $\mathcal T ^\mathrm{in}$, the intersection of each background cell $K\in\mathcal T$ with the domain bounded by $\mathcal B$,
  $|V^\mathrm{in}-V|$ where
  $V^\mathrm{in}=\mathtt{vol}(\mathcal T ^\mathrm{in})$ and $V=\mathtt{vol}(\mathcal B)$. }
\end{figure}

\subsection{Robustness experiments}\label{sec:robustness-approx}

In this section, we demonstrate robustness concerning the relative position of $\mathcal B$ and $\mathcal T$. For these experiments, we consider a sphere with the same geometrical setup as in \sect{sec:conv-approx}. We perform the intersections with the same relative background cell size and surface cell size $h=h_\mathrm{surf}=2^\alpha$, $\alpha = 3,4$. We test for the approximation and parametrization orders $p=2,...,4$. In each combination, we shift the geometry from the origin a distance $\Delta x = (i/\rev{500}) h$, with $i=1,...,\rev{500}$.

In \fig{fig:disp-approx-a} and \fig{fig:disp-approx-b}, we observe the surface and volume variation of the surface error, $|\mathtt{surf}(\mathcal B^\mathrm{cut}) - \mathtt{surf}(\mathcal B)|$, and the volume error, $|\mathtt{vol}(\mathcal T^\mathrm{in}) - \mathtt{vol}(\mathcal B)|$, resp. Here,  $\mathcal T^\mathrm{in} \doteq \mathcal T ^\mathrm{cut} \cup \{K \in \mathcal T :  K \cap \partial \Omega = \emptyset \}$ is the physical volume mesh.
The variations of the surface error are approximately two orders of magnitude and the variations of volume are one order of magnitude.

However, in \fig{fig:disp-approx-c}, we observe a machine precision error in the domain volume error $|\mathtt{vol}(\mathcal T^\mathrm{in}) + \mathtt{vol}(\mathcal T^\mathrm{out}) - \mathtt{vol}(\mathcal T^\mathrm{bg})|$, where $\mathcal T^\mathrm{out} \doteq \mathcal{T}^\mathrm{bg} \setminus \mathcal{T}^\mathrm{in}$ is complementary of  $\mathcal{T}^\mathrm{in}$. The low error is due to the conformity between inside and outside polytopes of the cut cell described in \sect{sec:cell-int}.

\begin{figure}[http]
  \centering
  \includegraphics[width=0.4\textwidth]{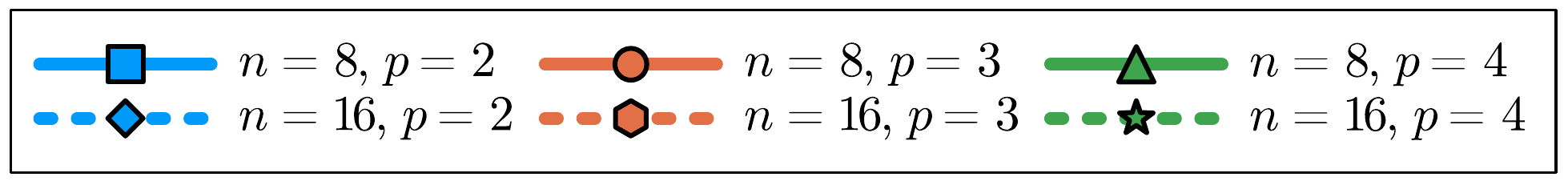}

    \begin{subfigure}{0.24\textwidth}
        \includegraphics[width=\textwidth]{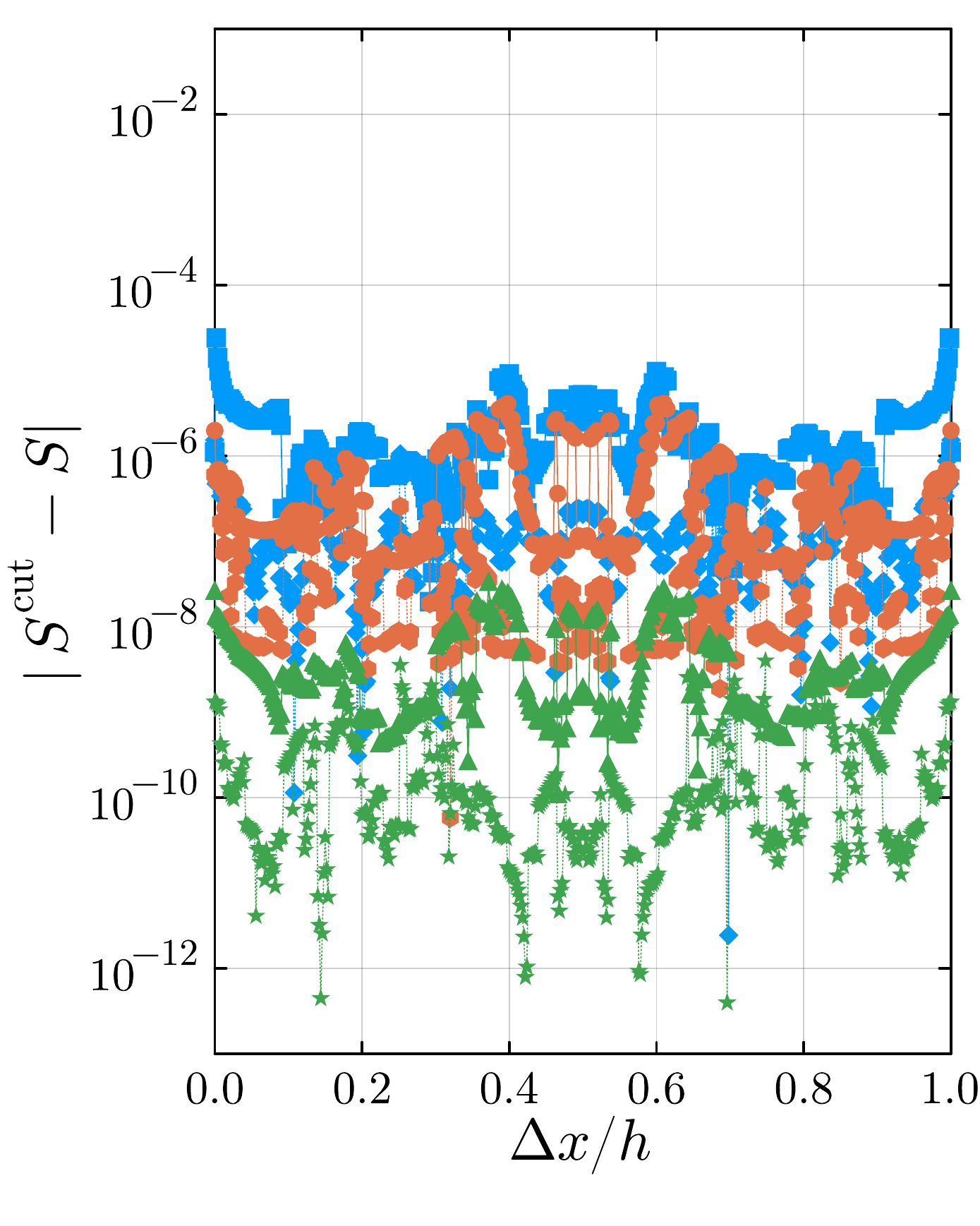}
        \caption{}
        \label{fig:disp-approx-a}
    \end{subfigure}
    \begin{subfigure}{0.24\textwidth}
        \includegraphics[width=\textwidth]{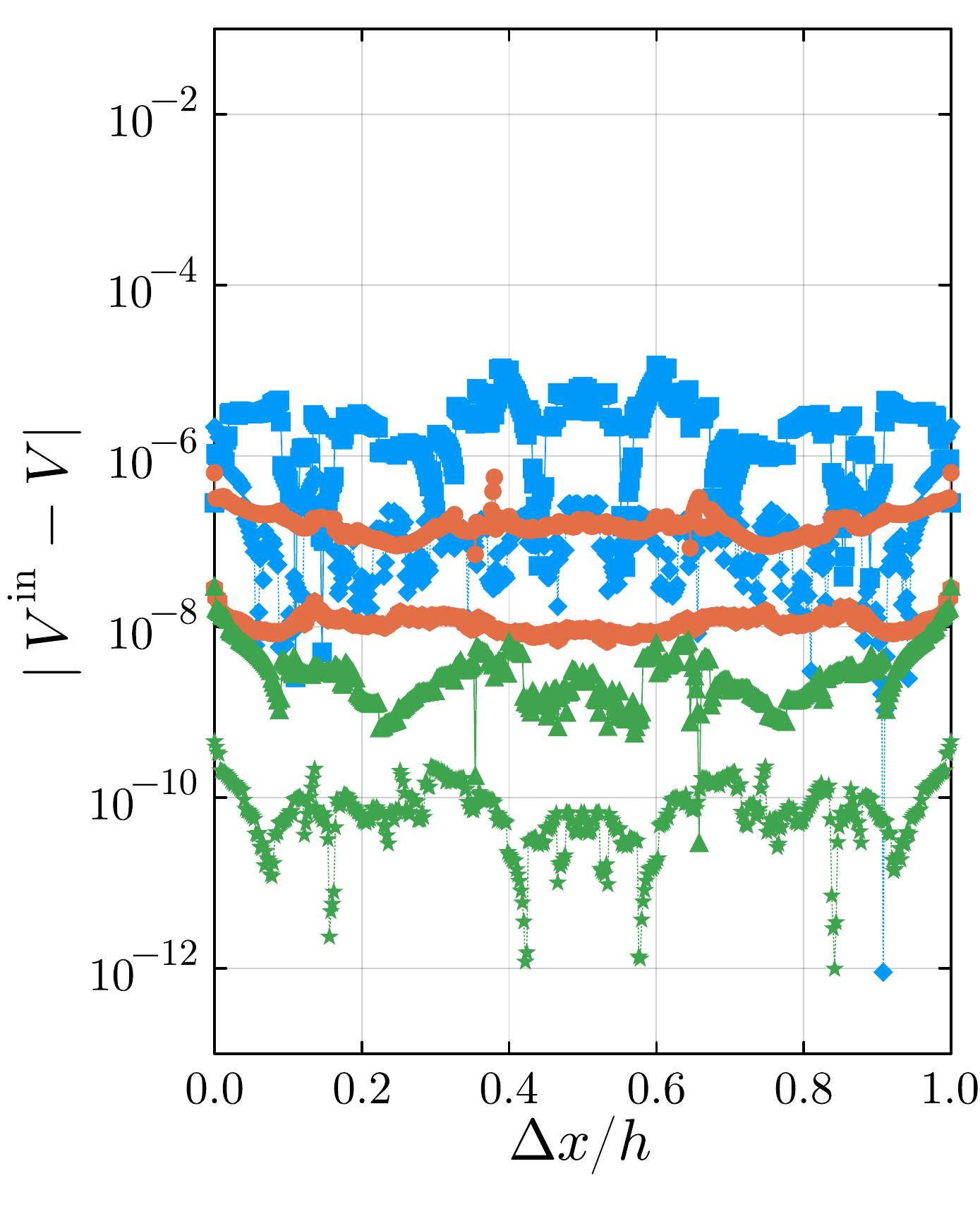}
        \caption{}
        \label{fig:disp-approx-b}
    \end{subfigure}
    \begin{subfigure}{0.24\textwidth}
        \includegraphics[width=\textwidth]{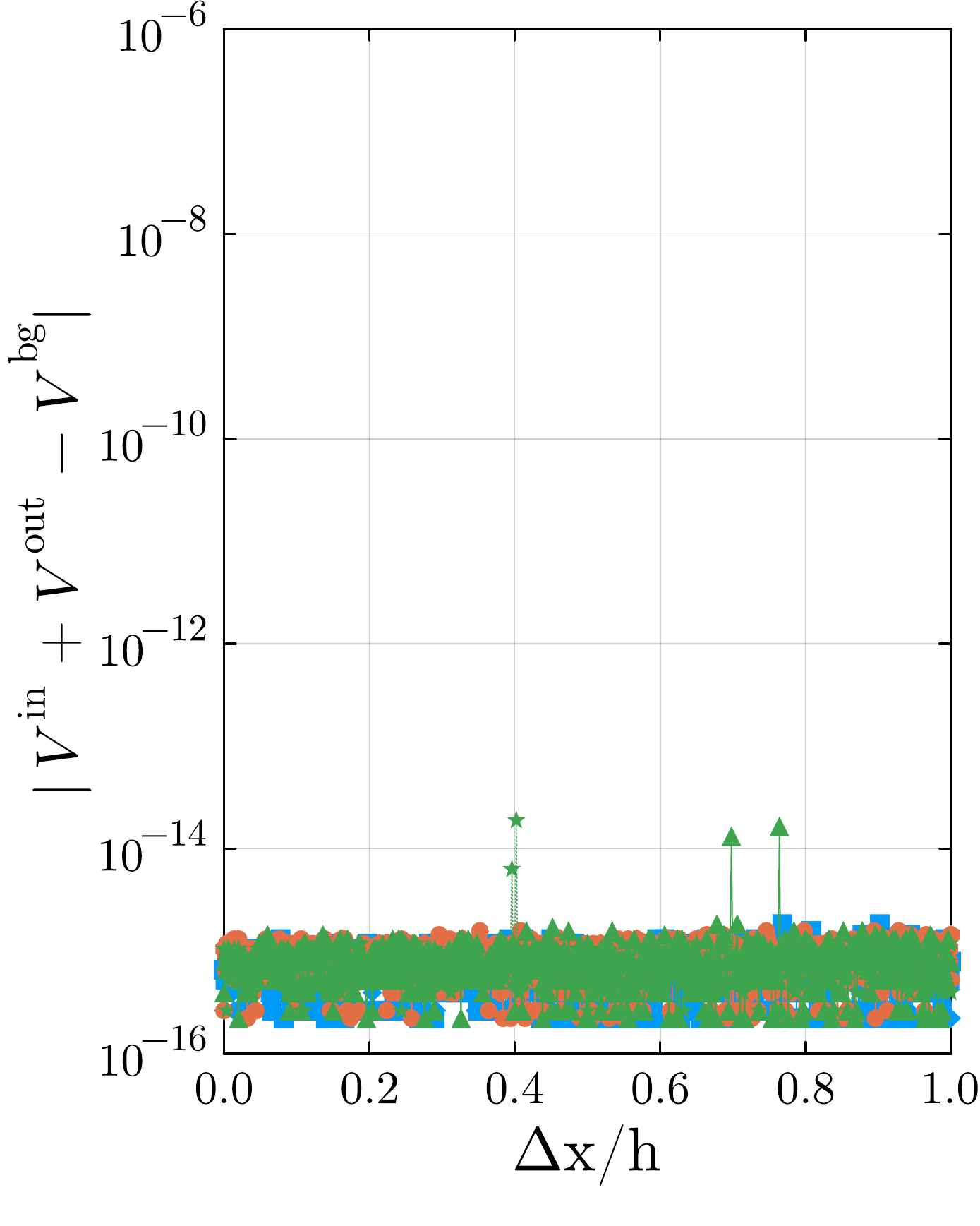}
        \caption{}
        \label{fig:disp-approx-c}
    \end{subfigure}
  \caption[Demonstration of robustnes concerning the relative position of $\mathcal B$ and $\mathcal T$.]{Demonstration of robustnes concerning the relative position of $\mathcal B$ and $\mathcal T$. Both plots have surface and volume errors when shifting a sphere in the embedded domain. Even though the surface and volume errors in (a) and (b) show variations, the errors are bounded. In (c), the domain volume errors are close to machine precision because the inside and outside polytopes of the cut cell are conforming, see \sect{sec:cell-int}}
  \label{fig:disp-approx}
\end{figure}

\subsection{Unfitted FE experiments}\label{sec:fe-analysis}
In these experiments, we explore the behavior of the intersection algorithms by solving \acp{pde} with unfitted \acp{fe} on two analytical benchmarks and two realistic examples. We analyze a Poisson equation with a manufactured solution on a sphere and a spherical cavity benchmark with an analytically derived solution \cite{Bower_2012}. Both experiments are performed with the geometrical setup described in \sect{sec:conv-approx}. However, in the spherical cavity experiments, the domain is the outside of an octant of the sphere. We compute the \ac{fe} solutions using \ac{agfem} with modal $\mathcal C^0$ basis and moment-fitting quadratures, even though the proposed framework can be used with other unfitted methods like ghost penalty stabilization \cite{burman2010ghost}. 
\rev{
We follow the methodology in \cite[Sec. 6.2]{Badia_2022-highorder} for the computation of  moment-fitting quadratures; we leverage an extension of Lasserre's method in \cite{Chin2015}. It is shown in \cite[Fig. 7]{Badia_2022-highorder} that the computed moment-fitted quadratures are much more efficient than marching cube ones while offering the same level of accuracy (at least) up to $m = 5$.
}
In \ac{agfem}, we aggregate all cut cells.

In the Poisson experiments, we consider Dirichlet boundary conditions and a forcing term that satisfies the manufactured solution $u(x,y,z) = x^a+y^a$, with $a=6$. We compute the convergence tests for $h = h_\mathrm{surf} = 2^{-\alpha}$, with $\alpha=3,4,5$, for the \ac{fe} order $p=1,2,3$ and the geometrical order $q=1,2,3$ of the approximation and parametrizations. In \fig{fig:conv-fe-a} and \fig{fig:conv-fe-b}, we can observe that the $L^2$ and $H^1$ errors converge with the optimal rate, $p+1$ and $p$, resp. This convergence is independent of the geometrical order $q$, as expected, the geometry description does not affect the manufactured solution problem.

In the linear elasticity benchmark of the spherical cavity, we derive the potentials with automatic differentiation in Julia. We compute the tests for the same cell sizes $h = h_\mathrm{surf} = 2^{-\alpha}$ and orders $p=1,2,3$ and $q=1,2,3$ than in the Poisson experiments. We set consider a Young modulus $E=10^5$ and Poisson ratio $\nu = 0.3$. We observe in \fig{fig:conv-fe-c} and \fig{fig:conv-fe-d} the expected convergence rates for the $L^2$ and $H^1$ errors, $\min(p,q)+1$ and $p$ resp. This demonstrates that we require high-order discretizations to accurately solve \acp{pde} with high-order unfitted \ac{fe} methods.

\begin{figure}[http]
  \centering
  \includegraphics[width=0.55\textwidth]{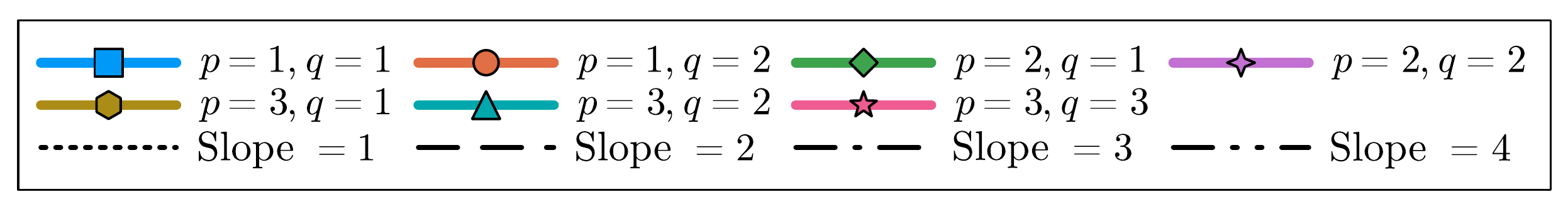}

  \begin{subfigure}{0.24\textwidth}
      \includegraphics[width=\textwidth]{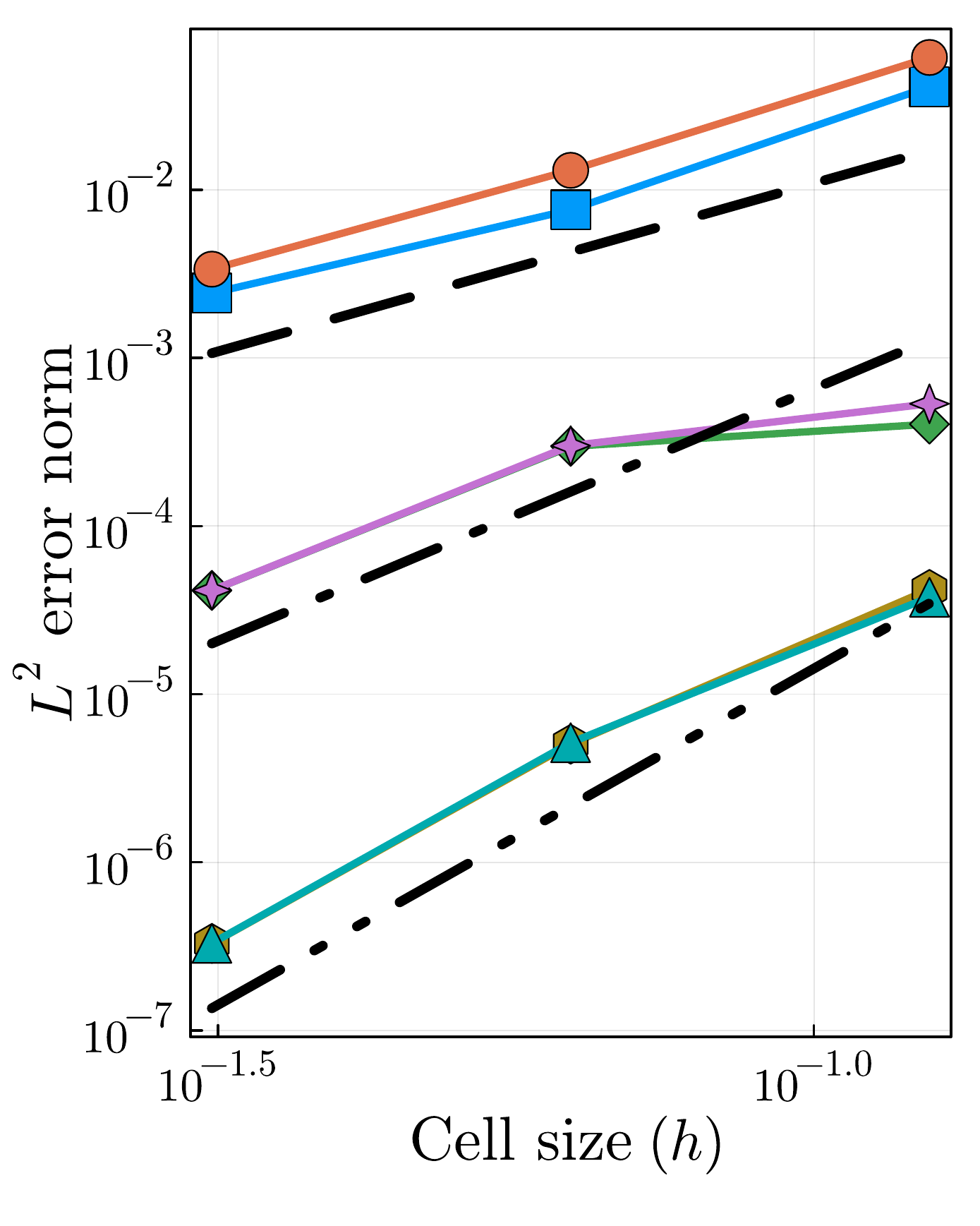}
      \caption{Poisson $L^2$}
      \label{fig:conv-fe-a}
  \end{subfigure}
  \begin{subfigure}{0.24\textwidth}
      \includegraphics[width=\textwidth]{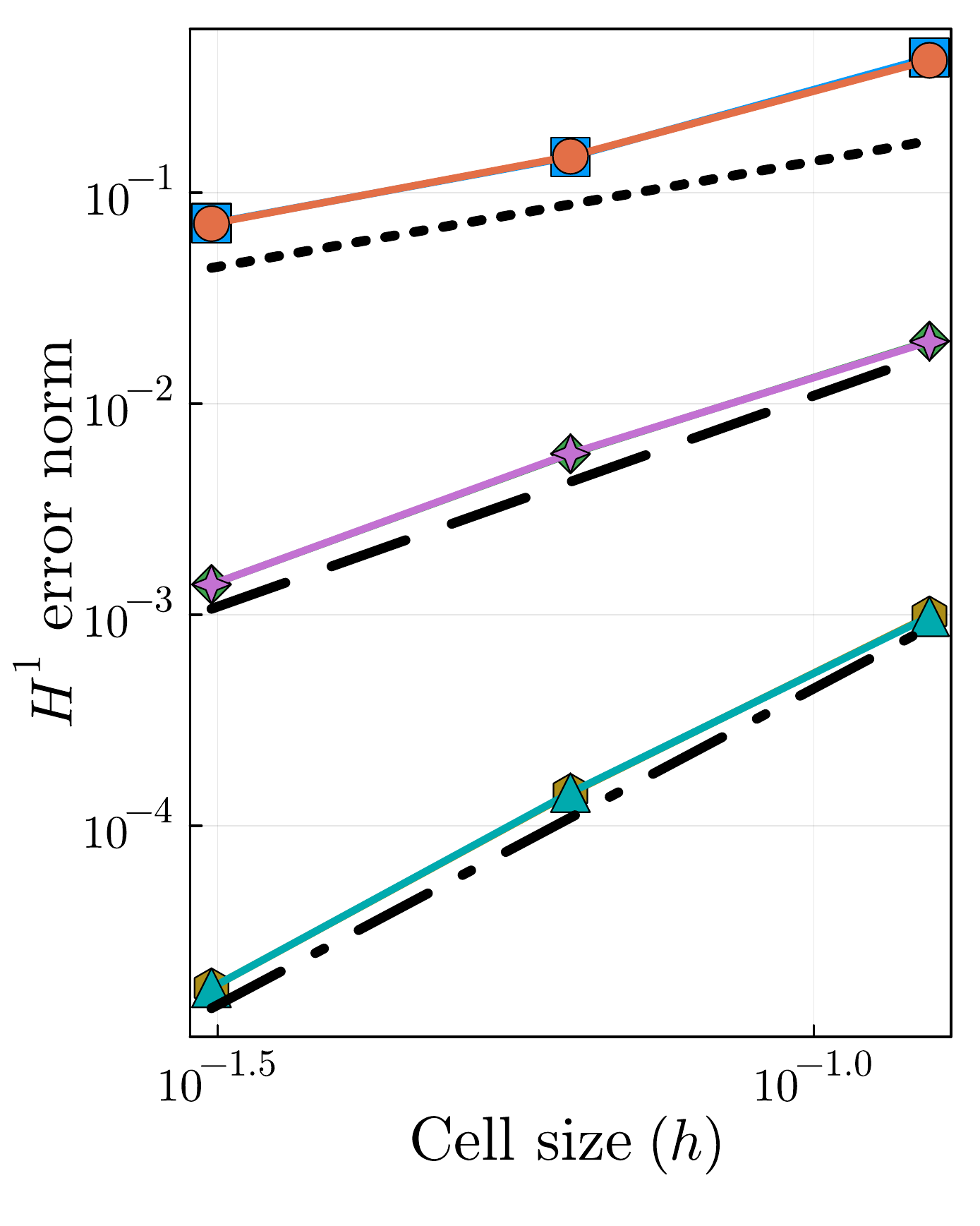}
      \caption{Poisson $H^1$}
      \label{fig:conv-fe-b}
  \end{subfigure}
  \begin{subfigure}{0.24\textwidth}
      \includegraphics[width=\textwidth]{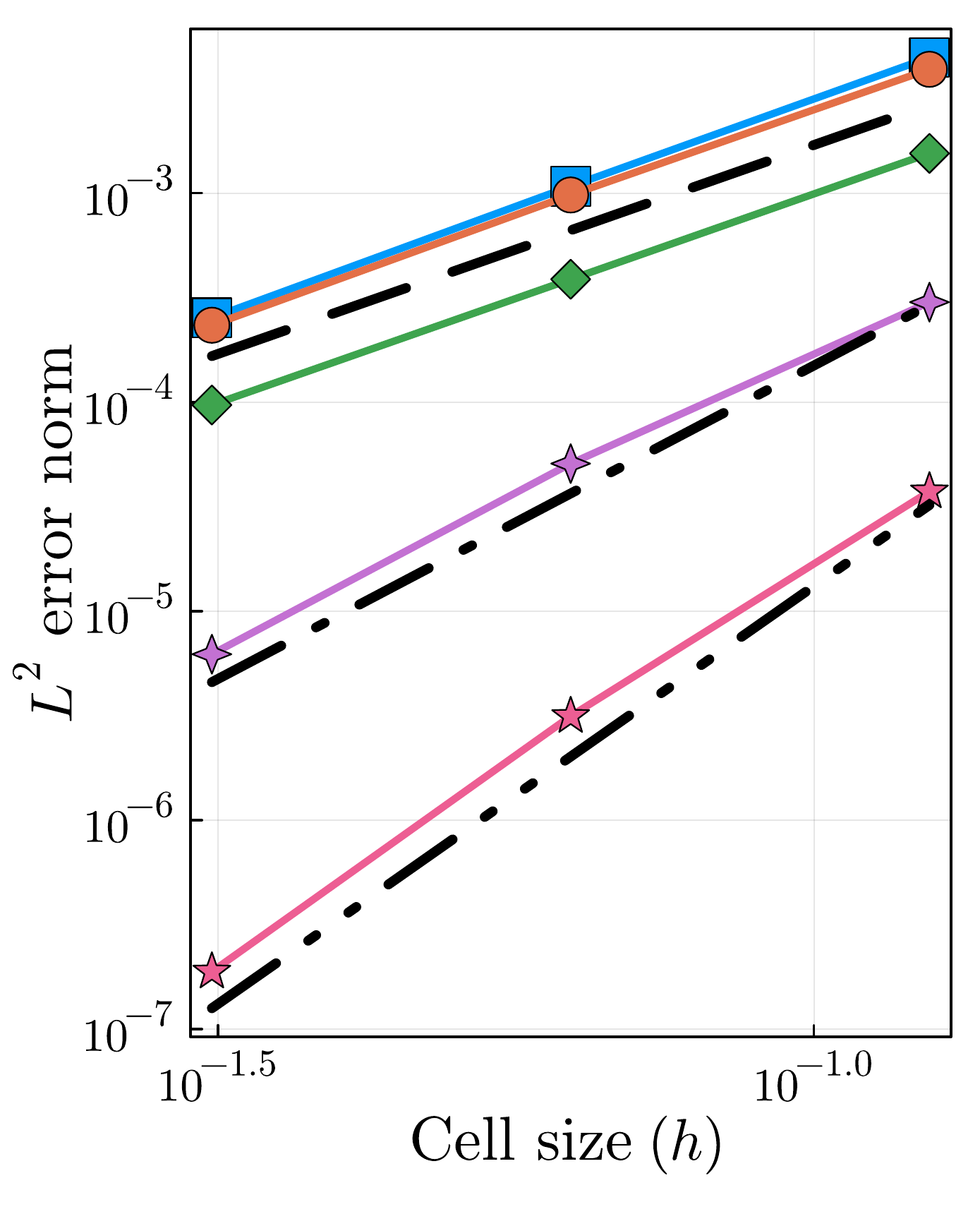}
      \caption{Linear Elasticity $L^2$}
      \label{fig:conv-fe-c}
  \end{subfigure}
  \begin{subfigure}{0.24\textwidth}
      \includegraphics[width=\textwidth]{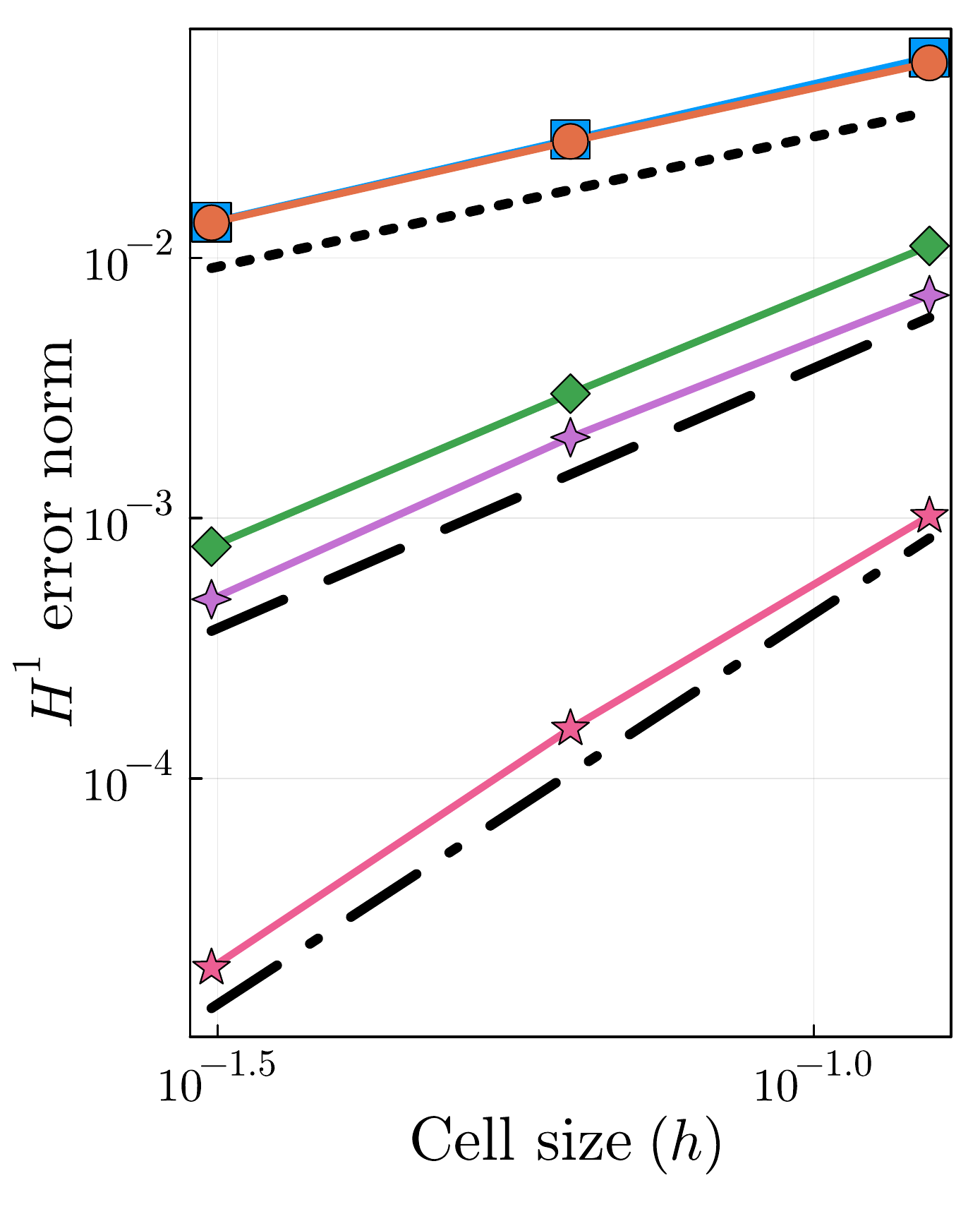}
      \caption{Linear Elasticity $H^1$}
      \label{fig:conv-fe-d}
  \end{subfigure}
  \caption[Convergence tests in \acs{agfem}.]{Convergence tests in \ac{agfem}. Convergence of the manufactured solution with a Poisson equation in a sphere, in (a) and (b). Convergence of the linear elasticity benchmark in a spherical cavity, in (c) and (d). Some combinations of \ac{fe} order $p$ and geometry order $q$ are not shown for the sake of conciseness.}
  \label{fig:conv-fe}
\end{figure}

Finally, we demonstrate the viability of the method in real-world geometries defined by \ac{cad} models in \ac{step} files. These files are extracted from \cite{occt} and \cite{grabcad}, resp. On each \ac{cad} geometry, we generated a high-order surface mesh using \texttt{gmsh}. Then, we converted this mesh into a B\'ezier mesh with a least-squares method.
The indexing of topological entities of the \ac{cad} surface, $\mathcal B^\mathrm{CAD}$, is preserved in the intersected mesh $\mathcal B^\mathrm{cut}$.
Thus, we can impose Dirichlet and Neumann boundary conditions over the entities in $\mathcal B^\mathrm{CAD}$. We consider a heat problem and an elasticity problem in the two different geometries \fig{fig:cad-examples}. The experiments are described in the figure caption.

\begin{figure}
  \centering
  \begin{subfigure}[b]{0.4\textwidth}
      \includegraphics[width=\textwidth]{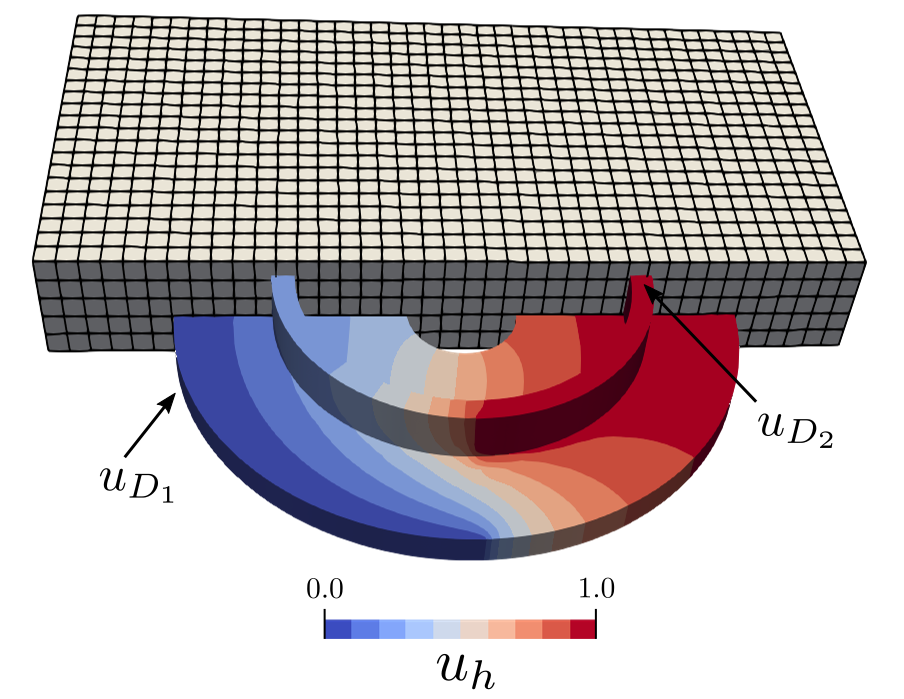}
      \caption{}
      \label{fig:cad-examples-a}
  \end{subfigure}
  \hspace{0.1\textwidth}
  \begin{subfigure}[b]{0.3\textwidth}
      \includegraphics[width=\textwidth]{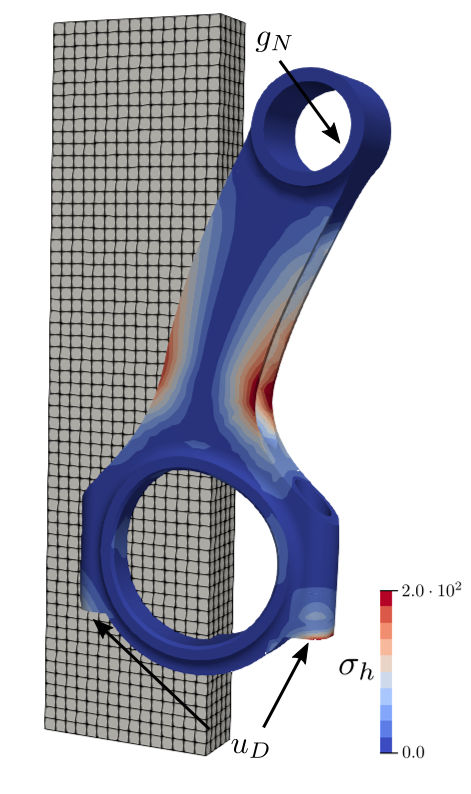}
      \caption{}
      \label{fig:cad-examples-b}
  \end{subfigure}
  \caption[Realistic examples on \acs{cad} geometries.]{Realistic examples on \ac{cad} geometries. In (a) we consider a heat equation with thermal conductivity $k=1.0$ and Dirichlet boundary conditions on two opposite entities ($u_{D_1}=0$ and $u_{D_2}=1$). The background mesh is defined in a \ac{aabb} 20\% larger than the geometry with $40\times 40 \times 5$ elements of order $p=2$. The surface is approximated in 524 quadratic B\'ezier patches. In (b) we consider a linear elasticity problem with Young modulus $E=10^5$ and Poisson ratio $\nu = 0.3$, with Dirichlet and Neumann boundary conditions ($u_D=(0,0,0)$ and $g_N=(1,0,-1)$, resp.). The magnification factor of the deformation is 75. The background mesh is defined in an \ac{aabb} 20\% bigger than the geometry with $30\times 10 \times 60$ elements of order $p=2$. The surface is approximated in 1786 quadratic B\'ezier patches.
  }
  \label{fig:cad-examples}
\end{figure}

\section{Conclusions and future work}\label{sec:conclusions}
In this work, we have designed an automated pipeline for numerically approximating \acp{pde} in complex domains defined by high-order boundary meshes using unfitted \ac{fe} formulations in a structured background mesh. The main challenge of the method lies in the numerical integration of the background cells intersected by the domain boundary. This requires handling the intersection between background cells and complex boundary meshes, including the computation of trimming curves and dealing with nonlinear and nonconvex domains.

We have presented a novel intersection algorithm for general high-order surfaces and polytopal cells, which is accurate and robust. The algorithm is based on mesh partition methods for nonlinear level sets, linear clipping algorithms for general polytopes, multivariate root finding, geometrical least-squares methods and the properties of B\'ezier patches. The result of this algorithm is a set of nonlinear general polyhedra that represent the cut cells of the background mesh. We parametrize the boundary of these polyhedra using sets of B\'ezier patches, taking advantage of concepts like convex hull and kernel point concepts. These B\'ezier patches can be used to integrate the bulk with moment-fitting quadratures.

The implementation, accuracy and robustness of the geometrical algorithm have been tested on high-order geometries defined by analytical methods and \ac{cad} models. In our tests, we observe optimal convergence of the numerical approximations, limited only by rounding errors. Additionally, we have observed the robustness of the method when varying the relative position of the background mesh. Furthermore, we have successfully solved \acp{pde} on geometries defined by nonlinear boundary representations with high-order \ac{fe} methods. We have demonstrated optimal convergence of the solutions in the designed benchmarks.  We have also shown the viability of the method in real-world geometries defined by \ac{cad} models in \ac{step} files. These results position the method as a pioneering computational framework for simulating \acp{pde} on high-order geometries with unfitted \ac{fe} methods. It provides an automatic geometrical and functional discretization that can be especially useful within shape and topology optimization loops, inverse problems with unknown boundaries and interfaces, and transient problems with moving domains.

Future work involves the extension of the method for other background discretizations, e.g., octree meshes for adaptive refinement. We also plan to extend the method to distributed memory \cite{Badia_2022-distributed}, since the method is cell-wise parallel, allowing us to solve larger problems. We can optimize the method by parametrizing only the polytopal edges and using moment-fitting integration on the surfaces \cite{Gunderman_2021}. Further extension of the method involves solving boundary layer problems with a separate discretization, see  \cite{Wei_2021}. Finally, we plan to extend the method to more complex scenarios and practical applications, such as \ac{fsi} and transient problems.

\newcommand{\thethanks}{This research was partially funded by the Australian Government through the Australian Research Council (project numbers DP210103092 and DP220103160). We acknowledge Grant PID2021-123611OB-I00 funded by MCIN/AEI/10.13039/501100011033 and by ERDF ``A way of making Europe''. P.A. Martorell acknowledges the support received from Universitat Politècnica de Catalunya and Santander Bank through an FPI fellowship (FPI-UPC 2019). This work was also supported by computational resources provided by the Australian Government through NCI under the National Computational Merit Allocation Scheme.}

\section*{Acknowledgments}

\thethanks  

\setlength{\bibsep}{0.0ex plus 0.00ex}
\bibliographystyle{myabbrvnat}
\bibliography{refs}

\end{document}